\definecolor{AirForceBlue}{HTML}{5D8AA8}
\definecolor{Alizarin}{HTML}{E32636}
\def\qed{$\quad\Box$}
\def\Nat{\mathbbm{N}}
\def \<{\left\langle}
\def \>{\right\rangle}
\def \({\left(}
\def \){\right)}
\def \op{{\sf op}}
\def \setof#1#2{\setbox1\hbox{$#1$}
                \setbox2\hbox{$#2$}
                \ifdim \ht1 > \ht2
                   \left \{ \left . \, #1 \, \right \vert \, #2 \, \right \}
                \else
                   \left \{ \, #1 \, \left \vert \, #2 \, \right . \right \} 
                \fi}
\newcommand\restr[2]{{
  \left.\kern-\nulldelimiterspace 
  #1 
  \vphantom{\big|} 
  \right|_{#2} 
  }}
\def \halfthinspace{\relax\ifmmode\mskip.5\thinmuskip\relax\else\kern.8888em\fi}
\def \aconv#1{\setbox13\hbox{$#1$}\ifdim\wd13<12pt\breve{#1}\else{\(#1\)}\breve{\ }\fi}
\def \pconv#1{\setbox13\hbox{$#1$}\ifdim\wd13<10pt\stackrel{\smile}{#1}\else{\(#1\)}^{\smile}\fi}
\def \qed{\ifmmode\rule{5pt}{5pt}\else{\nobreak\hfil\penalty50\hskip1em\null\nobreak\hfil\rule{5pt}{5pt}\parfillskip=0pt\finalhyphendemerits=0\endgraf}\fi}
\newcommand\nattrans{{\ \Rightarrow\ }}
\newtheorem{fact}{Fact}[section]
\newcommand\HeteroGenius{{{\ffmfamily Hetero}{\LobsterTwo Genius}}}
\title{Integrating deduction and model finding in a language independent setting}
\author{
         Carlos G. Lopez Pombo\inst{1,2} 
\and Agust\'{\i}n Eloy Martinez Su\~{n}\'{e}\inst{1}
}
\authorrunning{Lopez Pombo et. al.}
\institute{
        CONICET-Universidad de Buenos Aires. Instituto de Investigaci\'{o}n en Ciencias de la Computaci\'{o}n (ICC)
\and Department of Computing, FCEyN, Universidad de Buenos Aires 
\email{\{clpombo,aemartinez\}@dc.uba.ar} 
}
\begin{document}

\maketitle


\begin{abstract}
Software artifacts are ubiquitous in our lives being an essential part of home appliances, cars, cel phones, and even in more critical activities like aeronautics and health sciences. In this context software failures may produce enormous losses, either economical or, in the extreme, in human lives. Software analysis is an area in software engineering concerned on the application of different techniques in order to prove the (relative) absence of errors in software artifacts. 

In many cases these methods of analysis are applied by following certain methodological directives that ensure better results. In a previous work we presented the notion of satisfiability calculus as a model theoretical counterpart of Meseguer's proof calculus, providing a formal foundation for a variety of tools that are based on model construction.

The present work shows how effective satisfiability sub-calculi, a special type of satisfiability calculi, can be combined with proof calculi, in order to provide foundations to certain methodological approaches to software analysis by relating the construction of finite counterexamples and the absence of proofs, in an abstract categorical setting.
\end{abstract}


\section{Introduction and motivation}
\label{intro}
Software artifacts are ubiquitous in our lives being an essential part of home appliances, cars, cel phones, and even in more critical activities like aeronautics and health sciences. In this context software failures may produce enormous losses, either economical or, in the extreme, in human lives. In the latter context, compliance with certain quality standards might be mandatory. Software analysis is an area in software engineering concerned on the application of different techniques in order to prove the (relative) absence of errors in software artifacts. Software analysis, usually (if not always) requires that a formal description of the behaviour of the system, known as its specification, must be available. Then, it is possible to check wether a given property of interest follows from it.\\

Logics have often been used as formal systems suitable for the specification of software artifacts. Moreover, logical specifications, due to its formal nature, have contributed towards the application of sound verification techniques. Several formalisms have been developed to cope with these aspects and most of them are effective in describing some particular characteristics of software systems. Among many examples, one can mention:
\begin{itemize} 
\item linear-time temporal logics, both propositional \cite{pnueli:ieee-focs77,pnueli:tcs-13_1} and first-order \cite{manna95}, for describing properties about single executions, 
\item branching-time temporal logics \cite{benari:acm-sigplan-sigact81}, for describing properties of the class of potential executions from a given state of the system, 
\item $\text{CTL}^*$ \cite{emerson:jacm-33_1} as a compromise sitting in between linear-time and branching-time logics, 
\item the many versions of dynamic logics \cite{fischer:stoc77,harel00}, formalising properties of (imperative) sequential programs, 
\item dynamic linear temporal logic \cite{henriksen:apal-96_1_3}, for embedding the dynamic characterisation of programs within the temporal operators linear temporal logics, 
\item higher-order logics \cite{vanbenthem:hlfcs83} for capturing several notions like higher-order functions, 
\item equational logic \cite{henkin:amm-84_8} for capturing abstract data types \cite{goguen:cgprds75}, and the list continues.
\end{itemize}

The usefulness of logic in software design and development has always been some form of universal, and well-documented, truth in computer science. From Turing's initial observations, already in \cite{turing:plms-s2-42_1}, stating that ``the state of progress of a computation'' is completely determined by a single expression, referred to as ``state formulae''; through Hoare's ``Axiomatic Basis for Computer Programming'' \cite{hoare:cacm-26_1}
, Burstall's ``Proving Properties of Programs by Structural Induction'' \cite{burstall:tcj-12_1}
\ and Floyd's ``Assigning meaning to programs'' \cite{floyd+:amsam66}
, in the 60's; to Parnas' ``A Technique for Software Module Specification with Examples'' \cite{parnas:cacm-15_5}
\ and Dijkstra's ``Guarded Commands, Nondeterminacy and Formal derivation of Programs'' \cite{dijkstra:cacm-18_8}
, in the 70's; just to start with.

Formalising a software artefact by resorting to a logical language requires people involved in the development process to agree on how such descriptions are to be interpreted and understood, so that the systems can be built in accordance to its description and, therefore, have the expected behaviour. Semantics plays a central role in this endeavour as it is a way of substituting the perhaps drier and more esoteric forms provided by syntactic descriptions, by the more intuitive modes of understanding appealing to some naive form of set theory, but the metalogical relation between these two inherent aspects of logical languages, syntax and semantics, might reveal certain discordance with significant impact regarding the appropriateness of the use of one or the other.\\

Back in the second half of the 70's, when formal software specification was booming due to the appearance of Abstract Data Types \cite{liskov:afips72} as the main driver in programming, logic ended up in a privileged position in computer science. Tom Maibaum, and a group of Brazilian logicians leaded by Paulo Veloso, realised that the use abstract semantics for interpreting data types specifications, and thinking about their representations, was not appropriate. In such a setting, either implicit or explicit in many different approaches at that time \cite{hoare:dahl72,hoare:ai+-1_4,hoare:ijcis-4_2,liskov:acmsigplannot-9_4,liskov:acmsigplannot-10_6,liskov:yeh77,guttag:cacm-20_6,guttag:cacm-21_12}, not every model is reachable (i.e., those built from objects that cannot be named by syntactic terms) and, even among those that are, there are some that are not minimal in the sense that they might satisfy properties other than those provable from the axioms. They also observed that such minimality was a direct consequence from requiring models to be initial, as the ADJ group, leaded by Joseph Goguen, was pushing forward for algebraic specification languages \cite{goguen:cgprds75,goguen:yeh78+}.

Maibaum's commitment with this view on semantics in computer science is best understood in his words:
\begin{quotation}
\emph{``It cannot be emphasized enough that it is this concept of initiality which gives rise to the power of the method. Once the requirement of initiality is relaxed, many of the results and proof methods associated with the concept disappear.''}
\end{quotation}

Such a strong standpoint motivated him, and his colleagues, to pursue a research program \cite{carvalho:CS-80-22}\footnote{A shorter version was published later in \cite{carvalho:ics82}.} aiming at producing a theory of initiality for first order predicate logic, thus, reproducing ADJ's results about equational logic, but for a logical language expressive enough to capture the behaviour of complex data structures and, therefore, more suitable to software systems specification. 

In \cite{maibaum:tapsoft97} Maibaum cites \cite{ehrich:jacm-29_1}, a seminal paper by Hans-Dieter Ehrich, as an important source for the motivations behind the foundational notions of the approach, outstandingly summarised by the former as:
\begin{quotation}
\emph{``Why did correct implementations \underline{not} compose correctly in all cases? What were the engineering assumptions on which the technical developments were based? What results were dependent on the formalism used (some variant of equational logic or FOL\footnote{As usual, FOL refers to the fisrt order predicate calculus, or first order logic.} or whatever) and which were 'universal'?''}
\end{quotation}

Such questions reveal that Maibaum was committed with the provision of a tool for practical software development, and not just the capricious definition of a well-founded theory of software specification. Truth be said, while Ehrich and Maibaum had the same motivations, the work of the former was still focussed on the equational axiomatisation of abstract data types, while the latter's program was moving further to a more expressive logical framework, first order predicate calculus. Much of the progress was documented as technical reports of the institutions to which the authors were affiliated at the time \cite{maibaum:TRun81,maibaum:TRun83,maibaum:TRun83+} and later published in \cite{maibaum:fsttcs84,maibaum:tapsoft85}. While the main objectives were preserved, the specification language adopted was $L_{{\omega_1}, \omega}$ \cite[Sec.~11.4]{karp64} (i.e., first-order predicate logic with equality, admitting conjunctions and disjunctions of denumerable sets of formulae), due to the technical need of what the authors call ``namability axioms'' (i.e., formulae of the shape $(\forall x: T)(\bigvee_{n \in \text{Name} (T)} x =_T n )$, for each sort $T$) for ensuring that any model, from the traditional model theoretic perspective, is reachable.\\

Also in the mid 80's, Goguen and Burstall introduced \emph{Institutions} \cite{goguen:cmwlp84} as a categorical framework aiming at providing a formal framework for defining the notion of logical system, from a model-theoretic perspective. 
Thanks to its categorical presentation, Institutions were key in revealing more elaborate views on software specifications (see, for example, those focussed on compositional specifications \cite{tarlecki:ctcp86,sannella:ic-76_2-3,duran:tcs-309_1-3}, software development by refinement of specifications \cite{sannella:ai+-25_3,sannella:calp92} and heterogenous specifications \cite{tarlecki:sadt-rtdts95,arrais:sadt-rtdts95,mossakowski:icalp1996,tarlecki:gabbay00,bernot:amast96,mossakowski:wadt02}, among many others) and elegantly supporting many features pursued in Maibaum's endeavour, like well-behaved composition mechanisms, semantically consistent parameterisation procedures, etc. What for some researchers, like the authors of this work, was interpreted as a source of mathematical beauty, by enabling a flexible way of formalising the notion of satisfaction associated to an arbitrary class of models of interest, for Maibaum was a deal breaker, Goguen's departure from \underline{mandatory} initial semantics ended the tactical alliance forever.\\

The theory of Institutions evolved in many directions being \emph{General logics} \cite{meseguer:lc87} of specific interest to this work. In it, Meseguer, complemented the definition of a logical system from a semantical point of view, provided by institutions, with a categorical characterisation for the notions of entailment system (also referred to as $\pi$-institutions by Fiadeiro and Maibaum \cite{fiadeiro:icdcwtfm93}). Such a syntactical view of a logical system allowed for a more balanced definition of the notion of \emph{Logic}, integrating both formalisations. Thereafter, the author sharply observes that there might be ``a reasonable objection'' regarding the abstractness of the notion of entailment as it hides out the internal structure the entailment relation (i.e., the mere notion of proofs). Such an abstract view of the notion of entailment might be considered more a virtue that a defect, as it enables the possibility of defining many plausible proof systems, for a single entailment relation (for example, the same abstract entailment relation between sets of formulae and formulae can be defined as a Hilbert-style calculus, a natural deduction calculus or by other means). Then, Meseguer completes his view on logics by proposing categorical characterisations of the notions of \emph{Proof calculus}, \emph{Proof subcalculus} (the restriction of a proof calculus obtained from identifying the subcategory of theories and the subset of conclusions that can be drawn from that theory), and \emph{Effective proof subcalculi} (a proof subcalculus whose proving method is guarantied to be effective by requiring sentences, axioms, conclusions and proofs to be structured in a space \cite{schoenfield71}).

The reader may have noticed that Meseguer's viewpoint induced a new, and further, imbalance towards syntax leaving the semantic aspects of a logical system too abstract with respect to the syntactic ones.\\

Model theoretic-based reasoning techniques constitute an important stream of research in logic; in particular, these methods play an important role in automated software validation and verification. The origin of this type of logical reasoning tools can be traced back to the works of Beth \cite{beth59,beth:hintikka69}, Herbrand \cite{herbrand:goldfarb71} and Gentzen \cite{gentzen:szabo69}; Beth's ideas were used by Smullyan to formulate the tableau method for first-order predicate logic \cite{smullyan95}. Herbrandt's and Gentzen's work inspired the formulation of resolution systems presented by Robinson \cite{robinson:jacm-12_1}. Methods like those based on resolution and tableaux are strongly related to the semantics of a logic; therefore, they can often be used as the mechanics behind the construction of models. Such a use is not possible in \textit{pure} deductive methods, such as natural deduction or Hilbert systems, as formalized by Meseguer.

In \cite{lopezpombo:fi-166_4}, Lopez Pombo and some colleagues, forced Maibaum into stretching his views trying to reach common ground. This was done by introducing a framework, analogous to Meseguer's mechanisation of the notion of proof, but serving the purpose of formalising how models are constructed from the sentences they have to satisfy, and how these ``canonically'' defined structures relate to the abstract notion of satisfaction formalised by the institution for which the calculus is being defined. The formalisation of such a constructive view of model theory received the name of \emph{Satisfiability calculus} and was later extended, in analogy to Meseguer's presentation of proof systems, to \emph{Satisfiability subcalculus} and \emph{Effective satisfiability subcalculus}.\\

Formal methods are usually divided into two categories: heavyweight and lightweight. These names refer to the amount of mathematical expertise needed during the process of proving a given property. For many specification languages a lower degree of involvement equates to a higher degree of automation and, consequently, to less certainty regarding the satisfaction of the property. Thus lightweight formal methods, like Alloy \cite{jackson:acmtosem-11_2}, cannot usually be entirely trusted when dealing with models of mission critical systems. An alternative is the use of heavyweight formal methods, as for instance semi-automatic theorem provers \cite{owre:cav96,dowek:coq-users-guide,nipkow02}. Theorem provers also exhibit limitations, for instance, they usually require a high level of expertise and strong mathematical skills, that many times discourages their use. More modern analysis methodologies departed from the idea that either heavyweight or lightweight formal methods are applied, disregarding the relation between these tools. Our claim is that formally enforcing certain methodological directives as part of the process of software analysis produces better results in practice. An example of this is Dynamite \cite{frias:tacas07}. Dynamite is a theorem prover for Alloy in which the critical parts of the proof are assisted by the Alloy Analyzer with the aim of reducing both the workload and the error proneness introduced by the human interaction with the tool. Another use of model theoretic tools in relation to the use of theorem provers is the fact that they provide an efficient method for:
\begin{inparaenum}[a)]
\item gaining confidence in the hypothesis brought into a proof,
\item the elimination of superfluous formulae appearing in a sequent,
\item the removal of minor modelling errors, and even 
\item the suggestion of potential witnesses for existential quantifiers.
\end{inparaenum}

Another example of the synergy between deduction and model finding is Nitpick \cite{blanchette:itp10}, an automatic counterexample finder for Isabelle \cite{nipkow02}, implemented using Kodkod \cite{torlak:tacas07}. The integration between Nitpick and Isabelle provides the command \texttt{nitpick} \cite{blanchette:nitipick-users-guide} that, when it is applied, first translates the current sequent to a Kodkod problem consisting of:
\begin{inparaenum}[1)]
\item the type declarations (including bounds for the amount of atoms to be considered for each one), and
\item a first-order relational logic \cite{jackson:acm-sigsoft00} formula equivalent to the conjunction of the hypothesis and the negation of the thesis;
\end{inparaenum}
and then invoques the latter in order to find a model of such a formula.

Nitpick also provides an automatic mode in which the command is applied every time the user inputs a rule in Isabelle's command prompt.\\

In this work we aim at providing a formal link supporting the methodological interaction between bounded model checkers and theorem provers by connecting the concrete structures over which models and proofs are represented in their formalisation as effective satisfiability sub-calculi and proof calculi, respectively.\\

The paper is organized as follows. In Sec.~\ref{preliminaries} we present the definitions and results we will use throughout this paper. In Sec.~\ref{subsat} we present a categorical formalization of satisfiability sub-calculus and prove relevant results underpinning the definitions leading to the formalisation of a methodology for software analysis. In Sec.~\ref{case-study} we present examples in enough detail to exemplify the ideas. Finally, in Sec.~\ref{conclusions}, we present some conclusions drawn from the contribution and propose a direction in which this research can be continued.


\section{General logics: the category-theoretic formalisation of logical systems}
\label{preliminaries}

From now on, we assume the reader has a nodding acquaintance with basic concepts from category theory \cite{maclane71,fiadeiro05}. 
We mainly follow the notation introduced in \cite{meseguer:lc87}.

An \emph{entailment system} is defined by identifying a family of \emph{syntactic} consequence relations indexed by the elements of a category assumed to be the category of signatures. As usual, entailment relations are required to satisfy reflexivity, monotonicity\footnote{The theory of institutions and general logics focus on monotonic logical systems. The interested reader is referred to \cite{cassano:wadt2012} for a presentation of entailment systems for default logic, a well-known non-monotonic logical system introduced by Reiter in \cite{reiter:ai-13_1-2} aiming at the formalisation of defeasible logical reasoning.} and transitivity.

\begin{definition}[Entailment system \cite{meseguer:lc87}]
\label{entailment-system}
A structure $\< \mathsf{Sign}, \mathbf{Sen}, \{\vdash^{\Sigma}\}_{\Sigma \in |\mathsf{Sign}|} \>$ is said to be an \emph{entailment system} if it satisfies the following conditions:
\begin{itemize}
\item $\mathsf{Sign}$ is a category of signatures,
\item $\mathbf{Sen}: \mathsf{Sign} \to \mathsf{Set}$ is a functor. Let $\Sigma \in |\mathsf{Sign}|$; then $\mathbf{Sen}(\Sigma)$ returns the set of $\Sigma$-sentences, and
\item $\{\vdash^{\Sigma}\}_{\Sigma \in |\mathsf{Sign}|}$, where $\vdash^{\Sigma} \subseteq 2^{\mathbf{Sen}(\Sigma)} \times \mathbf{Sen}(\Sigma)$, is a family of binary relations such that for any $\Sigma, \Sigma' \in |\mathsf{Sign}|$, $\{\phi\} \cup \{\phi_i\}_{i \in  \mathcal{I}} \subseteq \mathbf{Sen}(\Sigma)$, $\Gamma, \Gamma' \subseteq \mathbf{Sen}(\Sigma)$, the following conditions are satisfied:
\begin{itemize}
\item reflexivity: $\{\phi\} \vdash^\Sigma \phi$,
\item monotonicity: if $\Gamma \vdash^\Sigma \phi$ and $\Gamma \subseteq \Gamma'$, then $\Gamma' \vdash^\Sigma \phi$,
\item transitivity: if $\Gamma \vdash^\Sigma \phi_i$ for all $i \in \mathcal{I}$ and $\{\phi_i\}_{i \in \mathcal{I}} \vdash^\Sigma \phi$, then $\Gamma \vdash^\Sigma \phi$, and
\item $\vdash$-translation: if $\Gamma \vdash^\Sigma \phi$, then for any morphism $\sigma: \Sigma \to \Sigma'$ in $\mathsf{Sign}$, $\mathbf{Sen}(\sigma)(\Gamma) \vdash^{\Sigma'} \mathbf{Sen}(\sigma)(\phi)$. 
\end{itemize}
\end{itemize}
\end{definition}

\begin{definition}[Theory presentations \cite{meseguer:lc87}]
Let $\< \mathsf{Sign}, \mathbf{Sen}, \{\vdash^{\Sigma}\}_{\Sigma \in |\mathsf{Sign}|} \>$ be an entailment system. Its category of theory presentations is $\mathsf{Th} = \< \mathcal{O}, \mathcal{A} \>$ such that:
\begin{itemize} 
\item $\mathcal{O} = \setof{\< \Sigma, \Gamma \>}{\Sigma \in
  |\mathsf{Sign}|\ \mbox{and}\ \Gamma \subseteq \mathbf{Sen}(\Sigma)}$, and
\item $\mathcal{A} = \setof{\sigma: \<\Sigma, \Gamma\> \to \<\Sigma', \Gamma'\>}{
\begin{array}{l}
\< \Sigma, \Gamma \>, \< \Sigma', \Gamma' \> \in \mathcal{O}, \sigma: \Sigma \to \Sigma' \in ||\mathsf{Sign}||, \\
\mbox{ for all } \gamma \in \Gamma, \Gamma' \vdash^{\Sigma'} \mathbf{Sen}(\sigma)(\gamma).
\end{array}
}$
\end{itemize}
In addition, if a morphism $\sigma: \<\Sigma, \Gamma\> \to \<\Sigma', \Gamma'\>$ satisfies $\mathbf{Sen}(\sigma)(\Gamma) \subseteq \Gamma'$, it is called \emph{axiom preserving}. By retaining those morphisms of $\mathsf{Th}$ that are axiom preserving, we obtain the subcategory $\mathsf{Th}_0$.\end{definition}

Note that, in the previous definition, the objects of $\mathsf{Th}$ are determined by a signature and a set of axioms, which are not necessarily closed under entailment. A theory is obtained from a theory presentations by requiring:
\begin{inparaenum}[1)]
\item the former to satisfy all the axioms appearing in the latter, and 
\item to be closed under entailment.
\end{inparaenum}

\begin{definition}[Closure under entailment]
Let $\< \mathsf{Sign}, \mathbf{Sen}, \{\vdash^{\Sigma}\}_{\Sigma \in |\mathsf{Sign}|} \>$ be an entailment system and $\<\Sigma, \Gamma\> \in |{\sf Th}|$. We define $^\bullet: 2^{\mathbf{Sen}(\Sigma)} \to 2^{\mathbf{Sen}(\Sigma)}$ as follows: $\Gamma^\bullet = \setof{\gamma}{\Gamma \vdash^\Sigma \gamma}$. This function is extended to elements of ${\sf Th}$, by defining it as follows: $\<\Sigma, \Gamma\>^\bullet = \<\Sigma, \Gamma^\bullet\>$. $\Gamma^\bullet$ is called the theory generated by $\Gamma$.
\end{definition}

Roughly speaking, an institution is an abstract formalisation of the model theory of a logic in such a way that the existing relations between signatures, sentences over a signature and models for a signature are made explicit. These aspects are reflected by introducing the category of signatures, defining two functors capturing the sets of sentences and the classes of models, the first one going from this category to the category $\mathsf{Set}$ and the second one going from this category to $\mathsf{Cat}$, and by requiring the satisfiability relation to remain invariant under signature change.

\begin{definition}[Institution \cite{goguen:cmwlp84}]
\label{institution}
A structure $\< \mathsf{Sign}, \mathbf{Sen}, \mathbf{Mod}, \{\models^{\Sigma}\}_{\Sigma \in |\mathsf{Sign}|} \>$ is said to be an \emph{institution} if it satisfies the following conditions:
\begin{itemize}
\item $\mathsf{Sign}$ is a category of signatures,
\item $\mathbf{Sen}: \mathsf{Sign} \to \mathsf{Set}$ is a functor. Let $\Sigma \in |\mathsf{Sign}|$, then $\mathbf{Sen}(\Sigma)$ is its corresponding set of $\Sigma$-sentences, 
\item $\mathbf{Mod}: \mathsf{Sign}^\op \to \mathsf{Cat}$ is a functor. Let $\Sigma \in |\mathsf{Sign}|$, then $\mathbf{Mod}(\Sigma)$ is its corresponding category of $\Sigma$-models,
\item $\{\models^{\Sigma}\}_{\Sigma \in |\mathsf{Sign}|}$ is a family of binary relations $\models^{\Sigma} \subseteq |\mathbf{Mod}(\Sigma)| \times \mathbf{Sen}(\Sigma)$, for all $\Sigma \in |\mathsf{Sign}|$
\end{itemize}
\noindent such that for all $\sigma: \Sigma \to \Sigma' \in ||\mathsf{Sign}||$, $\phi \in \mathbf{Sen}(\Sigma)$ and $\mathcal{M}' \in |\mathbf{Mod}(\Sigma')|$, the following $\models$-invariance condition holds:
$$\mathcal{M}' \models^{\Sigma'} \mathbf{Sen}(\sigma)(\phi)\quad \mbox{ iff }\quad \mathbf{Mod}(\sigma^\op)(\mathcal{M}') \models^{\Sigma} \phi\ .$$
\end{definition}
Intuitively, the last condition of the previous definition says that \emph{truth is invariant with respect to notation change}. Given $\langle \Sigma, \Gamma \rangle \in |\mathsf{Th}|$, $\mathbf{Mod}: \mathsf{Th}^\op \to \mathsf{Cat}$ is the extension of the functor $\mathbf{Mod}: \mathsf{Sign}^\op \to \mathsf{Cat}$ such that $\mathbf{Mod} (\langle \Sigma, \Gamma \rangle)$ denotes the full subcategory of $\mathbf{Mod} (\Sigma)$ determined by those models $\mathcal{M} \in |\mathbf{Mod} (\Sigma)|$ such that $\mathcal{M} \models^\Sigma \gamma$, for all $\gamma \in \Gamma$. The relation $\models^\Sigma$ between sets of formulae and formulae is defined in the following way: given $\Sigma \in |\mathsf{Sign}|$, $\Gamma \subseteq \mathbf{Sen} (\Sigma)$ and $\alpha \in \mathbf{Sen} (\Sigma)$, 
$$
\Gamma \models^\Sigma \alpha\quad \mbox{if and only if} \quad \mathcal{M} \models^\Sigma \alpha \mbox{, for all $\mathcal{M} \in |\mathbf{Mod} (\langle \Sigma, \Gamma \rangle)|$.}
$$
Now, from Defs.~\ref{entailment-system}~and~\ref{institution}, it is possible to give a definition of \emph{logic} by relating both its model-theoretic and proof-theoretic characterisations. In this respect, coherence between the semantic and syntactic entailment relations is required, reflecting the standard concepts of soundness and completeness of logical systems.

\begin{definition}[Logic \cite{meseguer:lc87}]\ \\
\label{logic}
A structure $\< \mathsf{Sign}, \mathbf{Sen}, \mathbf{Mod}, \{\vdash^{\Sigma}\}_{\Sigma \in |\mathsf{Sign}|}, \{\models^{\Sigma}\}_{\Sigma \in |\mathsf{Sign}|} \>$ is said to be a \emph{logic} if it satisfies the following conditions:
\begin{itemize}
\item $\< \mathsf{Sign}, \mathbf{Sen}, \{\vdash^{\Sigma}\}_{\Sigma \in |\mathsf{Sign}|} \>$ is an entailment system, 
\item $\< \mathsf{Sign}, \mathbf{Sen}, \mathbf{Mod}, \{\models^{\Sigma}\}_{\Sigma \in |\mathsf{Sign}|} \>$ is an institution, and 
\item the following \emph{soundness} condition is satisfied: for any $\Sigma \in |\mathsf{Sign}|$, $\phi \in \mathbf{Sen}(\Sigma)$, $\Gamma \subseteq \mathbf{Sen}(\Sigma)$: 
$$\Gamma \vdash^\Sigma \phi\quad \mbox{implies}\quad \Gamma \models^\Sigma \phi\ .$$
\end{itemize}
In addition, a logic is \emph{complete} if the following condition is satisfied: for any $\Sigma \in |\mathsf{Sign}|$, $\phi \in \mathbf{Sen}(\Sigma)$, $\Gamma \subseteq \mathbf{Sen}(\Sigma)$: 
$$
\Gamma \models^\Sigma \phi\quad\mbox{implies}\quad\Gamma \vdash^\Sigma \phi.
$$
\end{definition}

Definition~\ref{entailment-system} associates deductive relations to signatures. As already discussed, it is important to analyse how these relations are obtained. The next definition formalises the notion of proof calculus by associating a proof-theoretic structure to the deductive relations introduced by the definitions of entailment systems. 

\begin{definition}[Proof calculus \cite{meseguer:lc87}]\ \\
\label{proof-calculus}
A structure $\< \mathsf{Sign}, \mathbf{Sen}, \{\vdash^{\Sigma}\}_{\Sigma \in |\mathsf{Sign}|}, \mathbf{P}, \mathbf{Pr}, \pi \>$ is said to be a \emph{proof calculus} if it satisfies the following conditions:
\begin{itemize}
\item $\< \mathsf{Sign}, \mathbf{Sen}, \{\vdash^{\Sigma}\}_{\Sigma \in |\mathsf{Sign}|} \>$ is an entailment system,
\item $\mathbf{P}: \mathsf{Th}_0 \to \mathsf{Struct}_{PC}$ is a functor. Let $T \in |\mathsf{Th}_0|$, then $\mathbf{P}(T) \in |\mathsf{Struct}_{PC}|$ is the proof-theoretical structure of $T$\footnote{The reader should note that $\mathsf{Struct}_{PC}$ strongly depends on the structure needed to formalise the concept of proof for a specific proof calculus. For example, while in \cite[Ex.~11]{meseguer:lc87} the formalisation of natural deduction for first-order logic requires the use of multicategories \cite[Def.~10]{meseguer:lc87}, in \cite[\S3]{lopezpombo:phdthesis} the formalisation of the proof calculus for $\omega$-closure fork algebras with urelements \cite[Def.~7]{frias:relmics01+} (a variant of fork algebras \cite{haeberer:ifiptc291,frias02} with a reflexive and transitive closure operator) requires the use of strict monoidal categories \cite[Ch.~VII, \S1]{maclane71} whose monoid of objects is given by the (not necessarily finite) subsets  of the corresponding class of equations.},
\item $\mathbf{Pr}: \mathsf{Struct}_{PC} \to \mathsf{Set}$ is a functor. Let $T \in |\mathsf{Th}_0|$, then $\mathbf{Pr}(\mathbf{P}(T))$ is the set of proofs of $T$; the composite functor $\mathbf{Pr} \circ \mathbf{P}: \mathsf{Th}_0 \to \mathsf{Set}$ will be denoted by $\mathbf{proofs}$, and
\item $\pi: \mathbf{proofs} \nattrans \mathbf{Sen}$ is a natural transformation such that for each $T = \<\Sigma, \Gamma\> \in |\mathsf{Th}_0|$ the image of $\pi_T: \mathbf{proofs}(T) \to \mathbf{Sen}(T)$ is the set $\Gamma^\bullet$. The map $\pi_T$ is called the \emph{projection from proofs to theorems} for the theory $T$.
\end{itemize}
\end{definition}


The use of the category $\mathsf{Th}_0$ for indexing proof structures responds to a technical need. Whenever we relate two theories with a morphism, say $\sigma:\<\Sigma, \Gamma\> \to \<\Sigma', \Gamma'\>$, the previous definition imposes a need for extending that relation to proofs of the form $\pi:\emptyset \to \alpha \in |\mathbf{proofs}(\<\Sigma, \Gamma\>)|$. If theories are taken from $\mathsf{Th}$, we know that there exists $\pi':\emptyset \to \mathbf{Sen}(\sigma)(\alpha) \in |\mathbf{proofs'}(\<\Sigma', \Gamma'\>)|$, but there is no obvious way to obtain it from $\pi$. If theories are taken from $\mathsf{Th}_0$, this problem no longer exists as the proof $\pi'$ is obtained by applying exactly  the same proof rules, obtaining the same proof structure (recall  the inclusion $\mathbf{Sen}(\sigma)(\Gamma) \subseteq \Gamma'$ in the definition of $\mathsf{Th}_0$).

Meseguer's categorical formulation of a proof calculus is a means of providing structure for the abstract relation of entailment defined in an entailment system. The next definition provides a categorical formalisation of a satisfiability calculus as it was presented in \cite{lopezpombo+:wadt2012-f}. A satisfiability calculus is the formal characterization of a method for constructing models of a given theory, thus providing the semantic counterpart of that proof calculus. In the same way Meseguer proceeded in order to define a proof calculus, the definition of a satisfiability calculus relies on the possibility of assigning, to each theory presentation, a structure capable of expressing how its models are constructed. 

\begin{definition}[Satisfiability calculus \cite{lopezpombo:fi-166_4}]\ \\
\label{sat-calculus}
A structure $\< \mathsf{Sign}, \mathbf{Sen}, \mathbf{Mod}, \{\models^{\Sigma}\}_{\Sigma \in |\mathsf{Sign}|}, \mathbf{M}, \mathbf{Mods}, \mu \>$ is said to be a \emph{satisfiability calculus} if it satisfies the following conditions:
\begin{itemize}
\item $\< \mathsf{Sign}, \mathbf{Sen},  \mathbf{Mod},
  \{\models^{\Sigma}\}_{\Sigma \in |\mathsf{Sign}|} \>$ is an institution,
\item $\mathbf{M}: {\mathsf{Th}}^\op \to \mathsf{Struct}_{SC}$ is a functor.
 Let $T \in |\mathsf{Th}^\op|$, then $\mathbf{M}(T) \in
  |\mathsf{Struct}_{SC}|$ is the model structure of $T$,
\item $\mathbf{Mods}: \mathsf{Struct}_{SC} \to \mathsf{Cat}$ is a functor.
  Let $T \in |\mathsf{Th}^\op|$, then $\mathbf{Mods}(\mathbf{M}(T))$
  is the set of canonical models of $T$; the composite functor
  $\mathbf{Mods} \circ \mathbf{M}: {\mathsf{Th}}^\op \to \mathsf{Cat}$ will
  be denoted by $\mathbf{models}$, and
\item $\mu: \mathbf{models} \nattrans \mathscr{P} \circ \mathbf{Mod}$ is a natural transformation such that, for each $T = \<\Sigma, \Gamma\> \in |{\mathsf{Th}}^\op|$, the image of $\mu_T: \mathbf{models}(T) \to \mathscr{P} \circ \mathbf{Mod}(T)$ is the subcategory of $\mathbf{Mod}(T)$ corresponding to each canonical representation of a class of models in $|\mathbf{models}(T)|$. The map $\mu_T$ is called the \emph{projection of the category of models} of the theory $T$.
\end{itemize}
\end{definition}

The intuition behind the previous definition is the following. For any theory $T$, the functor $\mathbf{M}$ assigns a structure in the category $\mathsf{Struct}_{SC}$ representing the class of models for $T$. Notice that the target of functor $\mathbf{M}$, when applied to a theory $T$ is not necessarily a model but a structure representing the category of models of $T$. The reader may have already noticed that the functor $\mathbf{M}$ is contravariant with respect to category $\mathsf{Th}$, reflecting the existing opposite direction of morphisms between categories of models with respect to those between signatures found in institutions (see Def.~\ref{institution}). The functor $\mathbf{Mods}$ maps the structure representing the class of models of a theory $T = \<\Sigma, \Gamma\>$ to a category whose objects are canonical representations of models of $\Gamma$. Finally, for any theory $T$, the functor $\mu_T$ relates each of these structures to the corresponding subcategory of $\mathbf{Mod}(T)$. 

\begin{example}[Tableau Method for First-Order Predicate Logic]
\label{ex:sat-calculus}
\sloppy
Let us start by presenting the well-known tableaux method for first-order logic \cite{smullyan95}. Let us denote by $\mathbb{I}_{FOL} = \< \mathsf{Sign}, \mathbf{Sen}, \mathbf{Mod}, \{\models^{\Sigma}\}_{\Sigma \in |\mathsf{Sign}|} \>$ the institution of first-order predicate logic \cite[Exs.~2~and~3]{goguen:cmwlp84}. Let $\Sigma \in |\mathsf{Sign}|$ and $S \subseteq \mathbf{Sen} (\Sigma)$; then a \emph{tableau} for $S$ is a tree such that:
\begin{enumerate}
\item the nodes are labeled with sets of formulae (over $\Sigma$) and the root node is labeled with $S$,
\item if $u$ and $v$ are two connected nodes in the tree ($u$ being an ancestor of $v$), then the label of $v$ is obtained from the label of $u$ by applying one of the following rules:
\begin{center}
\begin{tabular}{lr}
\AXC{$X \cup \{A \land B\}$}
\LL{}\RL{[$\land$]}
\UnaryInfC{$X \cup \{A \land B, A, B\}$}
\DP
&
\AXC{$X \cup \{A \lor B\}$}
\LL{}\RL{[$\lor$]}
\UnaryInfC{$X \cup \{A \lor B, A\} \qquad X \cup \{A \lor B, B\}$}
\DP
\end{tabular}
\end{center}
\begin{center}
\begin{tabular}{lcr}
\AXC{$X \cup \{\neg\neg A\}$}
\LL{}\RL{[$\neg_1$]}
\UnaryInfC{$X \cup \{\neg\neg A, A\}$}
\DP
&
\AXC{$X \cup \{A\}$}
\LL{}\RL{[$\neg_2$]}
\UnaryInfC{$X \cup \{A, \neg\neg A\}$}
\DP
&
\AXC{$X \cup \{A, \neg A\}$}
\LL{}\RL{[$\mathbf{\mathit{false}}$]}
\UnaryInfC{$\mathbf{Sen} (\Sigma)$}
\DP
\end{tabular}
\end{center}
\begin{center}
\begin{tabular}{lr}
\AXC{$X \cup \{\neg(A \land B)\}$}
\LL{}\RL{[$DM_1$]}
\UnaryInfC{$X \cup \{\neg(A \land B), \neg A \lor \neg B\}$}
\DP
&
\AXC{$X \cup \{\neg(A \lor B)\}$}
\LL{}\RL{[$DM_2$]}
\UnaryInfC{$X \cup \{\neg(A \lor B), \neg A \land \neg B\}$}
\DP
\end{tabular}
\end{center}
\begin{center}
\AXC{$X \cup \{(\forall x)P(x)\}$}
\LL{[$t$ is a ground term.]}\RL{[$\forall$]}
\UnaryInfC{$X \cup \{(\forall x)P(x), P(t)\}$}
\DP
\end{center}
\begin{center}
\AXC{$X \cup \{(\exists x)P(x)\}$}
\LL{[$c$ is a new constant.]}\RL{[$\exists$]}
\UnaryInfC{$X \cup \{(\exists x)P(x), P(c)\}$}
\DP
\end{center}
\end{enumerate}
A sequence of nodes $s_0 \xrightarrow{\tau_0^{\alpha_0}} s_1 \xrightarrow{\tau_1^{\alpha_1}} s_2 \xrightarrow{\tau_2^{\alpha_2}} \ldots$ is a \emph{branch} if: 
\begin{inparaenum}[\itshape a\upshape)] 
\item $s_0$ is the root node of the tree, and 
\item for all $i \leq \omega$, $s_i \rightarrow s_{i+1}$ occurs in the tree, $\tau_i^{\alpha_i}$ is an instance of one of the rules presented above, and $\alpha_i$ are the formulae of $s_i$ to which the rule was applied. 
\end{inparaenum}
A branch $s_0 \xrightarrow{\tau_0^{\alpha_0}} s_1 \xrightarrow{\tau_1^{\alpha_1}} s_2 \xrightarrow{\tau_2^{\alpha_2}} \ldots$ in a tableau is \emph{saturated} if there exists $i \leq \omega$ such that $s_i = s_{i+1}$. A branch $s_0 \xrightarrow{\tau_0^{\alpha_0}} s_1 \xrightarrow{\tau_1^{\alpha_1}} s_2 \xrightarrow{\tau_2^{\alpha_2}} \ldots$ in a tableau is \emph{closed} if there exists $i \leq \omega$ and $\alpha \in \mathbf{Sen}(\Sigma)$ such that $\{\alpha, \neg \alpha\} \subseteq s_i$.

Let  $s_0 \xrightarrow{\tau_0^{\alpha_0}} s_1 \xrightarrow{\tau_1^{\alpha_1}} s_2 \xrightarrow{\tau_2^{\alpha_2}} \ldots$ be a branch in a tableau. Examining the rules presented above, it is straightforward to see that every $s_i$, with $i < \omega$, is a set of formulae. In each step, we have either the application of a rule decomposing one formula of the set into its constituent parts with respect to its major connective, while preserving satisfiability, or the application of the rule $[\mathit{false}]$ denoting the fact that the corresponding set of formulae is unsatisfiable. Thus, the limit set of the branch is a set of formulae containing subformulae (and ``\emph{instances}'' in the case of quantifiers) of the original set of formulae for which the tableau was built. As a result of this, every open branch represents, by means of the set of formulae occurring in the leaf, the class of models satisfying them.

In order to define the tableau method as a satisfiability calculus, we have to provide formal definitions for the categories supporting tableaux structures, for the functors $\mathbf{M}$ and $\mathbf{Mods}$ and for the natural transformation $\mu$.

First, given $\Sigma \in |\mathsf{Sign}|$ and $\Gamma \subseteq \mathbf{Sen}(\Sigma)$, we define $\mathit{Str}^{\Sigma, \Gamma} = \<\mathcal{O}, \mathcal{A}\>$ such that $\mathcal{O} = 2^{\mathbf{Sen}(\Sigma)}$ and $\mathcal{A} = \{\alpha:\{A_i\}_{i \in \mathcal{I}} \to \{B_j\}_{j \in \mathcal{J}}\ |\ \alpha=\{\alpha_j\}_{j \in \mathcal{J}}\}$, where for all $j \in \mathcal{J}$, $\alpha_j$ is a branch in a tableau for $\Gamma \cup \{B_j\}$ with leaves $\Delta \subseteq \{A_i\}_{i \in \mathcal{I}}$ $\mathit{Str}^{\Sigma, \Gamma}$ can be proved to be a category (see Lemma~\ref{lemma:str-category} for a detailed proof). 
Then, we can prove that $\<\mathit{Str}^{\Sigma, \Gamma}, \cup, \emptyset\>$, where $\cup: \mathit{Str}^{\Sigma, \Gamma} \times \mathit{Str}^{\Sigma, \Gamma} \to \mathit{Str}^{\Sigma, \Gamma}$ is the typical bi-functor on sets and functions, and $\emptyset$ is the neutral element for $\cup$, is a strict monoidal category (see Lemma~\ref{lemma:strict-monoidal-cat} for details). 

Second, using the previous definition we can introduce the class of legal tableaux (denoted by $\mathsf{Struct}_{SC}$), together with a class of arrows, and prove it is a category. $\mathsf{Struct}_{SC}$ is defined as $\<\mathcal{O}, \mathcal{A}\>^\op$ where $\mathcal{O} = \{\<\mathit{Str}^{\Sigma, \Gamma}, \cup, \emptyset\> \ |\ \Sigma \in |\mathsf{Sign}| \land \Gamma \subseteq \mathbf{Sen}(\Sigma)\}$, and $\mathcal{A} = \{\widehat{\sigma}: \<\mathit{Str}^{\Sigma, \Gamma}, \cup, \emptyset\> \to \<\mathit{Str}^{\Sigma', \Gamma'}, \cup, \emptyset\>\ |\ \sigma:\<\Sigma, \Gamma\> \to \<\Sigma', \Gamma'\> \in ||\mathsf{Th}||\}$, the homomorphic extensions of the morphisms in $||\mathsf{Th}||$ to sets of formulae preserving the application of rules (i.e., the structure of the tableaux) (see Lemma~\ref{lemma:struct-cat} for a detailed proof).

Third, the functor $\mathbf{M}$ must be understood as the relation between a theory in $|\mathsf{Th}|$ and its corresponding category of structures representing legal tableaux. So, for every theory $\<\Sigma, \Gamma\>$, $\mathbf{M}$ associates to it the strict monoidal category \cite[Sec.~1,~pp.~157]{maclane71} $\<\mathit{Str}^{\Sigma, \Gamma}, \cup, \emptyset\>$, and for every theory morphism $\sigma: \<\Sigma, \Gamma\> \to \<\Sigma', \Gamma'\> \in ||\mathsf{Th}||$ observed in the opposite direction, $\mathbf{M}$ associates to it a morphism $\widehat{\sigma}: \mathit{Str}^{\Sigma, \Gamma} \to \mathit{Str}^{\Sigma', \Gamma'}$ which is the homomorphic extension of $\sigma$ to the structure of the tableaux, also observed in the opposite direction. Then, $\mathbf{M}: \mathsf{Th}^\op \to \mathsf{Struct}_{SC}$ is defined as $\mathbf{M} (\<\Sigma, \Gamma\>) = \<\mathit{Str}^{\Sigma, \Gamma}, \cup, \emptyset\>$ and for any $\sigma: \<\Sigma, \Gamma\> \to \<\Sigma', \Gamma'\> \in ||\mathsf{Th}||$, $\mathbf{M} (\sigma^\op) = \widehat{\sigma}^\op$, where $\widehat{\sigma}: \<\mathit{Str}^{\Sigma, \Gamma}, \cup, \emptyset\> \to \<\mathit{Str}^{\Sigma', \Gamma'}, \cup, \emptyset\>$ is the homomorphic extension of $\sigma$ to the structures in $\<\mathit{Str}^{\Sigma, \Gamma}, \cup, \emptyset\>$ (see Lemma~\ref{lemma:m-functor} for a detailed proof).

Fourth, the functor $\mathbf{Mods}$ provides the means for obtaining theory presentations characterising classes of models from structures of the form $\mathit{Str}^{\Sigma, \Gamma}$ by identifying the sets of formulae in the leaves of the open branches of a tableau. To this effect, $\mathbf{Mods}: {\mathsf{Struct}_{SC}} \to \mathsf{Cat}$ is defined on objects as $\mathbf{Mods} (\<\mathit{Str}^{\Sigma, \Gamma}, \cup, \emptyset\>) = \<\mathcal{O}, \mathcal{A}\>$ where: 
\begin{itemize}
\item $\mathcal{O} = \bigcup_{\< \Sigma, \Delta \> \in |\mathsf{Th}|} \setof{\<\Sigma, \widetilde{\Delta}\> \in |\mathsf{Th}|}{
\begin{array}{l}
(\exists \alpha: \Delta \to \emptyset \in ||\mathit{Str}^{\Sigma, \Gamma}||)\\
(\forall \alpha': \Delta' \to \Delta \in ||\mathit{Str}^{\Sigma, \Gamma}||)\\
\qquad (\Delta' = \Delta) \land (\widetilde{\Delta} \to \emptyset \in \alpha) \land\\
\qquad \neg(\exists \varphi)(\{\neg\varphi, \varphi\} \subseteq \widetilde{\Delta}) \}
\end{array}}$,\\
\noindent (i.e., 
\begin{inparaenum}[1.] 
\item the existentially quantified $\alpha: \Delta \to \emptyset \in ||\mathit{Str}^{\Sigma, \Gamma}||$ is a tableau, 
\item stating that for any $\alpha': \Delta' \to \Delta \in ||\mathit{Str}^{\Sigma, \Gamma}||$, the equation $\Delta' = \Delta$ holds, expresses that $\alpha$ is saturated, and
\item requesting that $\widetilde{\Delta}$, where $\widetilde{\Delta} \to \emptyset$ is a branch of $\alpha$, is not inconsistent, expresses that it is the set of formulae at the leaf of an open branch\end{inparaenum}), and
\item $\mathcal{A} = \{ id_{T}: T \to T\ |\ T \in \mathcal{O} \}$, (i.e., only the identities);
\end{itemize}
\noindent and on morphisms as for all $\sigma: \<\Sigma, \Gamma\> \to \<\Sigma', \Gamma'\> \in ||\mathsf{Th}||$, $\mathbf{Mods}(\widehat{\sigma}^\op)(\<\Sigma, \delta\>) = \<\Sigma', \mathbf{Sen}(\sigma)(\delta)\>$. This is proved to be a functor (see Lemma~\ref{lemma:mods-functor} for a detailed proof).

Finally, $\mu$ has to relate the structures representing saturated tableaux with the model satisfying the set of formulae denoted by the source of the morphism; then we can  define $\mu_{\<\Sigma, \Delta\>}: \mathbf{models} (\<\Sigma, \Delta\>) \to \mathscr{P} \circ \mathbf{Mod} (\<\Sigma, \Delta\>)$ as for all $\<\Sigma, \delta\> \in |\mathbf{models}(\<\Sigma, \Delta\>)|$, $\mu_{\<\Sigma, \Delta\>} (\<\Sigma, \delta\>) = \mathbf{Mod} (\<\Sigma, \delta\>)$ and prove it to be a natural transformation (see Lemma~\ref{lemma:mu-nattrans} for details).

Therefore, we conclude that $\< \mathsf{Sign}, \mathbf{Sen}, \mathbf{Mod}, \{\models^{\Sigma}\}_{\Sigma \in |\mathsf{Sign}|}, \mathbf{M}, \mathbf{Mods}, \mu \>$ is a satisfiability calculus.

\begin{definition}
Let $\Sigma \in |\mathsf{Sign}|$ and $\Gamma \subseteq \mathbf{Sen}(\Sigma)$, then we define $\mathit{Str}^{\Sigma, \Gamma} = \<\mathcal{O}, \mathcal{A}\>$ such that $\mathcal{O} = 2^{\mathbf{Sen}(\Sigma)}$ and $\mathcal{A} = \{\alpha:\{A_i\}_{i \in \mathcal{I}} \to \{B_j\}_{j \in \mathcal{J}}\ |\ \alpha=\{\alpha_j\}_{j \in \mathcal{J}}\}$, where for all $j \in \mathcal{J}$, $\alpha_j$ is a branch in a tableau for $\Gamma \cup \{B_j\}$ with leaves $\Delta \subseteq \{A_i\}_{i \in \mathcal{I}}$. It should be noted that $\Delta \models^\Sigma \Gamma \cup \{B_j\}$. 
\end{definition}

\begin{lemma}
Let $\Sigma \in |\mathsf{Sign}|$ and $\Gamma \subseteq \mathbf{Sen}(\Sigma)$; then $\<\mathit{Str}^{\Sigma, \Gamma}, \cup, \emptyset\>$, where $\cup: \mathit{Str}^{\Sigma, \Gamma} \times \mathit{Str}^{\Sigma, \Gamma} \to \mathit{Str}^{\Sigma, \Gamma}$ is the typical bi-functor on sets and functions, and $\emptyset$ is the neutral element for $\cup$, is a strict monoidal category. 
\end{lemma}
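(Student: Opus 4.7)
The plan is to verify directly the axioms of a strict monoidal category \cite[Ch.~VII, \S1]{maclane71}: a category equipped with a bifunctor $\otimes$ and a distinguished object $e$ such that the associativity equation $(A \otimes B) \otimes C = A \otimes (B \otimes C)$ and the unit equations $e \otimes A = A = A \otimes e$ hold strictly, both on objects and on morphisms. Assuming Lemma~\ref{lemma:str-category}, which already guarantees that $\mathit{Str}^{\Sigma,\Gamma}$ is a category, the remaining work splits into three pieces: (i) $\cup$ is a bifunctor on $\mathit{Str}^{\Sigma,\Gamma}$; (ii) the associativity and unit equations hold on objects; (iii) the same equations hold on morphisms.

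On objects there is nothing to prove beyond the standard set-theoretic identities $(X \cup Y) \cup Z = X \cup (Y \cup Z)$ and $\emptyset \cup X = X = X \cup \emptyset$, since objects of $\mathit{Str}^{\Sigma,\Gamma}$ are just subsets of $\mathbf{Sen}(\Sigma)$.

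The main step is the bifunctoriality of $\cup$ on morphisms, which is where I expect the only real effort. A morphism $\alpha: \{A_i\}_{i \in \mathcal{I}} \to \{B_j\}_{j \in \mathcal{J}}$ is an indexed family $\{\alpha_j\}_{j \in \mathcal{J}}$ of tableau branches. Given a second morphism $\beta: \{A'_i\}_{i \in \mathcal{I}'} \to \{B'_k\}_{k \in \mathcal{K}}$ with $\beta = \{\beta_k\}_{k \in \mathcal{K}}$, I define $\alpha \cup \beta$ as the family indexed by the disjoint union $\mathcal{J} \uplus \mathcal{K}$ obtained by juxtaposing the two families. Each $\alpha_j$ is a branch for $\Gamma \cup \{B_j\}$ with leaves in $\{A_i\}_{i \in \mathcal{I}}$ and each $\beta_k$ is a branch for $\Gamma \cup \{B'_k\}$ with leaves in $\{A'_i\}_{i \in \mathcal{I}'}$, so the union is a well-defined morphism from $\{A_i\}_{i \in \mathcal{I}} \cup \{A'_i\}_{i \in \mathcal{I}'}$ to $\{B_j\}_{j \in \mathcal{J}} \cup \{B'_k\}_{k \in \mathcal{K}}$. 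Preservation of identities and compositions is then verified componentwise: no branch in the union ``mixes'' components from $\alpha$ and $\beta$, so composing in $\mathit{Str}^{\Sigma,\Gamma}$ before or after taking the union produces the same indexed family of branches.

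Finally, the monoidal equations on morphisms reduce once more to disjoint unions of indexed families: $(\alpha \cup \beta) \cup \gamma$ and $\alpha \cup (\beta \cup \gamma)$ yield the same family up to the canonical re-association of the index set, and $\emptyset \cup \alpha = \alpha = \alpha \cup \emptyset$ because the empty family contributes no branches. The principal obstacle I foresee is not any deep coherence argument but rather bookkeeping: fixing a precise convention for indexing sets under disjoint union so that the monoidal equations hold strictly (as equalities) and not merely up to a canonical bijection of index sets. Once this convention is in place, all three clauses of the strict monoidal structure follow immediately.
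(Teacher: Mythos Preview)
Your proposal is correct and follows essentially the same route as the paper's proof: define $\cup$ on objects as set union and on morphisms as the union (juxtaposition) of the two indexed families of branches, then observe that bifunctoriality holds componentwise and that the associativity and unit laws reduce to the corresponding set-theoretic identities, with the coherence isomorphisms being identities. The paper is terser and simply takes the set union $\{\alpha_i\}_{i\in\mathcal{I}}\cup\{\beta_j\}_{j\in\mathcal{J}}$ rather than an explicitly disjoint-union-indexed family, so your remark about the indexing convention is a reasonable extra bit of care, but the underlying argument is the same.
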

\begin{proof}
The proof can be found in Lemma~\ref{lemma:strict-monoidal-cat}.
\end{proof}

Using this definition we can introduce the category of legal tableaux, denoted by $\mathsf{Struct}_{SC}$.

\begin{definition}
\label{def-struct}
$\mathsf{Struct}_{SC}$ is defined as $\<\mathcal{O}, \mathcal{A}\>$ where $\mathcal{O} = \{\mathit{Str}^{\Sigma, \Gamma}\ |\ \Sigma \in |\mathsf{Sign}| \land \Gamma \subseteq \mathbf{Sen}(\Sigma)\}$, and $\mathcal{A} = \{\widehat{\sigma}: \mathit{Str}^{\Sigma, \Gamma} \to \mathit{Str}^{\Sigma', \Gamma'}\ |\ \sigma:\<\Sigma, \Gamma\> \to \<\Sigma', \Gamma'\> \in ||\mathsf{Th}||\}$, the homomorphic extension of the morphisms in $||\mathsf{Th}||$.
\end{definition}

\begin{lemma}
$\mathsf{Struct}_{SC}$ is a category.
\end{lemma}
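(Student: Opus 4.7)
The plan is to exhibit $\mathsf{Struct}_{SC}$ as a category by transferring the categorical structure of $\mathsf{Th}$ through the ``hat'' construction $\sigma \mapsto \widehat{\sigma}$. First I would be precise about what $\widehat{\sigma}$ means for a theory morphism $\sigma:\langle\Sigma,\Gamma\rangle \to \langle\Sigma',\Gamma'\rangle$: on objects of $\mathit{Str}^{\Sigma,\Gamma}$ (which are sets in $2^{\mathbf{Sen}(\Sigma)}$), $\widehat{\sigma}$ acts by pointwise application of $\mathbf{Sen}(\sigma)$; on morphisms (which are families of tableau branches), $\widehat{\sigma}$ translates each rule application node-by-node along $\mathbf{Sen}(\sigma)$. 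Before anything else I would check that this is well-defined, i.e.\ that translating a legal tableau for $\Gamma\cup\{B_j\}$ along $\sigma$ yields a legal tableau for $\Gamma'\cup\{\mathbf{Sen}(\sigma)(B_j)\}$; the fact that $\mathbf{Sen}(\sigma)(\Gamma)\subseteq\Gamma'$ (because $\sigma \in \|\mathsf{Th}_0\|$ for axiom-preserving morphisms, or more generally since $\Gamma' \vdash^{\Sigma'} \mathbf{Sen}(\sigma)(\gamma)$ for all $\gamma \in \Gamma$) together with the syntactic shape of each rule $[\land],[\lor],[\neg_i],[DM_i],[\forall],[\exists],[\mathit{false}]$ yields this, since each rule is preserved by sentence translation.

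Having established that $\widehat{\sigma}$ is a well-defined morphism in $\mathsf{Struct}_{SC}$, I would define composition by $\widehat{\tau}\circ\widehat{\sigma}=\widehat{\tau\circ\sigma}$ and the identity on $\mathit{Str}^{\Sigma,\Gamma}$ by $\widehat{\mathrm{id}_{\langle\Sigma,\Gamma\rangle}}$. The category axioms would then be verified by reducing them to the corresponding statements in $\mathsf{Th}$: associativity of composition in $\mathsf{Struct}_{SC}$ is
\[
\widehat{\rho}\circ(\widehat{\tau}\circ\widehat{\sigma}) \;=\; \widehat{\rho\circ(\tau\circ\sigma)} \;=\; \widehat{(\rho\circ\tau)\circ\sigma} \;=\; (\widehat{\rho}\circ\widehat{\tau})\circ\widehat{\sigma},
\]
where the middle equality is associativity in $\mathsf{Th}$; the identity laws follow analogously. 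The key lemma behind all of this is functoriality of the assignment $\sigma \mapsto \widehat{\sigma}$, namely that $\widehat{\tau\circ\sigma}=\widehat{\tau}\circ\widehat{\sigma}$ and $\widehat{\mathrm{id}}=\mathrm{id}$, which reduces to the fact that $\mathbf{Sen}:\mathsf{Sign}\to\mathsf{Set}$ is a functor (so pointwise translation composes correctly on the objects of $\mathit{Str}^{\Sigma,\Gamma}$) plus a straightforward induction on the structure of a branch showing that translating a branch rule-by-rule commutes with composition.

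I expect the main obstacle to be the quantifier rule $[\exists]$, whose side condition requires the witness $c$ to be a \emph{new} constant. When translating along $\sigma$, one must argue that a constant new for the source tableau produces an appropriate (possibly renamed) new constant in the target; the cleanest way to handle this is to take $\widehat{\sigma}$ to act on an $\exists$-step by choosing a fresh constant on the target side and extending $\mathbf{Sen}(\sigma)$ accordingly on that branch, and then to observe that this choice is irrelevant up to the equality of morphisms in $\mathsf{Struct}_{SC}$ (since branches are identified by the sets labelling their nodes). All remaining rules are purely propositional in their side conditions and their translation is immediate from functoriality of $\mathbf{Sen}$, so once the quantifier case is handled, the verification of the category axioms reduces to bookkeeping inherited from $\mathsf{Th}$.
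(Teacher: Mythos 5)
Your proposal follows essentially the same route as the paper's proof: morphisms of $\mathsf{Struct}_{SC}$ are the homomorphic extensions $\widehat{\sigma}$ of theory morphisms, composition and identities are transferred from $\mathsf{Th}$ (so that $\widehat{\tau\circ\sigma}=\widehat{\tau}\circ\widehat{\sigma}$ and $\widehat{\mathrm{id}}=\mathrm{id}$), and associativity and the unit laws are inherited from $\mathsf{Th}$ together with the functoriality of $\mathbf{Sen}$. You are in fact somewhat more careful than the paper, which silently assumes well-definedness of the translated tableaux and does not discuss the fresh-constant side condition of the $[\exists]$ rule that you handle explicitly.
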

\begin{proof}
The proof can be found in Lemma~\ref{lemma:struct-cat}.
\end{proof}

The functor $\mathbf{M}$ must be understood as the relation between a theory in $|\mathsf{Th}|$ and its category of structures representing legal tableaux. So, for every theory $T$, $\mathbf{M}$ associates the strict monoidal category \cite{maclane71} $\<\mathit{Str}^{\Sigma, \Gamma}, \cup, \emptyset\>$, and for every theory morphism $\sigma: \<\Sigma, \Gamma\> \to \<\Sigma', \Gamma'\>$, $\mathbf{M}$ associates a morphism $\widehat{\sigma}: \mathit{Str}^{\Sigma, \Gamma} \to \mathit{Str}^{\Sigma', \Gamma'}$ which is the homomorphic extension of $\sigma$ to the structure of the tableaux.

\begin{definition}
\label{def-m}
$\mathbf{M}: \mathsf{Th}^\op \to \mathsf{Struct}_{SC}$ is defined as $\mathbf{M} (\<\Sigma, \Gamma\>) = \<\mathit{Str}^{\Sigma, \Gamma}, \cup, \emptyset\>$ and $\mathbf{M} (\sigma: \<\Sigma, \Gamma\> \to \<\Sigma', \Gamma'\>) = \widehat{\sigma}: \<\mathit{Str}^{\Sigma, \Gamma}, \cup, \emptyset\> \to \<\mathit{Str}^{\Sigma', \Gamma'}, \cup, \emptyset\>$, the homomorphic extension of $\sigma$ to the structures in $\<\mathit{Str}^{\Sigma, \Gamma}, \cup, \emptyset\>$. 
\end{definition}

\begin{lemma}
\label{m-functor}
$\mathbf{M}$ is a functor.
\end{lemma}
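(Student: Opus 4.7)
The plan is to verify the two functoriality axioms for $\mathbf{M}$, treating well-definedness of $\widehat\sigma$ on objects and morphisms as already established by Lemma~\ref{lemma:struct-cat}. Concretely, what remains is to check that (i) $\mathbf{M}$ sends identities of $\mathsf{Th}^\op$ to identities of $\mathsf{Struct}_{SC}$, and (ii) $\mathbf{M}$ preserves composition contravariantly, i.e., $\mathbf{M}((\tau\circ\sigma)^\op)=\mathbf{M}(\sigma^\op)\circ\mathbf{M}(\tau^\op)$.

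For identities, take $T=\langle\Sigma,\Gamma\rangle\in|\mathsf{Th}|$ and consider $id_T:T\to T$. By definition, $\widehat{id_T}$ is the homomorphic extension of $id_T$ to the tableau structure $\langle\mathit{Str}^{\Sigma,\Gamma},\cup,\emptyset\rangle$. Since $\mathbf{Sen}$ is a functor, $\mathbf{Sen}(id_\Sigma)=id_{\mathbf{Sen}(\Sigma)}$, so $\widehat{id_T}$ acts as the identity on every object $\{A_i\}_{i\in\mathcal{I}}\in|\mathit{Str}^{\Sigma,\Gamma}|$ and, because the tableau rules are rewritten term-by-term through $\mathbf{Sen}(id_\Sigma)$, the identity on each branch $\alpha_j$ as well. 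Hence $\widehat{id_T}=id_{\mathbf{M}(T)}$, yielding $\mathbf{M}(id_T^\op)=id_{\mathbf{M}(T)}$.

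For composition, let $\sigma:\langle\Sigma,\Gamma\rangle\to\langle\Sigma',\Gamma'\rangle$ and $\tau:\langle\Sigma',\Gamma'\rangle\to\langle\Sigma'',\Gamma''\rangle$ be in $||\mathsf{Th}||$. By functoriality of $\mathbf{Sen}$, $\mathbf{Sen}(\tau\circ\sigma)=\mathbf{Sen}(\tau)\circ\mathbf{Sen}(\sigma)$ on $\Sigma$-sentences. The homomorphic extension is defined pointwise on sets of formulae (applying $\mathbf{Sen}$) and rule-by-rule on tableau branches (relabelling each node by the translated set and each rule instance by its translated instance), so the pointwise equality lifts to $\widehat{\tau\circ\sigma}=\widehat\tau\circ\widehat\sigma$. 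Applying $(-)^\op$ reverses the order, giving $\mathbf{M}((\tau\circ\sigma)^\op)=(\widehat\tau\circ\widehat\sigma)^\op=\widehat\sigma^\op\circ\widehat\tau^\op=\mathbf{M}(\sigma^\op)\circ\mathbf{M}(\tau^\op)$, as required.

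The main obstacle I foresee is not the bookkeeping of the two axioms themselves but justifying, for the morphism part, that translating each node of a legal tableau under $\mathbf{Sen}(\sigma)$ yields another legal tableau — in particular, verifying that the $[\forall]$ and $[\exists]$ rules remain sound under translation (the image of a ground term stays a ground term, and a fresh constant in $\Sigma$ can be assumed, without loss of generality via renaming in the extended target signature, to map to a constant fresh for $\Gamma'$). Once this is granted, together with $\mathbf{Sen}(\sigma)(\Gamma)\subseteq\Gamma'$ if working in $\mathsf{Th}_0$ (or the weaker $\Gamma'\vdash^{\Sigma'}\mathbf{Sen}(\sigma)(\gamma)$ in $\mathsf{Th}$, which suffices to preserve branch validity up to admissible rule applications), the identity and composition checks above become routine, and the stated functoriality of $\mathbf{M}$ follows.
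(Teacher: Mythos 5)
Your proposal is correct and follows essentially the same route as the paper's proof: both verify preservation of identities and of composition by appealing to the fact that $\widehat{\sigma}$ is the homomorphic extension of $\sigma$ (so identities extend to identities and composites of homomorphic extensions are the extension of the composite), with the $(-)^\op$ reversal handled exactly as you do. Your closing remark about legality of translated tableaux (the $[\forall]$/$[\exists]$ rules and freshness of constants) is a point the paper folds into the definition of $\widehat{\sigma}$ as a rule-preserving extension in Def.~\ref{def-struct} and into Lemma~\ref{lemma:struct-cat}, so deferring it there, as you do, is consistent with the paper.
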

\begin{proof}
The proof can be found in Lemma~\ref{lemma:m-functor}.
\end{proof}

In order to define $\mathbf{Mods}$, we need the following auxiliary definition, which provides the maximal consistent sets of formulae for a given set of formulae.

\begin{definition}
Let  $\Sigma \in |\mathsf{Sign}|$, $\Delta \subseteq \mathbf{Sen}(\Sigma)$, then $Cn (\Delta)$ is defined as follows:
$$Cn (\Delta) = \{ \overline{\Delta}\ |\ (\forall \alpha \in \mathbf{Sen}(\Sigma))(\alpha \not\in \overline{\Delta} \implies |\mathbf{Mod}(\{\alpha\} \cup \overline{\Delta})| = \emptyset)\}$$
\end{definition}

Given $\<\Sigma, \Gamma\> \in |\mathsf{Th}|$, the functor $\mathbf{Mods}$ provides the means for obtaining the category containing all the possible maximal consistent extensions of those structures in $\mathit{Str}^{\Sigma, \Gamma}$ representing the closure of the branches in saturated tableaux. 

\begin{definition}
\label{def-mods}
$\mathbf{Mods}: {\mathsf{Struct}_{SC}} \to \mathsf{Cat}$ is defined as 
\begin{itemize}
\item $\mathbf{Mods} (\<\mathit{Str}^{\Sigma, \Gamma}, \cup, \emptyset\>) = \bigcup\{Cn(\widetilde{\Delta})\ |\ (\exists \alpha: \Delta \to \emptyset \in |\mathit{Str}^{\Sigma, \Gamma}|) (\widetilde{\Delta} \to \emptyset \in \alpha \land (\forall \alpha': \Delta' \to \Delta \in ||\mathit{Str}^{\Sigma, \Gamma}||)(\Delta' = \Delta))\}$, and
\item forall $\sigma: \<\Sigma, \Gamma\> \to \<\Sigma', \Gamma'\>$ such that $\mathbf{M}(\sigma) = \widehat{\sigma}$, $\mathbf{Mods}(\widehat{\sigma})(\Delta) = \mathbf{Sen}(\sigma)(\Delta)$. 
\end{itemize}
\end{definition}

\begin{lemma}
\label{mods-functor}
$\mathbf{Mods}$ is a functor.
\end{lemma}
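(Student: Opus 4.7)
The plan is to verify the two functor axioms for $\mathbf{Mods}$, but before that to show that the action on morphisms defined by $\mathbf{Mods}(\widehat{\sigma})(\Delta) = \mathbf{Sen}(\sigma)(\Delta)$ actually yields a functor between the target discrete categories. Since, as indicated earlier in the example, $\mathbf{Mods}(\mathit{Str}^{\Sigma, \Gamma})$ carries only identity arrows, the functor condition on each $\mathbf{Mods}(\widehat{\sigma})$ reduces to showing that: (i) objects are sent to objects (well-definedness), and (ii) identities are sent to identities (immediate, since source and target are discrete).

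For well-definedness, the key is to show that signature translation preserves the conditions singling out elements of $\mathbf{Mods}(\mathit{Str}^{\Sigma,\Gamma})$. First I would observe that, because $\mathbf{Sen}$ is a functor and the tableau rules are schematic in the underlying formulae, the homomorphic extension $\widehat{\sigma}$ sends each rule instance to a rule instance over $\Sigma'$, so every branch $\alpha: \Delta \to \emptyset$ in $\mathit{Str}^{\Sigma, \Gamma}$ is mapped to a branch $\widehat{\sigma}(\alpha): \mathbf{Sen}(\sigma)(\Delta) \to \emptyset$ in $\mathit{Str}^{\Sigma', \Gamma'}$; here one uses that $\sigma:\<\Sigma, \Gamma\> \to \<\Sigma',\Gamma'\>$ is a theory morphism, so each $\mathbf{Sen}(\sigma)(\gamma)$ for $\gamma \in \Gamma$ is derivable from $\Gamma'$ and thus the resulting tableau may be interpreted as one built from $\Gamma' \cup \{\mathbf{Sen}(\sigma)(B_j)\}$. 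Saturation (the fixed-point condition $\Delta' = \Delta$) is then preserved because no new rule becomes applicable after translation — any rule that could fire on $\mathbf{Sen}(\sigma)(\Delta)$ would correspond via $\mathbf{Sen}(\sigma)$ to one firing on $\Delta$. Openness of the branch (non-inconsistency of the leaf) together with the passage to the maximal consistent extension $Cn(\widetilde{\Delta})$ ensures that $\mathbf{Sen}(\sigma)(\widetilde{\Delta})$, or rather a maximal consistent extension thereof, lies in $\mathbf{Mods}(\mathit{Str}^{\Sigma',\Gamma'})$, appealing to the $\models$-invariance condition of the institution to transport satisfiability.

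With well-definedness in hand, functoriality of $\mathbf{Mods}$ itself follows directly from functoriality of $\mathbf{Sen}$: for identities, $\mathbf{Mods}(\widehat{id_{\<\Sigma,\Gamma\>}})(\Delta) = \mathbf{Sen}(id_\Sigma)(\Delta) = \Delta$, so $\mathbf{Mods}(id) = id$; for composition, given $\sigma: \<\Sigma,\Gamma\> \to \<\Sigma',\Gamma'\>$ and $\sigma': \<\Sigma',\Gamma'\> \to \<\Sigma'',\Gamma''\>$, one has $\mathbf{Mods}(\widehat{\sigma' \circ \sigma})(\Delta) = \mathbf{Sen}(\sigma' \circ \sigma)(\Delta) = \mathbf{Sen}(\sigma')(\mathbf{Sen}(\sigma)(\Delta)) = \mathbf{Mods}(\widehat{\sigma'}) \circ \mathbf{Mods}(\widehat{\sigma})(\Delta)$, using that the assignment $\sigma \mapsto \widehat{\sigma}$ is itself homomorphic (as verified in Lemma~\ref{lemma:m-functor}).

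I expect the main obstacle to be the well-definedness step, and specifically the verification that translation along $\sigma$ truly preserves saturation and openness of branches in the target structure. The subtlety is that $\mathit{Str}^{\Sigma',\Gamma'}$ is parameterised by $\Gamma'$, so one must argue, either by appealing to the axiom-preserving character available when restricting to $\mathsf{Th}_0$, or by invoking the entailment closure condition on theory morphisms, that translated branches genuinely inhabit the target structure. The remaining categorical bookkeeping is routine by functoriality of $\mathbf{Sen}$.
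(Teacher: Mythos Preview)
Your proposal is correct and follows essentially the same approach as the paper's own proof: both identify that, because the target categories are discrete, the crux is well-definedness of the morphism action, and both argue this by observing that $\widehat{\sigma}$ is the homomorphic extension of $\mathbf{Sen}(\sigma)$ to the tableau structure, so a leaf of an open saturated branch is sent to a leaf of a branch in the target structure. Your treatment is in fact more thorough than the paper's: you explicitly verify preservation of identities and composition via functoriality of $\mathbf{Sen}$, and you flag preservation of saturation and openness as the delicate points, whereas the paper's proof leaves these implicit and stops at the observation that the translated leaf is again a leaf of some tableau in $\langle \mathit{Str}^{\Sigma', \Gamma'}, \cup, \emptyset\rangle$.
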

\begin{proof}
The proof can be found in Lemma~\ref{lemma:mods-functor}.
\end{proof}

Finally, the natural transformation $\mu$ relates the structures representing saturated tableaux with the model satisfying the set of formulae denoted by the source of the morphism.

\begin{definition}
\label{def-mu}
Let $T \in |\mathsf{Th}|$, then we define $\mu_T: \mathbf{models} (T) \to \mathsf{Cat}$ as $\mu_T (\Delta) = \mathbf{Mod}(\<\mathbf{Sign}(\Delta), \Delta\>)$\footnote{$\mathbf{Sign}:\mathsf{Th} \to \mathsf{Sign}$ is the forgetful functor that projects the signature from theory presentations.}.
\end{definition}

\begin{fact}
Let $\<\Sigma, \Gamma\> \in |\mathsf{Th}|$, then we define $\mu_{\Sigma}: \mathbf{models}^\op (\<\Sigma, \Gamma\>) \to \mathbf{Mod} (\<\Sigma, \Gamma\>)$ defined as in Def.~\ref{def-mu}. Let $\Sigma \in |\mathsf{Sign}|$ and $\Gamma \subseteq \mathbf{Sen} (\Sigma)$, then $\mu_{\<\Sigma, \Gamma\>}$ is a functor.
\end{fact}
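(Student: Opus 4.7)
The plan is to verify the two functoriality axioms for $\mu_{\<\Sigma, \Gamma\>}$ in turn: action on objects, and action on morphisms together with preservation of identities and composition. I would split the argument into two main steps.

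First, on objects, let $\widetilde{\Delta} \in |\mathbf{models}(\<\Sigma, \Gamma\>)|$. By Def.~\ref{def-mu}, $\mu_{\<\Sigma, \Gamma\>}(\widetilde{\Delta}) = \mathbf{Mod}(\<\Sigma, \widetilde{\Delta}\>)$. By Def.~\ref{def-mods}, $\widetilde{\Delta}$ lies in $Cn(\Delta)$ for some leaf set $\Delta$ of a saturated open branch of a tableau starting from $\Gamma$. I would invoke the soundness of each individual tableau rule (each rule, as listed in the example, preserves satisfiability, which is exactly the observation ``$\Delta \models^\Sigma \Gamma \cup \{B_j\}$'' recorded in the first definition above) to conclude $\Delta \models^\Sigma \Gamma$; then argue that $\Delta \subseteq \widetilde{\Delta}$, so every model of $\widetilde{\Delta}$ is a model of $\Delta$ and therefore of $\Gamma$. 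This exhibits $\mathbf{Mod}(\<\Sigma, \widetilde{\Delta}\>)$ as a full subcategory of $\mathbf{Mod}(\<\Sigma, \Gamma\>)$, matching the codomain $\mathscr{P} \circ \mathbf{Mod}(\<\Sigma, \Gamma\>)$ prescribed by Def.~\ref{sat-calculus}.

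Second, on morphisms: by inspection of Def.~\ref{def-mods}, the category $\mathbf{models}(\<\Sigma, \Gamma\>) = \mathbf{Mods}(\mathbf{M}(\<\Sigma, \Gamma\>))$ carries no non-identity arrows within a single fibre (the non-trivial action of $\mathbf{Mods}$ is reserved for images of theory morphisms through $\mathbf{M}$, which connect distinct fibres). Consequently, preservation of composition is vacuous beyond identity compositions, and preservation of identities reduces to checking $\mu_{\<\Sigma, \Gamma\>}(\mathit{id}_{\widetilde{\Delta}}) = \mathit{id}_{\mathbf{Mod}(\<\Sigma, \widetilde{\Delta}\>)}$, which follows immediately from the standard functoriality of $\mathbf{Mod}$ on identity theory morphisms.

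The principal obstacle is the bookkeeping in the first step: the definition of $Cn(\Delta)$ as given characterises maximality but does not explicitly record the inclusion $\Delta \subseteq \widetilde{\Delta}$ for $\widetilde{\Delta} \in Cn(\Delta)$. I would discharge this either by reading $Cn(\Delta)$ as the collection of maximal consistent \emph{extensions of} $\Delta$ (which is the natural intent in a tableau-driven model-construction setting), or by a short auxiliary argument establishing that maximality forces every $\alpha \in \Delta$ into $\widetilde{\Delta}$, since otherwise $\{\neg\alpha\} \cup \widetilde{\Delta}$ would have to be unsatisfiable, contradicting that $\widetilde{\Delta}$ admits models of $\Delta$ by the branch's openness. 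Once this is in place, the rest of the verification is entirely routine.
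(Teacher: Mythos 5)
The paper itself gives no proof of this Fact: it is simply asserted (the closest thing to a justification is the terse remark in the proof of Lemma~\ref{lemma:mu-nattrans} that canonical models are closed theories and hence behave as theory presentations in $\mathsf{Th}$), so your explicit verification is the natural one, and its overall structure is sound --- check that the object assignment lands where Def.~\ref{sat-calculus} requires, namely that each $\mathbf{Mod}(\<\Sigma,\widetilde{\Delta}\>)$ is a full subcategory of $\mathbf{Mod}(\<\Sigma,\Gamma\>)$, and then observe that $\mathbf{models}(\<\Sigma,\Gamma\>)$ is discrete, so preservation of identities and composition is immediate. Two points deserve correction, though neither is fatal. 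First, the inclusion of model classes does not come from ``soundness of the rules'' in the satisfiability-preservation sense (that property, that a satisfiable premise set has some satisfiable successor, is what makes the method a sound refutation procedure, and it runs in the wrong direction for your purpose); it comes from the purely syntactic fact that every tableau rule only \emph{adds} formulae to a node, so the leaf set of a branch contains its root set, giving $\Gamma \subseteq \Delta$ and hence, trivially, $\Delta \models^\Sigma \Gamma$ --- which is exactly the remark recorded in Def.~\ref{def:str}. Second, your fallback argument for $\Delta \subseteq \widetilde{\Delta}$ is circular as stated: under the literal definition of $Cn(\Delta)$, whose right-hand side never mentions $\Delta$, nothing ties $\widetilde{\Delta}$ to $\Delta$, so the premise that ``$\widetilde{\Delta}$ admits models of $\Delta$'' has no source. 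The only workable repair is your first reading --- $Cn(\Delta)$ is intended to be the set of maximal consistent \emph{extensions} of $\Delta$ (``maximal consistent sets of formulae \emph{for a given set of formulae}'') --- under which $\Delta \subseteq \widetilde{\Delta}$ holds by definition and your argument goes through; note also that in the appendix version of $\mathbf{Mods}$ the objects are the leaf sets $\widetilde{\Delta}$ of open branches themselves, with no $Cn$-closure, and the issue disappears altogether.
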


\begin{lemma}
\label{mu-nattrans}
$\mu$ is a natural transformation.
\end{lemma}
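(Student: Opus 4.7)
My plan is to verify the naturality square of $\mu$ for an arbitrary morphism of $\mathsf{Th}^\op$ by reducing the object-level commutation to the institution-theoretic satisfaction condition of Def.~\ref{institution}, and then observing that naturality on morphisms is automatic given the structure of $\mathbf{Mods}$ in the tableau setting.

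Concretely, I would fix a theory morphism $\sigma: \<\Sigma, \Gamma\> \to \<\Sigma', \Gamma'\>$ in $\mathsf{Th}$ with its image $\sigma^\op$ in $\mathsf{Th}^\op$, together with an arbitrary canonical presentation $\<\Sigma, \widetilde{\Delta}\>$ lying in the appropriate source of the naturality square. By Def.~\ref{def-mods}, $\widetilde{\Delta}$ is the set of formulae at the leaf of a saturated open branch of a tableau for $\Gamma$. The two composites around the square can then be unfolded by direct application of Defs.~\ref{def-mods} and~\ref{def-mu}: one path uses $\mu$ to cut out the full subcategory of $\Sigma$-models satisfying $\widetilde{\Delta}$ and then transports it by the reduct functor $\mathbf{Mod}(\sigma^\op)$, while the other first translates axioms to $\mathbf{Sen}(\sigma)(\widetilde{\Delta})$ via $\mathbf{Mods}(\widehat{\sigma}^\op)$ and then applies $\mu$ to obtain the subcategory of $\Sigma'$-models satisfying these translated formulae.

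The identification of the two resulting subcategories is precisely the content of the $\models$-invariance clause in Def.~\ref{institution}: for every $\phi \in \widetilde{\Delta}$ and every $\mathcal{M}' \in |\mathbf{Mod}(\Sigma')|$, $\mathcal{M}' \models^{\Sigma'} \mathbf{Sen}(\sigma)(\phi)$ if and only if $\mathbf{Mod}(\sigma^\op)(\mathcal{M}') \models^{\Sigma} \phi$. Quantifying this equivalence over all $\phi \in \widetilde{\Delta}$ yields the desired equality of subcategories and hence commutativity of the square on objects. Commutativity on morphisms is then automatic, because by Def.~\ref{def-mods} the only arrows in $\mathbf{Mods}(\mathit{Str}^{\Sigma, \Gamma})$ are identities, so both legs necessarily send identities to identities.

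The main obstacle will be purely bookkeeping: carefully tracking the variance induced by the two contravariant functors ($\mathbf{Mod}$ and $\mathbf{M}$) together with the opposite-category conventions used for $\mathsf{Struct}_{SC}$, so that each leg of the naturality square is interpreted in a consistent direction and so that the domain $\<\Sigma, \widetilde{\Delta}\>$ really lives where the chase demands it. Once these directions are pinned down, the remaining work is a direct invocation of $\models$-invariance, and no further technical machinery beyond the institution structure is required.
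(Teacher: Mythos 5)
Your proposal is correct and follows essentially the same route as the paper's own (very terse) proof: the paper sets up the same naturality square and dismisses it as trivial ``by observing that canonical models are closed theories, thus behaving as theory presentations in $\mathsf{Th}$'', which is precisely the reduction you make explicit --- commutativity on objects amounts to the $\models$-invariance condition of Def.~\ref{institution} applied to the leaf sets $\widetilde{\Delta}$, and commutativity on morphisms is automatic because $\mathbf{Mods}$ produces discrete categories. Your added care about the contravariance/opposite-category bookkeeping is a sensible elaboration of what the paper leaves implicit, not a different argument.
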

\begin{proof}
The proof can be found in Lemma~\ref{lemma:mu-nattrans}.
\end{proof}

Now, from Lemmas~\ref{m-functor},~\ref{mods-functor},~and~\ref{mu-nattrans}, and considering the hypothesis that $\mathbb{I}_{FOL}$ is an institution, the following corollary follows.

\begin{corollary}
$\< \mathsf{Sign}, \mathbf{Sen}, \mathbf{Mod}, \{\models^{\Sigma}\}_{\Sigma \in |\mathsf{Sign}|}, \mathbf{M}, \mathbf{Mods}, \mu \>$ is a satisfiability calculus.
\end{corollary}

The reader interested in the details of the proofs is pointed to \cite[Ex.~3.2]{lopezpombo:fi-166_4}.
\end{example}

As we mentioned in the introduction, model theory and proof systems are closely related and, in general, logical properties such as soundness and completeness are established by relating the families of entailment and satisfaction relations as we showed in Def.~\ref{logic}. As a consequence of having refined the notions of entailment and satisfaction by introducing concrete representations (i.e. proof calculus and satisfiability calculus, respectively), soundness and completeness can be reformulated in terms of the structures used for representing proofs and models.

The following proposition gives an alternative definition of soundness and completeness.

\begin{proposition}
\label{soundness-completeness}
Let the structure $\< \mathsf{Sign}, \mathbf{Sen}, \mathbf{Mod}, \{\vdash^{\Sigma}\}_{\Sigma \in |\mathsf{Sign}|}, \{\models^{\Sigma}\}_{\Sigma \in |\mathsf{Sign}|} \>$ be a logic, the structure $\< \mathsf{Sign}, \mathbf{Sen}, \{\vdash^{\Sigma}\}_{\Sigma \in |\mathsf{Sign}|}, \mathbf{P}, \mathbf{Pr}, \pi \>$ be a proof calculus and the structure $\< \mathsf{Sign}, \mathbf{Sen}, \mathbf{Mod}, \{\vdash^{\Sigma}\}_{\Sigma \in |\mathsf{Sign}|}, \{\models^{\Sigma}\}_{\Sigma \in |\mathsf{Sign}|}, \mathbf{M}, \mathbf{Mods}, \mu \>$ be a satisfiability calculus. Therefore, if $T = \< \Sigma, \Gamma \> \in |\mathsf{Th}_0|$ and $\alpha \in \mathbf{Sen}(\Sigma)$:\\

\noindent \emph{\textbf{[Soundness]}} If there exists $\tau \in |\mathbf{proof}(T)|$ such that $\pi_T(\tau) = \alpha$, then for all $M \in |\mathbf{models} (T)|$ and $\mathcal{M} \in |\mu_T (M)|$, $\mathcal{M} \models^\Sigma \alpha$.\\

\noindent \emph{\textbf{[Completeness]}} If for all $M \in |\mathbf{models}(T)|$ and $\mathcal{M} \in |\mu_T (M)|$, $\mathcal{M} \models^\Sigma \alpha$, then there exists $\tau \in |\mathbf{proof}(T)|$ such that $\pi_T(\tau) = \alpha$.
\end{proposition}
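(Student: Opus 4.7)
The plan is to reduce both directions to the classical soundness and completeness of the underlying logic (Def.~\ref{logic}), using the projections $\pi_T$ and $\mu_T$ as the bridges between the concrete proof and model structures of Defs.~\ref{proof-calculus}~and~\ref{sat-calculus} and the abstract entailment and satisfaction relations that sit between them. Both implications will therefore decompose into three moves: unfold the statement about a concrete witness (a proof or a canonical model) into one about the $\vdash$/$\models$ relation, apply the appropriate clause of Def.~\ref{logic}, and repackage the result in terms of the structures on the other side.

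For the soundness direction I would first invoke Def.~\ref{proof-calculus}: the image of $\pi_T$ is exactly $\Gamma^\bullet$, so the existence of a $\tau$ with $\pi_T(\tau) = \alpha$ immediately gives $\alpha \in \Gamma^\bullet$, i.e., $\Gamma \vdash^\Sigma \alpha$. Applying soundness of the logic yields $\Gamma \models^\Sigma \alpha$, which by the standard set-indexed extension of $\models^\Sigma$ means that every $\mathcal{M}' \in |\mathbf{Mod}(\langle\Sigma, \Gamma\rangle)|$ satisfies $\alpha$. The final step observes that, by Def.~\ref{sat-calculus}, each $\mu_T(M)$ is a subcategory of $\mathbf{Mod}(T) = \mathbf{Mod}(\langle\Sigma, \Gamma\rangle)$; hence any $\mathcal{M} \in |\mu_T(M)|$ is in $|\mathbf{Mod}(\langle\Sigma, \Gamma\rangle)|$ and therefore satisfies $\alpha$, as required.

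Completeness is handled symmetrically: from the pointwise semantic hypothesis I aim to derive $\Gamma \models^\Sigma \alpha$, apply completeness of the logic to obtain $\Gamma \vdash^\Sigma \alpha$, and finally invoke the image characterisation of $\pi_T$ to extract a witnessing $\tau$. The delicate step is the first one, since I must show that every $\mathcal{M}' \in |\mathbf{Mod}(T)|$ lies in some $|\mu_T(M)|$ with $M \in |\mathbf{models}(T)|$, so that the hypothesis propagates from the canonical models to all $T$-models. Here I would rely on the clause of Def.~\ref{sat-calculus} stating that each $\mu_T(M)$ corresponds to a canonical representation of a class of models of $T$, and argue that these subcategories jointly cover $|\mathbf{Mod}(T)|$; the tableau construction of Ex.~\ref{ex:sat-calculus} illustrates this in a concrete case, since every model of $\Gamma$ is captured by the open branches of some saturated tableau.

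The main obstacle is accordingly not a calculation but this covering property of $\mu_T$: one has to make precise the intuition that the canonical representations exhaust $\mathbf{Mod}(T)$ up to the equivalence implicit in the satisfiability calculus. Once that is settled, both implications collapse to the soundness and completeness clauses of Def.~\ref{logic} threaded through the image descriptions of $\pi_T$ and $\mu_T$.
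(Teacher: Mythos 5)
Your soundness direction is correct and is the argument one would expect: the image clause of Def.~\ref{proof-calculus} turns the existence of $\tau$ into $\alpha \in \Gamma^\bullet$, i.e.\ $\Gamma \vdash^\Sigma \alpha$, the soundness clause of Def.~\ref{logic} gives $\Gamma \models^\Sigma \alpha$, and since each $\mu_T(M)$ is by Def.~\ref{sat-calculus} a subcategory of $\mathbf{Mod}(T)$, every $\mathcal{M} \in |\mu_T(M)|$ is a model of $\Gamma$ and hence satisfies $\alpha$. (The paper itself does not spell out a proof of this proposition --- it is imported from the earlier satisfiability-calculus paper --- so the comparison can only be on the merits of your argument.)

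The completeness direction, however, has a genuine gap, and it is exactly the one you flag but do not close. Your plan needs the implication ``every $\mathcal{M} \in |\mu_T(M)|$, for every $M \in |\mathbf{models}(T)|$, satisfies $\alpha$'' $\Rightarrow$ ``$\Gamma \models^\Sigma \alpha$'', and this requires that the canonical-model images $\mu_T(M)$ jointly exhaust $|\mathbf{Mod}(T)|$ (or at least are semantically adequate for it, in the sense that a sentence true in all of them is true in all models of $T$). Nothing in Def.~\ref{sat-calculus} guarantees this: the definition only says that each $\mu_T(M)$ is \emph{some} subcategory of $\mathbf{Mod}(T)$ corresponding to a canonical representation; it does not say these subcategories cover $\mathbf{Mod}(T)$. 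In the tableau instance this covering is precisely the completeness theorem for the tableau method --- a substantive result about that particular calculus, not a formal consequence of the abstract definition --- so appealing to Ex.~\ref{ex:sat-calculus} illustrates the property but does not supply it in general. A second, smaller point: Def.~\ref{logic} only builds soundness into the notion of logic, completeness being an extra property, so your appeal to ``completeness of the logic'' is not licensed by the hypotheses as literally stated; it is defensible only under the intended reading of the proposition as a \emph{reformulation} of the soundness and completeness properties (each clause corresponding to the respective property of $\vdash$/$\models$), and you should say so explicitly. As it stands, the completeness half of your proof would fail for a satisfiability calculus whose canonical models do not represent every class of models of $T$.
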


Therefore, soundness can be reformulated in order to support the known fact stating that the evidence of the existence of a counterexamples is a sufficient fact to conclude that there cannot exists a proof.

\begin{corollary}
\label{pre-coro-soundness}
Let the structure $\< \mathsf{Sign}, \mathbf{Sen}, \mathbf{Mod}, \{\vdash^{\Sigma}\}_{\Sigma \in |\mathsf{Sign}|}, \{\models^{\Sigma}\}_{\Sigma \in |\mathsf{Sign}|} \>$ be a logic, the structure $\< \mathsf{Sign}, \mathbf{Sen}, \{\vdash^{\Sigma}\}_{\Sigma \in |\mathsf{Sign}|}, \mathbf{P}, \mathbf{Pr}, \pi \>$ be a proof calculus and the structure $\< \mathsf{Sign}, \mathbf{Sen}, \mathbf{Mod}, \{\models^{\Sigma}\}_{\Sigma \in |\mathsf{Sign}|}, \mathbf{M}, \mathbf{Mods}, \mu \>$ be a satisfiability calculus. Therefore, for all $T = \< \Sigma, \Gamma \> \in |\mathsf{Th}_0|$ and $\alpha \in \mathbf{Sen}(\Sigma)$, if there exists $M \in |\mathbf{models}(T)|$ and $\mathcal{M} \in |\mu_T (M)|$ such that $\mathcal{M} \models^\Sigma \alpha$ does not hold, then there is no $\tau \in |\mathbf{proof}(T)|$ such that $\pi_T(\tau) = \alpha$.
\end{corollary}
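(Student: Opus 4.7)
The plan is to obtain this corollary as the straightforward contrapositive of the soundness clause of Proposition~\ref{soundness-completeness}, so essentially no new machinery is needed beyond a careful handling of quantifiers. Proposition~\ref{soundness-completeness} asserts that if some proof $\tau \in |\mathbf{proof}(T)|$ satisfies $\pi_T(\tau) = \alpha$, then the universal statement ``for all $M \in |\mathbf{models}(T)|$ and all $\mathcal{M} \in |\mu_T(M)|$, $\mathcal{M} \models^\Sigma \alpha$'' holds; the corollary's hypothesis is precisely the negation of that universal statement, so the negation of the antecedent follows.

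More concretely, the proof would proceed as follows. First, I would fix an arbitrary theory presentation $T = \langle \Sigma, \Gamma \rangle \in |\mathsf{Th}_0|$ and sentence $\alpha \in \mathbf{Sen}(\Sigma)$, and invoke the hypothesis to obtain a specific witness: some $M_0 \in |\mathbf{models}(T)|$ together with some $\mathcal{M}_0 \in |\mu_T(M_0)|$ for which $\mathcal{M}_0 \models^\Sigma \alpha$ fails. Second, I would assume towards contradiction that there exists $\tau \in |\mathbf{proof}(T)|$ with $\pi_T(\tau) = \alpha$. Third, applying the soundness clause of Proposition~\ref{soundness-completeness} to this $\tau$ yields that $\mathcal{M} \models^\Sigma \alpha$ holds for every pair $(M, \mathcal{M})$ with $M \in |\mathbf{models}(T)|$ and $\mathcal{M} \in |\mu_T(M)|$. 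Instantiating this universal conclusion at the particular pair $(M_0, \mathcal{M}_0)$ fixed in the first step contradicts the assumption that $\mathcal{M}_0 \not\models^\Sigma \alpha$. The contradiction forces the nonexistence of any such $\tau$, which is the desired conclusion.

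The main — and only — point requiring care is the quantifier handling when dualising the soundness statement: the hypothesis supplies an existential witness in the semantic layer, while soundness delivers a universal conclusion in that same layer, and the proof hinges on using the witness as the instantiation that refutes the universal. Since Proposition~\ref{soundness-completeness} is stated precisely so that these layers match (the proof calculus sits over the entailment system, and the satisfiability calculus sits over the institution of the same logic), no side conditions about signatures or theory morphisms intervene, and there is no genuine obstacle to the derivation.
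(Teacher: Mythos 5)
Your proof is correct and matches the paper's intended argument: the corollary is obtained directly as the contrapositive of the soundness clause of Proposition~\ref{soundness-completeness}, using the hypothesised counterexample $(M_0,\mathcal{M}_0)$ to refute the universal conclusion that soundness would deliver if a proof $\tau$ with $\pi_T(\tau)=\alpha$ existed. No gaps; the quantifier handling you highlight is indeed the only point of care.
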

 
Now, on the one hand, Def.~\ref{sat-calculus} provides the means for defining specific mechanisations of the satisfiability relation $\models_\Sigma$, with $\Sigma \in |\mathsf{Sign}|$, but, on the other hand, being able to use Coro.~\ref{coro-soundness} requires asserting that a model $\mathcal{M}$ does not satisfies a formula $\alpha$, written above as ``$\mathcal{M} \models_\Sigma \alpha$ does not hold''. Luckily, most of the logical languages have some form of negation in their syntax that, from a semantical point of view, translates into a complementation of the notion of satisfaction. This notion is usually referred to as for an institution to ``have negation'', understood as, for the language of that institution, to have a syntactic device whose behaviour, from a model-theoretic perspective, is that of a negation. The next definition formalises this notion.

\begin{definition}
\label{have-negation}
Given a institution $\mathbb{I}$, we say that \emph{$\mathbb{I}$ has negation} if for every signature $\Sigma \in |\mathsf{Sign}|$, $\alpha \in \mathbf{Sen} (\Sigma)$, there exists a formula in $\mathbf{Sen} (\Sigma)$, usually denoted as $\neg \alpha$, such that for all $\mathcal{M} \in \mathbf{Mod} (\Sigma)$,
$$\mathcal{M} \models^\Sigma \neg\alpha \text{ iff } \mathcal{M} \models^\Sigma \alpha \text{ does not hold.}$$
\end{definition}

Therefore, Coro.~\ref{pre-coro-soundness} can be reformulated as follows.

\begin{corollary}
\label{coro-soundness}
Let the structure $\< \mathsf{Sign}, \mathbf{Sen}, \mathbf{Mod}, \{\vdash^{\Sigma}\}_{\Sigma \in |\mathsf{Sign}|}, \{\models^{\Sigma}\}_{\Sigma \in |\mathsf{Sign}|} \>$ be a logic, the structure $\< \mathsf{Sign}, \mathbf{Sen}, \{\vdash^{\Sigma}\}_{\Sigma \in |\mathsf{Sign}|}, \mathbf{P}, \mathbf{Pr}, \pi \>$ be a proof calculus and the structure $\< \mathsf{Sign}, \mathbf{Sen}, \mathbf{Mod}, \{\models^{\Sigma}\}_{\Sigma \in |\mathsf{Sign}|}, \mathbf{M}, \mathbf{Mods}, \mu \>$ be a satisfiability calculus such that $\< \mathsf{Sign}, \mathbf{Sen}, \mathbf{Mod}, \{\models^{\Sigma}\}_{\Sigma \in |\mathsf{Sign}|}\>$ has negation. Therefore, for all $T = \< \Sigma, \Gamma \> \in |\mathsf{Th}_0|$ and $\alpha \in \mathbf{Sen}(\Sigma)$, if there exists $M \in |\mathbf{models}(T)|$ and $\mathcal{M} \in |\mu_T (M)|$ such that $\mathcal{M} \models^\Sigma \neg\alpha$, then there is no $\tau \in |\mathbf{proof}(T)|$ such that $\pi_T(\tau) = \alpha$.
\end{corollary}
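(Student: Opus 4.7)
The plan is to derive this corollary as a direct instantiation of Corollary~\ref{pre-coro-soundness} once the presence of negation is used to turn the formula-level premise $\mathcal{M} \models^\Sigma \neg\alpha$ into the metalinguistic statement that $\mathcal{M} \models^\Sigma \alpha$ does not hold. Since both Coro.~\ref{pre-coro-soundness} and Def.~\ref{have-negation} are already established, no new machinery has to be developed; the work is just a bridging step.

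First, I would fix an arbitrary $T = \langle \Sigma, \Gamma \rangle \in |\mathsf{Th}_0|$ and $\alpha \in \mathbf{Sen}(\Sigma)$, and assume the hypothesis: there exist $M \in |\mathbf{models}(T)|$ and $\mathcal{M} \in |\mu_T(M)|$ with $\mathcal{M} \models^\Sigma \neg\alpha$. Next, I would apply Def.~\ref{have-negation}, which is available because the underlying institution is assumed to have negation; this yields the equivalence $\mathcal{M} \models^\Sigma \neg\alpha$ iff $\mathcal{M} \models^\Sigma \alpha$ does not hold, and hence the latter. At this point the situation matches exactly the antecedent of Coro.~\ref{pre-coro-soundness}, so a direct appeal to that corollary delivers the desired conclusion that there is no $\tau \in |\mathbf{proof}(T)|$ with $\pi_T(\tau) = \alpha$.

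The only subtlety I foresee concerns making sure that the negation stipulated by Def.~\ref{have-negation} is the \emph{same} formula $\neg\alpha$ appearing in the statement of the corollary (i.e., that one is using the negation device of the institution, not some ad hoc symbol), but this is a matter of naming rather than content. Since each step is either an assumption or a cited result, I do not anticipate any genuine obstacle; the proof essentially reduces to observing that Coro.~\ref{pre-coro-soundness} was formulated in terms of non-satisfaction at the meta-level, whereas Coro.~\ref{coro-soundness} internalises this non-satisfaction via the syntactic negation guaranteed by Def.~\ref{have-negation}.
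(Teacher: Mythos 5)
Your proof is correct and follows exactly the paper's intended route: the paper presents Coro.~\ref{coro-soundness} as a direct reformulation of Coro.~\ref{pre-coro-soundness}, using Def.~\ref{have-negation} to replace the meta-level statement ``$\mathcal{M} \models^\Sigma \alpha$ does not hold'' by the satisfaction of the syntactic negation $\neg\alpha$. Nothing further is needed.
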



\section{Satisfiability subcalculus, effectiveness and the foundations for combining deduction and model finding}
\label{subsat}
In the introduction we mentioned the tool Dynamite \cite{frias:tacas07} as the result from combining the capabilities of the bounded model checker Alloy \cite{jackson:acmtosem-11_2} and a theorem prover for the language of first-order relational logic \cite{jackson:acm-sigsoft00}, implemented on top of the theorem prover PVS. In this section we present the formal elements underlying this way of combining (bounded) counterexample finding with theorem proving from a language independent setting.\\

As we mentioned in Sec.~\ref{intro}, in \cite{meseguer:lc87}, Meseguer not only developed the idea of formalising the notion of proof calculus as an ``implementation'' of an entailment system, but also explored the possibility of restricting a proof calculus in such a way that the result is a subcalculus for a restriction of the language with specific properties. The next definition, introduced in \cite{lopezpombo:fi-166_4} presentes a model-theoretic counterpart of Meseguer's definitions of proof subcalculus.

\begin{definition}[Satisfiability subcalculus]
\label{sat-subcalculus}
\sloppy
A \emph{satisfiability subcalculus} is a structure of the form $\< \mathsf{Sign}, \mathbf{Sen}, \mathbf{Mod},  \mathsf{Sign}_0, \mathbf{ax},  \{\models^{\Sigma}\}_{\Sigma \in |\mathsf{Sign}|}, \mathbf{M}, \mathbf{Mods}, \mu \>$ satisfying the following conditions: 
\begin{itemize}
\item $\< \mathsf{Sign}, \mathbf{Sen},  \mathbf{Mod}, \{\models^{\Sigma}\}_{\Sigma \in |\mathsf{Sign}|} \>$ is an institution,
\item $\mathsf{Sign}_0$ is a subcategory of $ \mathsf{Sign}$ called the subcategory of admissible signatures; the restriction of the functor $\mathbf{Sen}$ to  $\mathsf{Sign_0}$ will be denoted by  $\mathbf{Sen}_0$,
\item $\mathbf{ax}: \mathsf{Sign}_0 \to \mathsf{Set}$ is a subfunctor of the functor obtained by composing $\mathbf{Sen}_0$ with the powerset functor, i.e., there is a natural inclusion $\mathbf{ax}(\Sigma) \subseteq \mathscr{P}(\mathbf{Sen}_0(\Sigma))$ for each
$\Sigma \in \mathsf{Sign}_0$. Each $\Gamma \in \mathbf{ax}(\Sigma)$ is called a set of admissible axioms.
This defines a subcategory $\mathsf{Th}_{ax}$ of $\mathsf{Th}$ whose objects are theory presentations $T = \<\Sigma, \Gamma\>$ with
$\Sigma \in \mathsf{Sign}_0$ and $\Gamma \in \mathbf{ax}(\Sigma)$, and whose morphisms are axiom-preserving theory
morphisms $H$ such that $H$ is in $\mathsf{Sign}_0$.
\item $\mathbf{M}: {\mathsf{Th}_{ax}}^\op \to \mathsf{Struct}_{SC}$ is a functor.
 Let $T \in |{\mathsf{Th}_{ax}}^\op|$, then $\mathbf{M}(T) \in
  |\mathsf{Struct}_{SC}|$ is the model structure of $T$,
\item $\mathbf{Mods}: \mathsf{Struct}_{SC} \to \mathsf{Cat}$ is a functor.
  Let $T \in |{\mathsf{Th}_{ax}}^\op|$, then $\mathbf{Mods}(\mathbf{M}(T))$
  is the set of canonical models of $T$; the composite functor
  $\mathbf{Mods} \circ \mathbf{M}: {\mathsf{Th}_{ax}}^\op \to \mathsf{Cat}$ will
  be denoted by $\mathbf{models}$, and
  \item $\mu: \mathbf{models} \nattrans \mathscr{P} \circ \mathbf{Mod}$ is a natural transformation such that, for each $T = \<\Sigma, \Gamma\> \in |{\mathsf{Th}_{ax}}^\op|$, the image of $\mu_T: \mathbf{models}(T) \to \mathscr{P} \circ \mathbf{Mod}(T)$ is the subcategory of $\mathbf{Mod}(T)$ corresponding to each canonical representation of a class of models in $|\mathbf{models}(T)|$. The map $\mu_T$ is called the \emph{projection of the category of models} of the theory $T$.
\end{itemize}
\end{definition}

There are no major differences with respect to \cite[Def.~14]{meseguer:lc87}, except for the lack of restriction on the possible conclusions that the sub-calculus can draw. Notice that in the case of proof theoretic approaches, the introduction of a functor restricting the conclusions, as a sub functor of $\mathbf{Sen}$, is a key element of the definition introduced for restricting the set of theorems of interest, in contrast with satisfiability calculi, where models only relate to a given theory presentation, there is not such need.\\

\begin{example}[Tableau Method as a satisfiability subcalculus for the quantifier-free and ground fragment of first-order predicate logic]
\label{ex:sat-subcalculus}
This example strongly relays on Ex.~\ref{ex:sat-calculus} as the language for which we present a satisfiability subcalculus is a restriction of the language of first-order predicate logic.

We first proceed to define the basic restriction of the language of first order predicate logic by determining $\mathsf{Sign}_0$ and $\mathbf{ax}$:
\begin{itemize}
 \item $\mathsf{Sign}_0 = \mathsf{Sign}$,
 \item $\mathbf{ax}$ is the composition of a functor $\mathit{Qf}: \mathsf{Sign}_0 \to \mathsf{Set}$ yielding, for each $\Sigma \in |\mathsf{Sign}|$, the restriction of $\mathbf{Sen}(\Sigma)$ to the quantifier-free sentences, with $\mathscr{P}_{\mathit{fin}}$.
\end{itemize}

Once we established the restriction of the language, $\mathbf{M}$ is defined exactly as in Ex.~\ref{ex:sat-calculus} but considering only those tableaux that do not contain any application of rules [$\forall$] and [$\exists$]. Notice that these rules are not required as they can only be applied to sets of formulae containing a quantified formula which, by the restriction stated before, do not exists.

Finally, $\mathbf{Mods}$ and $\mu$ are defined exactly as in Ex.~\ref{ex:sat-calculus}.
\end{example}

The next definition introduces the notion of effectiveness of the procedure for constructing structures characterising models for sets of sentences over a logical (sub)language. To this end, we follow Meseguer’s approach \cite[Def.~16]{meseguer:lc87}, which, in turn, adopts the axiomatic view of computability outlined by Shoenfield in \cite{schoenfield71}. The elementary notions are those of a \emph{finite object}, a \emph{space} of finite objects, and \emph{recursive functions}. In Shoenfield's own words, a \emph{finite object} is an ``object which can be specified by a finite amount of information'', a \emph{space} is ``an infinite class $X$ of finite objects such that, given a finite object $x$, we can decide whether or not $x$ belongs to $X$''. Now, given spaces $X$ and $Y$, a recursive function $f : X \to Y$ is then a total function that can be computed by an algorithm (i.e., by a terminating program, disregarding space and time limitations); spaces and recursive functions form a category $\mathsf{Space}$. Effectiveness is then obtained by restricting sentences and axioms over selected signatures to be organised in a space. 

\begin{definition}[Effective satisfiability subcalculus]
\label{eff-sat-subcalculus}
\sloppy
An \emph{effective satisfiability subcalculus} is a structure $\mathbbm{Q}$ of the form:
$\< \mathsf{Sign}, \mathbf{Sen}, \mathbf{Mod},  \mathsf{Sign}_0, \mathbf{Sen}_0, \mathbf{ax},  \{\models^{\Sigma}\}_{\Sigma \in |\mathsf{Sign}|}, \mathbf{M}, \mathbf{Mods}, \mu \>$ satisfying the following conditions:
\begin{itemize}
\item $\< \mathsf{Sign}, \mathbf{Sen},  \mathbf{Mod}, \{\models^{\Sigma}\}_{\Sigma \in |\mathsf{Sign}|} \>$ is an institution.
\item $\mathsf{Sign}_0$ is a subcategory of $ \mathsf{Sign}$ called the subcategory of admissible signatures; let $J: \mathsf{Sign}_0 \hookrightarrow \mathsf{Sign}$ be the inclusion functor.
\item $\mathbf{Sen}_0:\mathsf{Sign}_0 \to \mathsf{Space}$ is a functor such that $\mathcal{U} \circ \mathbf{Sen}_0 = \mathbf{Sen} \circ J$, where $\mathcal{U}: \mathsf{Space} \to \mathsf{Set}$ is the obvious forgetful functor.
\item $\mathbf{ax}: \mathsf{Sign}_0 \to \mathsf{Space}$ is a subfunctor of the functor obtained by composing $\mathbf{Sen}_0$ with the functor $\mathscr{P}_{\mathit{fin}} : \mathsf{Space} \to \mathsf{Space}$, that sends each space to the space of its finite subsets. This defines a subcategory $\mathsf{Th}_{ax}$ of $\mathsf{Th}$ whose objects are theories $T = \<\Sigma, \Gamma\>$ with $\Sigma \in \mathsf{Sign}_0$ and $\Gamma \in \mathbf{ax}(\Sigma)$, and whose morphisms are axiom-preserving theory morphisms $H$ such that $H$ is in $\mathsf{Sign}_0$.
\item $\mathbf{M}: {\mathsf{Th}_{ax}}^\op \to \mathsf{Struct}_{SC}$ is a functor.
 Let $T \in |{\mathsf{Th}_{ax}}^\op|$, then $\mathbf{M}(T) \in
  |\mathsf{Struct}_{SC}|$ is the model structure of $T$.
\item $\mathbf{Mods}: \mathsf{Struct}_{SC} \to \mathsf{Space}$ is a functor.
  Let $T \in |{\mathsf{Th}_{ax}}^\op|$, then $\mathbf{Mods}(\mathbf{M}(T))$
  is the set of canonical models of $T$; the composite functor
  $\mathbf{Mods} \circ \mathbf{M}: {\mathsf{Th}_{ax}}^\op \to \mathsf{Space}$ will
  be denoted by $\mathbf{models}$.
\item $\mu: \mathbf{models} \nattrans \mathscr{P} \circ \mathbf{Mod}$ is a natural transformation such that, for each $T = \<\Sigma, \Gamma\> \in |{\mathsf{Th}_{ax}}^\op|$, the image of $\mu_T: \mathbf{models}(T) \to \mathscr{P} \circ \mathbf{Mod}(T)$ is the subcategory of $\mathbf{Mod}(T)$ corresponding to each canonical representation of a class of models in $|\mathbf{models}(T)|$. The map $\mu_T$ is called the \emph{projection of the category of models} of the theory $T$.
\item $\mathcal{U} (\mathbb{Q}) = \< \mathsf{Sign}, \mathbf{Sen}, \mathbf{Mod},  \mathsf{Sign}_0, \mathcal{U} \circ \mathbf{ax},  \{\models^{\Sigma}\}_{\Sigma \in |\mathsf{Sign}|}, \mathbf{M}, \mathcal{U} \circ \mathbf{Mods}, \mu \circ \mathcal{U}\>$ is a satisfiability subcalculus, where $\mu \circ \mathcal{U}$ denotes the natural transformation formed by $\{\mu_T \circ \mathcal{U}: \mathcal{U} \circ \mathbf{models} (T) \to \mathscr{P} \circ \mathbf{Mod}\}_{T \in |{\mathsf{Th}_{\mathit{ax}}}^\op|}$.
\end{itemize}
\end{definition}

Notice that the only differences between Def.~\ref{sat-subcalculus} and Def.~\ref{eff-sat-subcalculus} are that: 
\begin{inparaenum}[a)]  
\item \label{rest} the restriction $\mathsf{Sen}_0$ of the functor $\mathsf{Sen}$ must satisfy $\mathcal{U} \circ \mathsf{Sen}_0 = \mathsf{Sen} \circ J$, and
\item \label{des} all occurrences of the category $\mathsf{Set}$ involved in the definition are replaced by $\mathsf{Space}$.
\end{inparaenum}
Condition~\ref{rest} establishes that the restriction $\mathsf{Sen}_0: \mathsf{Sign}_0 \to \mathsf{Space}$ of the functor $\mathsf{Sen}: \mathsf{Sign} \to \mathsf{Set}$ is coherent with the inclusion functor $J$ and the forgetful functor $\mathcal{U}$, and Cond.~\ref{des} ensures decidability.

\begin{example}[Truth tables as an effective satisfiability subcalculus for the quantifier-free and ground fragment of first-order predicate logic]
\label{ex:eff-sat-subcalculus}
Let $\mathcal{X}$ be a set of first-order variables and $\mathbb{I}_{\text{FOL}}$ be the institution for first-order predicate logic. Let $\Sigma$ be a set of propositional variables then, the set of propositional sentences over those variables (denoted as $\mathcal{L}_{\Sigma}$) is defined as follows:
\begin{center}
 $\mathcal{L}_{\Sigma} := p \mid \neg \alpha \mid \alpha \lor \beta$ where $p \in \Sigma$ and $\alpha, \beta \in \mathcal{L}_{\Sigma}$
\end{center}
\noindent an \emph{assignment} for $\Sigma$ is a function $\mathit{val}_{\Sigma}: \Sigma \to \{ \bot, \top \}$. Given $\alpha \in \mathcal{L}_{\Sigma}$, we say that an \emph{assignment} $\mathit{val}$ satisfies $\alpha$ (denoted as $\mathit{val} \models \alpha$) if and only if:
$$
\begin{array}{rcl}
\mathit{val} \models p & \text{iff} & \mathit{val} (p) = \top\\
\mathit{val} \models \neg \alpha & \text{iff} & \mathit{val} \models \alpha \text{ does not hold}\\
\mathit{val} \models \alpha \lor \beta & \text{iff} & \mathit{val} \models \alpha \text{ or } \mathit{val} \models \beta
\end{array}
$$

Given a set of formulae $\Gamma \in 2^{\mathcal{L}_\Sigma}$ we say that a formula $\alpha$ \emph{can be obtained from} $\Gamma$ if and only if at least one of the following conditions hold:
\begin{itemize}
 \item $\alpha \in \Gamma$
 \item $\alpha$ is of the form $\neg \beta_1$
 \item $\alpha$ is of the form $\beta_1 \lor \beta_2$
\end{itemize}
\noindent where $\beta_1$ and $\beta_2$ can be obtained from $\Gamma$.

The reader should note that, if we define $\Gamma$ as a set of propositional variables then a formula $\alpha$ \emph{can be obtained from} $\Gamma$ if and only if $\alpha \in \mathcal{L}_{\Gamma}$; furthermore, we can tell if a \emph{truth assignment} $\mathit{val}_{\Gamma}$ satisfies the formula $\alpha$ (noted $\mathit{val}_{\Gamma} \models \alpha$). 

Given $\Gamma_1, \Gamma_2 \in \mathcal{L}_\Sigma$, if all formula $\alpha \in \Gamma_2$ can be obtained from $\Gamma_1$, then we can define a \emph{Truth table} from $\Gamma_1$ to $\Gamma_2$ as a function $t : \mathit{assigns}_{\Gamma_1} \to \mathit{assigns}_{\Gamma_2}$. Each set $\mathit{assigns}_{\Gamma}$ is the set of all functions with type $\Gamma \to \{\top, \bot\}$; in particular, the set $\mathit{assigns}_{\Sigma}$ is the set of all possible \emph{truth assignments} for $\Sigma$. In order for $t$ to be a \emph{Truth table} the following condition must also hold:
\begin{center}
$t(\mathit{val}_{\Gamma_1})(\alpha) = \top$ if and only if $\mathit{val}_{\Gamma_1} \models \alpha$, for all $\mathit{val}_{\Gamma_1} \in \mathit{assigns}_{\Gamma_1}$, for $\alpha \in \Gamma_2$.
\end{center}


Note that given two \emph{Truth tables} $t : \mathit{assigns}_{\Gamma_1} \to \mathit{assigns}_{\Gamma_2}$ and $t' : \mathit{assigns}_{\Gamma_2} \to \mathit{assigns}_{\Gamma_3}$, the composition $t' \circ t : \mathit{assigns}_{\Gamma_1} \to \mathit{assigns}_{\Gamma_3}$ is also a \emph{Truth table}.
Next, we have to provide formal definitions for the categories supporting \emph{Truth table} structures. Given a set of propositional variables $\Sigma$ and a set of formulae $\Gamma \in 2^{\mathcal{L}_\Sigma}$, we define $\mathit{Str}^{\Sigma, \Gamma} = \< \mathcal{O}, \mathcal{A} \>$ such that $\mathcal{O} = 2^{\Gamma'}$ where $\Gamma'$ is the set of all subformulae of the formulae in $\Gamma$, and $\mathcal{A} = \{t : \Gamma_1 \to \Gamma_2 \mid \Gamma_1, \Gamma_2 \subseteq 2^{\Gamma'} \}$, where all formula $\alpha \in \Gamma_2$ can be obtained from $\Gamma_1$ and $t$ is the \emph{Truth table} from $\Gamma_1$ to $\Gamma_2$. Note that a structure $\mathit{Str}^{\Sigma, \Gamma}$ has identities and composition, and it can be proved to be a category.

Now we can define the effective satisfiability subcalculus for $\mathbbm{I}_\mathit{FOL}$ as follows:
\begin{enumerate}
 \item In this case any signature is an admissible signature, therefore  $\mathsf{Sign}_0 = \mathsf{Sign}$. Then, $\mathbf{Sen}_0$ is the restriction of the functor $\mathbf{Sen}$ to the category $\mathsf{Sign}_0$. Note that $\mathbf{Sen}_0(\Sigma)$ is a recursive infinite set of finite objects, and therefore, it can be proved that the image of $\mathbf{Sen}_0$ is a space. Since $\mathbf{Sen}_0$ is a subfunctor of $\mathbf{Sen}$ it is easy to observe that $\mathcal{U} \circ \mathbf{Sen}_0 = \mathbf{Sen} \circ J$.
 \item $\mathbf{ax}$ is the composition of a functor $\mathit{Qf}: \mathsf{Sign}_0 \to \mathsf{Set}$ yielding, for each $\Sigma \in |\mathsf{Sign}|$, the restriction of $\mathbf{Sen}_0(\Sigma)$ to the quantifier-free sentences, with $\mathscr{P}_{\mathit{fin}}$.

 \item In order to define $\mathbf{M}$ we first need to map each first order formula to a boolean combination of propositional variables and then introduce the class of legal Truth tables for those formulas. We will call $\mathsf{prop}(\Sigma) = \{v_\mathit{p} \mid \mathit{p}$ is a formula in $\Gamma$, with $\Gamma \in \mathbf{ax}(\Sigma) \}$ for $\Sigma \in \mathsf{Sign}_0$. Then, we define the function $\mathit{Tr}: \mathit{Qf}(\Sigma) \to \mathcal{L}_{\mathsf{prop}(\Sigma)}$ as follows:
 $$
\begin{array}{rcl}
\mathit{Tr} (P (t_1, \dots , t_k)) & = & v_{\mbox{``$P(t_1, \ldots, t_k)$''}} \\ 
\mathit{Tr} (t = t') & = & v_{\mbox{``$t = t'$''}},\\
\mathit{Tr} (\neg \alpha ) & = & \neg \mathit{Tr} (\alpha),\\
\mathit{Tr} (\alpha \vee \beta) & = & \mathit{Tr} (\alpha) \vee \mathit{Tr}  (\beta).
\end{array}
$$
where $P$ is a predicate symbol and $t, t', t_1 \dots t_k$ are terms.


We will also note \textit{Tr} as its extension to sets of formulae (i.e., $\mathit{Tr}(\Gamma)$ with $\Gamma \in \mathbf{ax}(\Sigma)$).
  Now we can define $\mathsf{Struct}_{SC}$ as $\< \mathcal{O}, \mathcal{A} \>$ where 
  \begin{itemize}
  \item $\mathcal{O} = \{ \< \mathit{Str}^{\mathsf{prop}(\Sigma), \mathit{Tr}(\Gamma)}, \cup, \emptyset \> \mid \Sigma \in |\mathsf{Sign}_0|\ \land\ \Gamma \in \mathbf{ax}(\Sigma) \}$, and 
  \item $\mathcal{A} = \{ \widehat{\sigma}: \< \mathit{Str}^{\mathsf{prop}(\Sigma), \mathit{Tr}(\Gamma)}, \cup, \emptyset \> \to \< \mathit{Str}^{\mathsf{prop}(\Sigma'), \mathit{Tr}(\Gamma')}, \cup, \emptyset \> \mid \sigma : \< \Sigma, \Gamma \> \to \< \Sigma', \Gamma' \> \in || \mathsf{Th} ||\}$. 
  \end{itemize}
Finally, we can define $\mathbf{M}$ in the following way:
\begin{itemize}
 \item $\mathbf{M}(\< \Sigma, \Gamma \>) = \< \mathit{Str}^{\mathsf{prop}(\Sigma), \mathit{Tr}(\Gamma)}, \cup, \emptyset \>$, for each $\< \Sigma, \Gamma \> \in |\mathsf{Th}_{ax}|$.
 \item $\mathbf{M}(\sigma^\mathsf{op}) = \widehat{\sigma^\mathsf{op}}$, for each theory morphism $\sigma : \< \Sigma, \Gamma \> \to \< \Sigma', \Gamma' \> \in || \mathsf{Th}_{ax} ||$, where $\widehat{\sigma}: \< \mathit{Str}^{\mathsf{prop}(\Sigma), \mathit{Tr}(\Gamma)}, \cup, \emptyset \> \to \< \mathit{Str}^{\mathsf{prop}(\Sigma'), \mathit{Tr}(\Gamma')}, \cup, \emptyset \>$ is the homomorphic extension of $\sigma$.
\end{itemize}
 \item \sloppy $\mathbf{Mods}$ provides the means for obtaining the set of canonical models from structures of the form $\mathit{Str}^{\mathsf{prop}(\Sigma), \mathit{Tr}(\Gamma)}$ by identifying the "rows" of the truth table that satisfy each formula in $\Gamma$ (i.e. , the \emph{truth assignments} of the propositional variables for which the \emph{truth assignments} for each formula is $\top$). $\mathbf{Mods} : \mathsf{Struct}_\mathit{SC} \to \mathsf{Space}$ is defined on objects as $\mathbf{Mods}(\< \mathit{Str}^{\mathsf{prop}(\Sigma), \mathit{Tr}(\Gamma)}, \cup, \emptyset \>) = \< \mathcal{O}, \mathcal{A} \>$ where:
\begin{itemize}
  \item $\mathcal{O} = 
          \setof
            {\mathit{val}_{\mathsf{prop}(\Sigma)}\big|_{\mathit{Tr}(\Gamma)}}
            {
              \begin{array}{l}
                \mathit{val}_{\mathsf{prop}(\Sigma)} \in \mathit{assigns}_{\mathsf{prop}(\Sigma)}, \\
                t : \emptyset \to \mathit{Tr}(\Gamma) \in ||\mathit{Str}^{\mathsf{prop}(\Sigma), \mathit{Tr}(\Gamma)}||, \\
                (\forall \alpha \in \mathit{Tr}(\Gamma))\ t(\mathit{val}_{\mathsf{prop}(\Sigma)})(\alpha) = \top
              \end{array}
            }$
 \\ where $\mathit{val}_{\mathsf{prop}(\Sigma)}\big|_{\mathit{Tr}(\Gamma)}$ is the restriction of the function $\mathit{val}_{\mathsf{prop}(\Sigma)}$ to the set of propositional variables that appear in $\mathit{Tr}(\Gamma)$.
 \item $\mathcal{A} = \{ \mathit{id}_v : v \to v \mid v \in \mathcal{O} \}$.
\end{itemize}

\sloppy and on morphisms as for all $\sigma : \< \Sigma, \Gamma \> \to \< \Sigma', \Gamma' \> \in || \mathsf{Th}_{ax} ||, \mathbf{Mods}(\widehat{\sigma^{\mathsf{op}}})( \mathit{val}_{\mathsf{prop}(\Sigma')} \big|_{\mathit{Tr}(\Gamma')} ) = \mathit{val}_{\mathsf{prop}(\Sigma)}  \big|_{\mathit{Tr}(\Gamma)}$, such that $\mathit{val}_{\mathsf{prop}(\Sigma)} (v_p) =  \mathit{val}_{\mathsf{prop}(\Sigma')} (\sigma(v_p))$ for all the variables $v_p$  that appear in $\mathit{Tr}(\Gamma)$. Each object $\< \mathcal{O}, \mathcal{A} \>$ is a set of functions with a finite fixed domain, therefore there is at most a finite amount of functions, thus we can prove that $\< \mathcal{O}, \mathcal{A} \>$ is a space.
\item $\mu$ has to relate the structures representing truth assignments with the first order models satisfying the set of formulae denoted by those truth assignments; then we can define $\mu_{\< \Sigma, \Delta \>} : \mathbf{models}(\< \Sigma, \Delta \>) \to \mathscr{P} \circ \mathbf{Mod} (\< \Sigma, \Delta \>)$ as for each $\mathit{val}_{\mathsf{prop}(\Sigma)} \in |\mathbf{models}(\< \Sigma, \Delta \>)|, \mu_{\< \Sigma, \Delta \>}(\mathit{val}_\mathsf{prop}(\Sigma)) = \mathbf{Mod} (\< \Sigma, \widehat{\Delta} \>)$, where $\widehat{\Delta} = \{ P (t_1, \dots , t_k) \mid \mathit{val}_{\mathsf{prop}(\Sigma)}(v_{\mbox{``$P(t_1, \ldots, t_k)$''}}) = \top \}\ \cup \{ \neg P (t_1, \dots , t_k) \mid \mathit{val}_{\mathsf{prop}(\Sigma)}(v_{\mbox{``$P(t_1, \ldots, t_k)$''}}) = \bot \}\ \cup \{ t = t' \mid \mathit{val}_{\mathsf{prop}(\Sigma)}(v_{\mbox{``$t = t'$''}}) = \top \}\ \cup \{\neg t = t' \mid \mathit{val}_{\mathsf{prop}(\Sigma)}(v_{\mbox{``$t = t'$''}}) = \bot \}$.
In order to prove that $\mu$ is a natural family of functors we have to show that $\mu_{\< \Sigma, \Delta \>} \circ \mathbf{models}(\sigma^\mathsf{op})(\mathit{val}_{\mathsf{prop}(\Sigma')}) = (\mathscr{P} \circ \mathbf{Mod} (\sigma^\mathsf{op})) \circ \mu_{\< \Sigma', \Delta' \>}(\mathit{val}_{\mathsf{prop}(\Sigma')})$, for $\sigma : \< \Sigma, \Delta \> \to \< \Sigma', \Delta' \>$. It is easy to observe that:

\begin{itemize}
 \item $\mu_{\< \Sigma, \Delta \>} \circ \mathbf{models}(\sigma^\mathsf{op})(\mathit{val}_{\mathsf{prop}(\Sigma')}) =  \mathbf{Mod} (\< \Sigma, \widehat{\Delta} \>)$.
 \item $(\mathscr{P} \circ \mathbf{Mod} (\sigma^\mathsf{op})) \circ \mu_{\< \Sigma', \Delta' \>}(\mathit{val}_{\mathsf{prop}(\Sigma')}) =  \mathbf{Mod} (\< \Sigma, \widehat{\Delta} \>)$.
\end{itemize}
\noindent where $\widehat{\Delta}$, as defined above, is the set of first order predicates that correspond to the \emph{truth assignment} $\mathit{val}_{\mathsf{prop}(\Sigma)}$, and $\mathit{val}_{\mathsf{prop}(\Sigma)}$ is the map of $\mathit{val}_{\mathsf{prop}(\Sigma')}$ to the signature $\Sigma$.
\end{enumerate}

\end{example}

The notion of effective satisfiability subcalculus provides the theoretical foundations for those analysis commands relying on model finding, much in the same way that proof calculus provide theoretical support for the proof commands; the combination results from applying Coro.~\ref{coro-soundness}. Applying such a corollary requires the language to have negation, in the sense of Def.~\ref{have-negation} which, even when it might be satisfied by the institution underlying the effective satisfiability subcalculus, there is no guarantee that it is satisfied by the restriction of the language for which the subcalculus is effective.\\

\begin{proposition}\ \\
Let the structure $\< \mathsf{Sign}, \mathbf{Sen}, \mathbf{Mod},  \mathsf{Sign}_0, \mathbf{Sen}_0, \mathbf{ax}, \{\models^{\Sigma}\}_{\Sigma \in |\mathsf{Sign}|}, \mathbf{M}, \mathbf{Mods}, \mu \>$ be an effective satisfiability subcalculus, then $\< \mathsf{Sign}_0, \mathbf{Sen}_0, \mathbf{Mod}_0, \{{\models_0}^{\Sigma}\}_{\Sigma \in |\mathsf{Sign}_0|}\>$ is an institution, where $\mathbf{Mod}_0$ stands for the functor $\mathbf{Mod}$ restricted to the subcategory ${\mathsf{Sign}_0}^\op$ (usually denoted as ${\mathbf{Mod}}\left|\right._{{\mathsf{Sign}_0}^\op}$) and ${\models_0}^\Sigma$ for the restriction of $\models^\Sigma$ to the models in $\mathbf{Mod}_0 (\Sigma)$ and the formulae in $\mathbf{Sen}_0 (\Sigma)$ (usually denoted as ${\models_0}^\Sigma = {\models^\Sigma}\left|\right._{\mathbf{Mod}_0 \times \mathbf{Sen}_0}$).
\end{proposition}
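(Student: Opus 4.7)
The plan is to verify, one at a time, each clause of the definition of an institution (Def.~\ref{institution}), showing that all the required data and conditions are inherited by restriction from the ambient institution $\langle \mathsf{Sign}, \mathbf{Sen}, \mathbf{Mod}, \{\models^\Sigma\}_{\Sigma \in |\mathsf{Sign}|}\rangle$. Since $\mathsf{Sign}_0$ is, by hypothesis, a subcategory of $\mathsf{Sign}$ witnessed by the inclusion functor $J : \mathsf{Sign}_0 \hookrightarrow \mathsf{Sign}$, it is automatically a category of signatures in its own right.

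Next, I would check the functoriality of the sentence and model components. For sentences, by the definition of effective satisfiability subcalculus we have $\mathcal{U} \circ \mathbf{Sen}_0 = \mathbf{Sen} \circ J$, so composing the given $\mathbf{Sen}_0 : \mathsf{Sign}_0 \to \mathsf{Space}$ with $\mathcal{U} : \mathsf{Space} \to \mathsf{Set}$ yields the required functor $\mathsf{Sign}_0 \to \mathsf{Set}$; abusing notation, this is the $\mathbf{Sen}_0$ in the statement. For models, $\mathbf{Mod}_0 = \mathbf{Mod} \circ J^{\op}$ is the composition of two functors and therefore a functor $\mathsf{Sign}_0^{\op} \to \mathsf{Cat}$. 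Regarding the satisfaction family, for each $\Sigma \in |\mathsf{Sign}_0| \subseteq |\mathsf{Sign}|$, the restriction $\models_0^\Sigma = \models^\Sigma\!\restr{}{\mathbf{Mod}_0(\Sigma) \times \mathbf{Sen}_0(\Sigma)}$ is a well-defined binary relation on $|\mathbf{Mod}_0(\Sigma)| \times \mathbf{Sen}_0(\Sigma)$, because $|\mathbf{Mod}_0(\Sigma)| \subseteq |\mathbf{Mod}(\Sigma)|$ and $\mathcal{U}(\mathbf{Sen}_0(\Sigma)) = \mathbf{Sen}(\Sigma)$ coincide as carrier sets via the coherence equation $\mathcal{U} \circ \mathbf{Sen}_0 = \mathbf{Sen} \circ J$.

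The key verification is the $\models$-invariance condition. Take any morphism $\sigma : \Sigma \to \Sigma'$ in $\mathsf{Sign}_0$, any $\phi \in \mathbf{Sen}_0(\Sigma)$, and any $\mathcal{M}' \in |\mathbf{Mod}_0(\Sigma')|$. By the coherence conditions above, $J(\sigma) : \Sigma \to \Sigma'$ is a morphism in $\mathsf{Sign}$, $\phi$ is a sentence in $\mathbf{Sen}(\Sigma)$, and $\mathcal{M}'$ is a model in $|\mathbf{Mod}(\Sigma')|$. Applying the satisfaction condition of the ambient institution yields
$$\mathcal{M}' \models^{\Sigma'} \mathbf{Sen}(J(\sigma))(\phi) \iff \mathbf{Mod}(J(\sigma)^{\op})(\mathcal{M}') \models^{\Sigma} \phi.$$
Since $\mathbf{Sen}_0$, $\mathbf{Mod}_0$, and $\models_0^\Sigma$ are strict restrictions of their unsubscripted counterparts along $J$, substituting the restricted data yields the desired equivalence
$$\mathcal{M}' \models_0^{\Sigma'} \mathbf{Sen}_0(\sigma)(\phi) \iff \mathbf{Mod}_0(\sigma^{\op})(\mathcal{M}') \models_0^{\Sigma} \phi.$$

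I do not anticipate any real obstacle; the proof is essentially a bookkeeping exercise that the restriction of an institution along a subcategory of signatures that is closed under the relevant structure is again an institution. The only subtle point is to keep track of the interplay between $\mathbf{Sen}_0 : \mathsf{Sign}_0 \to \mathsf{Space}$ and the $\mathsf{Set}$-valued sentence functor required by Def.~\ref{institution}, which is handled by the coherence equation $\mathcal{U} \circ \mathbf{Sen}_0 = \mathbf{Sen} \circ J$ baked into the definition of effective satisfiability subcalculus.
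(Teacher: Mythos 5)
Your proposal is correct: the paper states this proposition without an explicit proof, treating it as immediate from Def.~\ref{eff-sat-subcalculus}, and your clause-by-clause verification---restricting $\mathbf{Sen}$, $\mathbf{Mod}$ and $\models$ along the inclusion $J$, with the coherence equation $\mathcal{U} \circ \mathbf{Sen}_0 = \mathbf{Sen} \circ J$ handling the $\mathsf{Space}$/$\mathsf{Set}$ mismatch---is exactly the routine argument the paper implicitly relies on. Your key step, deducing the $\models$-invariance condition for ${\models_0}$ from that of the ambient institution applied to $J(\sigma)$, is the intended one, so there is nothing to add beyond noting that for $\Sigma \in |\mathsf{Sign}_0|$ the restrictions are in fact equalities ($\mathbf{Mod}_0(\Sigma)=\mathbf{Mod}(\Sigma)$ and $\mathcal{U}(\mathbf{Sen}_0(\Sigma))=\mathbf{Sen}(\Sigma)$), which only makes the bookkeeping easier.
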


We already shown how Coro.~\ref{coro-soundness} can be used to conclude the unprovability of a given formula $\alpha$ from a theory presentation $T$, by using a satisfiability calculus in order to construct a structure $M \in |\mathbf{models}(T)|$ and then, having to find a model $\mathcal{M}$ such that $\mathcal{M} \models T$ and $\mathcal{M} \models \alpha$ does not hold which, under the presence of negation in the syntax of the logical language (see Def.~\ref{have-negation}), is equivalent to construct a model $\mathcal{M}$ such that $\mathcal{M} \models T$ and $\mathcal{M} \models \neg\alpha$. Effectiveness, as it was introduced by Def.~\ref{eff-sat-subcalculus}, states that $M$ is the result of a constructive push-button process but, for that to hold, the language of the effective satisfiability subcalculus must have negation (i.e. the institution $\< \mathsf{Sign}_0, \mathbf{Sen}_0, \mathbf{Mod}_0, \{{\models_0}^{\Sigma}\}_{\Sigma \in |\mathsf{Sign}_0|}\>$ has to have negation). 

\begin{corollary}
\label{coro-sub-soundness}
Let the structure $\< \mathsf{Sign}, \mathbf{Sen}, \mathbf{Mod}, \{\vdash^{\Sigma}\}_{\Sigma \in |\mathsf{Sign}|}, \{\models^{\Sigma}\}_{\Sigma \in |\mathsf{Sign}|} \>$ be a logic, the structure $\< \mathsf{Sign}, \mathbf{Sen}, \{\vdash^{\Sigma}\}_{\Sigma \in |\mathsf{Sign}|}, \mathbf{P}, \mathbf{Pr}, \pi \>$ be a proof calculus and the structure $\< \mathsf{Sign}, \mathbf{Sen}, \mathbf{Mod},  \mathsf{Sign}_0, \mathbf{Sen}_0, \mathbf{ax}, \{\models^{\Sigma}\}_{\Sigma \in |\mathsf{Sign}|}, \mathbf{M}, \mathbf{Mods}, \mu \>$ be an effective satisfiability subcalculus such that $\< \mathsf{Sign}_0, \mathbf{Sen}_0, \mathbf{Mod}_0, \{{\models_0}^{\Sigma}\}_{\Sigma \in |\mathsf{Sign}|}\>$ has negation. Therefore, for all $T = \< \Sigma, \Gamma \> \in |\mathsf{Th}_0|$ and $\alpha \in |\mathbf{Sen}_0 (\Sigma)|$, if there exists $M \in |\mathbf{models}(T)|$ and $\mathcal{M} \in |\mu_T (M)|$ such that $\mathcal{M} \models^\Sigma \neg\alpha$, then there is no $\tau \in |\mathbf{proof}(T)|$ such that $\pi_T(\tau) = \alpha$.
\end{corollary}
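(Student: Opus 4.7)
The plan is to reduce this statement to the already-established Corollary~\ref{coro-soundness} (equivalently, Coro.~\ref{pre-coro-soundness}) by observing that an effective satisfiability subcalculus, once the decidability wrapping is forgotten, gives rise to a satisfiability calculus to which the earlier soundness-based argument applies verbatim.

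First I would note that, by the last clause of Def.~\ref{eff-sat-subcalculus}, applying the forgetful functor $\mathcal{U}: \mathsf{Space} \to \mathsf{Set}$ to the effective subcalculus yields the satisfiability subcalculus
\[
\mathcal{U}(\mathbbm{Q}) = \< \mathsf{Sign}, \mathbf{Sen}, \mathbf{Mod},  \mathsf{Sign}_0, \mathcal{U} \circ \mathbf{ax},  \{\models^{\Sigma}\}_{\Sigma \in |\mathsf{Sign}|}, \mathbf{M}, \mathcal{U} \circ \mathbf{Mods}, \mu \circ \mathcal{U}\>.
\]
Since the hypothesis places $M$ in $|\mathbf{models}(T)|$, the theory $T$ must lie in the domain of $\mathbf{models}$, namely $\mathsf{Th}_{ax}^\op$, so in particular $\Sigma \in |\mathsf{Sign}_0|$ and $\Gamma \in \mathbf{ax}(\Sigma)$. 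Restricting attention to $\mathsf{Th}_{ax}$, the data $\<\mathbf{M}, \mathcal{U}\circ\mathbf{Mods}, \mu\circ\mathcal{U}\>$ satisfies all the conditions of Def.~\ref{sat-calculus} for the sub-institution indexed by $\mathsf{Sign}_0$, so that the machinery underlying Prop.~\ref{soundness-completeness} and Coro.~\ref{pre-coro-soundness} is applicable to the triple $(T,\alpha,\mathcal{M})$ at hand.

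Next, I would use the negation assumption on $\< \mathsf{Sign}_0, \mathbf{Sen}_0, \mathbf{Mod}_0, \{{\models_0}^{\Sigma}\}_{\Sigma \in |\mathsf{Sign}_0|}\>$. Since $\alpha \in \mathbf{Sen}_0(\Sigma)$ and $\Sigma \in |\mathsf{Sign}_0|$, by Def.~\ref{have-negation} applied to the restricted institution, the formula $\neg\alpha$ is a $\Sigma$-sentence whose satisfaction is the pointwise complement of that of $\alpha$ on models in $\mathbf{Mod}_0(\Sigma) = \mathbf{Mod}(\Sigma)$. Because $\mathcal{M} \in |\mu_T(M)|$ lives in the full subcategory $\mathbf{Mod}(T) \subseteq \mathbf{Mod}(\Sigma)$, the hypothesis $\mathcal{M} \models^\Sigma \neg\alpha$ translates immediately into the assertion that $\mathcal{M} \models^\Sigma \alpha$ does \emph{not} hold.

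Finally, I would invoke Coro.~\ref{pre-coro-soundness} on the logic, its proof calculus, and the satisfiability (sub)calculus $\mathcal{U}(\mathbbm{Q})$ restricted to $\mathsf{Th}_{ax}$: the existence of $M \in |\mathbf{models}(T)|$ and $\mathcal{M} \in |\mu_T(M)|$ with $\mathcal{M} \not\models^\Sigma \alpha$ rules out the existence of $\tau \in |\mathbf{proof}(T)|$ with $\pi_T(\tau) = \alpha$, which is precisely the conclusion sought. The only delicate point — what I expect to be the main obstacle — is the bookkeeping of the two sources of restriction: admissibility of theories (moving from $\mathsf{Th}_0$ to $\mathsf{Th}_{ax}$) and the restriction of the language needed to apply the negation hypothesis (ensuring $\alpha \in \mathbf{Sen}_0(\Sigma)$ is genuinely handled by the sub-institution, rather than only by the ambient institution $\< \mathsf{Sign}, \mathbf{Sen}, \mathbf{Mod}, \{\models^\Sigma\}\>$). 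Once those two compatibilities are verified, the corollary follows by a direct chaining of Def.~\ref{have-negation} and Coro.~\ref{pre-coro-soundness}.
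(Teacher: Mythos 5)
Your proposal is correct and follows essentially the route the paper intends: the paper derives Coro.~\ref{coro-sub-soundness} by applying Coro.~\ref{coro-soundness} (hence Coro.~\ref{pre-coro-soundness} and the soundness half of Prop.~\ref{soundness-completeness}) to the satisfiability subcalculus underlying the effective one, with the negation hypothesis placed on the restricted institution $\< \mathsf{Sign}_0, \mathbf{Sen}_0, \mathbf{Mod}_0, \{{\models_0}^{\Sigma}\}_{\Sigma \in |\mathsf{Sign}_0|}\>$ exactly as you do. Your extra bookkeeping about $\mathsf{Th}_{ax}$ and the forgetful functor $\mathcal{U}$ is consistent with the paper's definitions and does not change the argument.
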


As we mentioned at the beginning of this section, our focus is in combining automatic model finding capabilities as part of the process of analysis of software descriptions. While Def.~\ref{eff-sat-subcalculus} provides the formal framework for describing such effective procedures and, as we said before, Coro.~\ref{coro-sub-soundness} expose the link between the result of such procedures and the unprovability of a formula, there is still a missing piece in the puzzle in order to guarantee automatic counterexample finding. 

Corollary~\ref{coro-sub-soundness} properly connects the existence of a model satisfying the negation of a formula with the impossibility of proving the formula itself, but it provides no guide on how to use the effective procedure for building a model structure $M \in |\mathbf{models}(T)|$ for finding a model $\mathcal{M} \in |\mu_T (M)|$ such that $\mathcal{M} \models^\Sigma \neg\alpha$. To solve that, $\neg\alpha$ has to be involved in the process of building $M$ in a way that every model $\mathcal{M} \in |\mu_T (M)|$ satisfies $\neg\alpha$. In general, the mechanism used to internalise $\neg\alpha$ in the process for building models for $T$ strongly depend on the specific procedure for building model structures and, under certain conditions, it can be done by simply adding $\neg\alpha$ to the axioms in $T$. Such a condition is know, analogous to what we did for requiring an institution to have negation, as to ``have conjunction'', formally defined below.

\begin{definition}
\label{have-conjunction}
Given a institution $\mathbb{I}$, we say that \emph{$\mathbb{I}$ has conjunction} if for every signature $\Sigma \in |\mathsf{Sign}|$, $\{\alpha_i\}_{i \in \mathcal{I}} \subseteq \mathbf{Sen} (\Sigma)$, there exists a formula in $\mathbf{Sen} (\Sigma)$, usually denoted as $\bigwedge_{i \in \mathcal{I}} \alpha_i$, such that for all $\mathcal{M} \in \mathbf{Mod} (\Sigma)$,
$$\mathcal{M} \models^\Sigma \bigwedge_{i \in \mathcal{I}} \alpha_i \text{ iff } \mathcal{M} \models^\Sigma \alpha_i \text{, for all } i \in \mathcal{I}$$
\end{definition}

\begin{remark}
Notice that given a institution $\mathbb{I}$, $\Sigma \in |\mathsf{Sign}|$ and $\Gamma \subseteq \mathbf{Sen} (\Sigma)$, for all $\mathcal{M} \in \mathbf{Mod} (\Sigma)$,
$$\mathcal{M} \models^\Sigma \Gamma \text{ iff } \mathcal{M} \models^\Sigma \bigwedge_{\alpha \in \Gamma} \alpha$$
\end{remark}

Therefore, Coro.~\ref{coro-sub-soundness} can be stated as follows.

\begin{corollary}
\label{coro-sub-soundness-bmc}
Let the structure $\< \mathsf{Sign}, \mathbf{Sen}, \mathbf{Mod}, \{\vdash^{\Sigma}\}_{\Sigma \in |\mathsf{Sign}|}, \{\models^{\Sigma}\}_{\Sigma \in |\mathsf{Sign}|} \>$ be a logic, the structure $\< \mathsf{Sign}, \mathbf{Sen}, \{\vdash^{\Sigma}\}_{\Sigma \in |\mathsf{Sign}|}, \mathbf{P}, \mathbf{Pr}, \pi \>$ be a proof calculus and the structure $\< \mathsf{Sign}, \mathbf{Sen}, \mathbf{Mod},  \mathsf{Sign}_0, \mathbf{Sen}_0, \mathbf{ax}, \{\models^{\Sigma}\}_{\Sigma \in |\mathsf{Sign}|}, \mathbf{M}, \mathbf{Mods}, \mu \>$ be an effective satisfiability subcalculus such that $\< \mathsf{Sign}_0, \mathbf{Sen}_0, \mathbf{Mod}_0, \{{\models_0}^{\Sigma}\}_{\Sigma \in |\mathsf{Sign}_0|}\>$ has negation and conjunction. Therefore, for all $T = \< \Sigma, \Gamma \> \in |\mathsf{Th}_0|$ and $\alpha \in |\mathbf{Sen}_0 (\Sigma)|$, if there exists $M \in |\mathbf{models}(\< \Sigma, \Gamma \cup \{\neg\alpha\} \>)|$, then there is no $\tau \in |\mathbf{proof}(T)|$ such that $\pi_T(\tau) = \alpha$.
\end{corollary}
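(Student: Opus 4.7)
The plan is to reduce the statement to Corollary~\ref{coro-sub-soundness} by manufacturing, out of the given witness $M \in |\mathbf{models}(\<\Sigma, \Gamma \cup \{\neg\alpha\}\>)|$, a pair of the form $(M_0, \mathcal{M})$ with $M_0 \in |\mathbf{models}(T)|$ and $\mathcal{M} \in |\mu_T(M_0)|$ such that $\mathcal{M} \models^\Sigma \neg\alpha$. This matches exactly the antecedent of Coro.~\ref{coro-sub-soundness}, so once this pair is in hand, the conclusion about the absence of $\tau \in |\mathbf{proof}(T)|$ with $\pi_T(\tau) = \alpha$ follows at once.

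First, I would unpack the content of $M$. By the defining property of $\mu$ in Def.~\ref{eff-sat-subcalculus}, the image of $\mu_{\<\Sigma, \Gamma \cup \{\neg\alpha\}\>}(M)$ is a subcategory of $\mathbf{Mod}(\<\Sigma, \Gamma \cup \{\neg\alpha\}\>)$, so any $\mathcal{M}$ in that image satisfies all axioms in the extended theory; the conjunction hypothesis on $\<\mathsf{Sign}_0, \mathbf{Sen}_0, \mathbf{Mod}_0, \{{\models_0}^\Sigma\}\>$, together with the remark following Def.~\ref{have-conjunction}, secures that this is exactly the same as requiring $\mathcal{M} \models^\Sigma \gamma$ for every $\gamma \in \Gamma$ together with $\mathcal{M} \models^\Sigma \neg\alpha$, rather than some looser simultaneous constraint. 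The negation hypothesis is used here as well, to ensure that $\neg\alpha \in \mathbf{Sen}_0(\Sigma)$ whenever $\alpha$ is, so $\Gamma \cup \{\neg\alpha\}$ is a bona fide element of $\mathbf{ax}(\Sigma)$ and the enlarged theory presentation lives in $\mathsf{Th}_{ax}$.

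Second, I would transport $M$ back to $\mathbf{models}(T)$ via the obvious inclusion morphism. Since $\Gamma \subseteq \Gamma \cup \{\neg\alpha\}$, the identity on $\Sigma$ lifts to an axiom-preserving theory morphism $\iota: T \to \<\Sigma, \Gamma \cup \{\neg\alpha\}\>$ in $\mathsf{Th}_{ax}$. Applying the functor $\mathbf{models} = \mathbf{Mods} \circ \mathbf{M}$ to $\iota^\op$ yields $M_0 := \mathbf{models}(\iota^\op)(M) \in |\mathbf{models}(T)|$, and naturality of $\mu$ gives
$$
\mu_T(M_0) \;=\; \mathscr{P}(\mathbf{Mod}(\iota^\op))\bigl(\mu_{\<\Sigma, \Gamma \cup \{\neg\alpha\}\>}(M)\bigr).
$$
Because $\iota$ is the identity at the level of signatures, $\mathbf{Mod}(\iota^\op)$ is simply the inclusion of the subcategory of $\Sigma$-models of $\Gamma \cup \{\neg\alpha\}$ into the subcategory of $\Sigma$-models of $\Gamma$; in particular it preserves the underlying $\Sigma$-structure and hence preserves the satisfaction $\mathcal{M} \models^\Sigma \neg\alpha$. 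So the $\mathcal{M}$ chosen above witnesses membership in $|\mu_T(M_0)|$ with the desired property, and Coro.~\ref{coro-sub-soundness} delivers the conclusion.

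The step I expect to require the most care is the one just described: confirming that the conjunction hypothesis really does play a non-trivial role and that it is not subsumed by the set-based treatment of axioms already built into $\mathsf{Th}_{ax}$. Concretely, one must verify that the semantics of $\<\Sigma, \Gamma \cup \{\neg\alpha\}\>$ inherited through $\mathbf{Mod}$ agrees with the pointwise reading required to apply Coro.~\ref{coro-sub-soundness}; once this is in place the remaining argument is a routine diagram-chase using functoriality of $\mathbf{models}$ and naturality of $\mu$.
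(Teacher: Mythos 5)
Your overall strategy---internalise $\neg\alpha$ into the axioms and then fall back on Coro.~\ref{coro-sub-soundness}---is the same one the paper gestures at, but the specific device you use to close the gap between the two corollaries does not go through as stated. You transport $M$ along $\mathbf{models}(\iota^\op)$ for the inclusion $\iota: T \to \<\Sigma, \Gamma \cup \{\neg\alpha\}\>$, invoking functoriality of $\mathbf{models}$ and naturality of $\mu$. But in an effective satisfiability subcalculus $\mathbf{models}$ and $\mu$ are only defined on ${\mathsf{Th}_{ax}}^\op$, so this step requires $T$ itself to be an object of $\mathsf{Th}_{ax}$ (i.e.\ $\Sigma \in |\mathsf{Sign}_0|$ and $\Gamma \in \mathbf{ax}(\Sigma)$) and $\iota$ to be one of its morphisms. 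The hypotheses grant neither: the statement only presupposes that the \emph{enlarged} presentation $\<\Sigma, \Gamma \cup \{\neg\alpha\}\>$ lies in the domain of $\mathbf{models}$, and since $\mathbf{ax}$ is merely an arbitrary subfunctor of $\mathscr{P}_{\mathit{fin}} \circ \mathbf{Sen}_0$, admissibility of $\Gamma \cup \{\neg\alpha\}$ does not imply admissibility of its subset $\Gamma$. So the object $M_0 := \mathbf{models}(\iota^\op)(M)$ that your reduction hinges on may simply not exist.

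The repair is to avoid the transport altogether, which is what the paper implicitly does: take any $\mathcal{M} \in |\mu_{T'}(M)|$ for $T' = \<\Sigma, \Gamma \cup \{\neg\alpha\}\>$; by the defining property of $\mu$ it lies in $|\mathbf{Mod}(T')|$, hence satisfies every $\gamma \in \Gamma$ and $\neg\alpha$, hence is an object of $|\mathbf{Mod}(T)|$ with $\mathcal{M} \not\models^\Sigma \alpha$ (here is where negation is used). Then $\Gamma \not\models^\Sigma \alpha$, and soundness of the logic together with the requirement that the image of $\pi_T$ is exactly $\Gamma^\bullet$ rules out any $\tau \in |\mathbf{proofs}(T)|$ with $\pi_T(\tau) = \alpha$; no appeal to $\mu_T$, and hence no admissibility of $\Gamma$, is needed. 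Two smaller remarks: your reading of the conjunction hypothesis is misplaced---satisfaction of a set of axioms is pointwise by the very definition of $\mathbf{Mod}$ on theory presentations, and conjunction is invoked by the paper only to justify that the effective model-building procedure can absorb the whole finite axiom set (cf.\ the Remark after Def.~\ref{have-conjunction}); and, like the paper itself, you silently assume that $|\mu_{T'}(M)|$ is non-empty, which is worth flagging since the abstract definition of $\mu$ does not literally force it.
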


So far, we managed to provide the means for understanding how an effective procedure for building models can be used for short-cutting a proving attempt by deducing that no such proof exists but such usage is confined to theories and propositions consisting exclusively of formulae from the sublanguage formalised as the institution $\< \mathsf{Sign}_0, \mathbf{Sen}_0, \mathbf{Mod}_0, \{{\models_0}^{\Sigma}\}_{\Sigma \in |\mathsf{Sign}_0|}\>$ for which such an effective procedure exists. We would like to get as much value as possible from the, usually big, effort invested in developing tools supporting specific effective satisfiability calculi. 

\sloppy
The typical way of doing so is as follows: given an institution of interest $\mathbbm{I} = \< \mathsf{Sign}, \mathbf{Sen}, \mathbf{Mod}, \{\models^{\Sigma}\}_{\Sigma \in |\mathsf{Sign}|}\>$ and an effective satisfiability subcalculus $\mathbbm{Q} = \< \mathsf{Sign}', \mathbf{Sen}', \mathbf{Mod}',  \mathsf{Sign}'_0, \mathbf{Sen}'_0, \mathbf{ax}', \{{\models'}^{\Sigma}\}_{\Sigma \in |\mathsf{Sign}'|}, \mathbf{M}', \mathbf{Mods}', \mu' \>$, we consider a semantics preserving translation from $\mathbbm{I}$ to $\< \mathsf{Sign}'_ 0, \mathbf{Sen}'_0, \mathbf{Mod}'_0, \{{\models'_ 0}^{\Sigma}\}_{\Sigma \in |\mathsf{Sign}'_0|} \>$, the institution underlying the decidable fragment of $\mathbbm{Q}$.

The usual notion of morphism between institutions consists of three arrows between the components of the source and target institutions:
\begin{inparaenum}[1)]
\item an arrow relating the categories of signatures,
\item an arrow relating the grammar functors, and
\item an arrow relating the models functors.
\end{inparaenum} 
Several notions of morphisms between institutions, and their properties, were investigated in \cite{goguen:jacm-39_1,meseguer:lc87,tarlecki:sadt-rtdts95}. More recently, in \cite{goguen:fac-13_3-5}, all these notions of morphism were scrutinised in more detail by observing how the direction of the arrows modify its interpretation. Institution comorphisms can be used as a vehicle for borrowing proofs along logic translation \cite{tarlecki:sadt-rtdts95}; for defining heterogeneous development environments for software specification and design \cite{mossakowski:wadt2008,lopezpombo:ictac14}, which provides the foundations of tools like HETS \cite{mossakowski:tacas07} and CafeOBJ \cite{diaconescu:tcs-285_2}; for providing structured specifications in general \cite{sannella:ic-76_2-3}, for providing semantics to specific features of formal languages \cite{castro:fac-27_5-6}; for defining proof systems for structured specifications \cite{borzyszkowski:tcs-286_2,mossakowski:fossacs14,lopezpombo:ocl17}; and for formalising data and specification refinements in \cite{tarlecki:ctcp86,castro:ictac10,castro:facs12}, just to give a few examples.

Let us start by recalling some notions about mapping entailment systems. In general, given two entailment systems $\mathbb{E} = \< \mathsf{Sign}, \mathbf{Sen}, \{\vdash^{\Sigma}\}_{\Sigma \in |\mathsf{Sign}|} \>$ and $\mathbb{E}' = \< \mathsf{Sign}', \mathbf{Sen}', \{{\vdash'}^{\Sigma}\}_{\Sigma \in |\mathsf{Sign}|} \>$, if we aim at mapping sentences in $\mathbb{E}$ to sentences in $\mathbb{E}'$ by using a natural transformation $\alpha: \mathbf{Sen} \nattrans \mathbf{Sen}' \circ \Phi$ in a way that provability is preserved (i.e. $\Gamma \vdash^\Sigma \phi$ if and only if $\alpha_\Sigma (\Gamma) {\vdash'}^{\Phi (\Sigma)} \alpha_\Sigma (\phi)$), a functor $\Phi: \mathsf{Sign} \to \mathsf{Sign}'$ will not be enough. The reason is that for preserving provability, the proof theoretic structure of $\mathbbm{E}'$ might require certain basic axioms for characterising the behaviour of its own logical symbols. Solving this issue requires resorting to a functor $\Phi: \mathsf{Sign} \to \mathsf{Th}'_0$ that, in turn, can be easily extended to a functor $\overline{\Phi}: \mathsf{Th}_0 \to \mathsf{Th}'_0$, the \emph{$\alpha$-extension to theories} of $\Phi$, as $\overline{\Phi} (\<\Sigma, \Gamma\>) = \<\Sigma', \overline{\Gamma'}\>$ such that $\Phi (\Sigma) = \<\Sigma', \Gamma'\>$ and $\overline{\Gamma'} = \Gamma' \cup \alpha_\Sigma (\Gamma)$. Notice also that the natural transformation $\alpha: \mathbf{Sen} \nattrans \mathbf{Sen}' \circ \Phi$ can be extended to $\overline{\alpha}: \mathbf{Sen} \nattrans \mathbf{Sen}' \circ \Phi$ between the functors $\mathbf{Sen}: \mathsf{Th}_0 \to \mathsf{Set}$ and $\mathbf{Sen}' \circ \overline{\Phi}: \mathsf{Th}'_0 \to \mathsf{Set}$ as $\alpha_{\<\Sigma, \Gamma\>} = \alpha_{\Sigma}$.

\begin{definition}[Sensibility and simplicity \cite{meseguer:lc87}]
\label{sensibility}
Let $\< \mathsf{Sign}, \mathbf{Sen}, \{\vdash^{\Sigma}\}_{\Sigma \in |\mathsf{Sign}|} \>$ and $\<\mathsf{Sign}', \mathbf{Sen}', \{{\vdash'}^{\Sigma}\}_{\Sigma \in |\mathsf{Sign}'|} \>$ be entailment systems, $\Phi: {\sf Th}_0 \to {\sf Th'}_0$ be a functor and $\alpha: \mathbf{Sen} \nattrans \mathbf{Sen}' \circ \Phi$ a natural transformation. $\Phi$ is said to be \emph{$\alpha$-sensible} if the following conditions are satisfied:
\begin{enumerate}
\item \label{alpha-cond-1} there is a functor $\Phi^\diamond: \mathsf{Sign} \to \mathsf{Sign}'$ such that ${\bf sign'} \circ \Phi = \Phi^\diamond \circ {\bf sign}$, where ${\bf sign}: \mathsf{Th} \to \mathsf{Sign}$ and ${\bf sign}': \mathsf{Th}' \to \mathsf{Sign}'$ are the forgetful functors from theory presentations to signatures, and
\item \label{alpha-cond-2} if $\<\Sigma, \Gamma\> \in |{\sf Th}|$ and $\<\Sigma', \Gamma'\> \in |{\sf Th'}|$ such that $\Phi (\<\Sigma, \Gamma\>) = \<\Sigma', \Gamma'\>$, then $(\Gamma')^\bullet = (\emptyset' \cup \alpha_\Sigma (\Gamma))^\bullet$, where $\emptyset' = \alpha_\Sigma (\emptyset)$\footnote{$\emptyset'$ is not necessarily the empty set of axioms. This fact will be clarified later on.}.
\end{enumerate}
$\Phi$ is said to be \emph{$\alpha$-simple} if, instead of satisfying $(\Gamma')^\bullet = (\emptyset' \cup \alpha_\Sigma (\Gamma))^\bullet$ in Condition~\ref{alpha-cond-2}, the stronger condition $\Gamma' = \emptyset' \cup \alpha_\Sigma (\Gamma)$ is satisfied. 
\end{definition}

It is straightforward to see, based on the monotonicity of $^\bullet$, that $\alpha$-simplicity implies $\alpha$-sensibility. An $\alpha$-sensible functor has the property that the associated natural transformation $\alpha$ depends only on signatures. This is a consequence of the following lemma.

\begin{lemma}[Lemma~22, \cite{meseguer:lc87}]\ \\
Let $\< \mathsf{Sign}, \mathbf{Sen}, \{\vdash^{\Sigma}\}_{\Sigma \in |\mathsf{Sign}|} \>$ and $\<\mathsf{Sign}', \mathbf{Sen}', \{{\vdash'}^{\Sigma}\}_{\Sigma \in |\mathsf{Sign}'|} \>$ be entailment systems and $\Phi: {\sf Th}_0 \to {\sf Th'}_0$ a functor satisfying Cond.~\ref{alpha-cond-1}~of~Def.~\ref{sensibility}. Then any natural transformation $\alpha: \mathbf{Sen} \nattrans \mathbf{Sen}' \circ \Phi$ can be obtained from a natural transformation $\alpha^\diamond: \mathbf{Sen}(\Sigma) \nattrans \mathbf{Sen}' \circ \Phi^\diamond$ by horizontally composing with the functor ${\bf sign}: {\sf Th} \to {\sf Sign}$.
\end{lemma}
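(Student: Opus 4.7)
The plan is to exploit the fact that the functor $\mathbf{Sen}: {\sf Th}_0 \to {\sf Set}$ factors through ${\bf sign}$, since the sentences of a theory depend only on its underlying signature; analogously, $\mathbf{Sen}'$ on ${\sf Th}'_0$ factors through ${\bf sign}'$. Combining this with Cond.~\ref{alpha-cond-1}~of~Def.~\ref{sensibility}, i.e., ${\bf sign}' \circ \Phi = \Phi^\diamond \circ {\bf sign}$, forces both the domain and the codomain of $\alpha$ to factor through ${\bf sign}$, so the goal reduces to showing that $\alpha$ itself factors correspondingly.

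First I would show that each component $\alpha_{\<\Sigma, \Gamma\>}$ depends only on $\Sigma$. For any $\<\Sigma, \Gamma\> \in |{\sf Th}_0|$ there is an axiom-preserving theory morphism $i_\Gamma: \<\Sigma, \emptyset\> \to \<\Sigma, \Gamma\>$ whose underlying signature morphism is $id_\Sigma$ (axiom preservation is automatic since $\emptyset \subseteq \Gamma$). Naturality of $\alpha$ at $i_\Gamma$ gives
$$\alpha_{\<\Sigma, \Gamma\>} \circ \mathbf{Sen}(i_\Gamma) = \mathbf{Sen}'(\Phi(i_\Gamma)) \circ \alpha_{\<\Sigma, \emptyset\>}.$$
Since ${\bf sign}(i_\Gamma) = id_\Sigma$, functoriality yields $\mathbf{Sen}(i_\Gamma) = id_{\mathbf{Sen}(\Sigma)}$; applying ${\bf sign}' \circ \Phi = \Phi^\diamond \circ {\bf sign}$ to $i_\Gamma$ shows that the underlying signature morphism of $\Phi(i_\Gamma)$ is $\Phi^\diamond(id_\Sigma) = id_{\Phi^\diamond(\Sigma)}$, so $\mathbf{Sen}'(\Phi(i_\Gamma)) = id$ as well. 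Hence $\alpha_{\<\Sigma, \Gamma\>} = \alpha_{\<\Sigma, \emptyset\>}$, confirming that the component is determined by $\Sigma$ alone.

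Next I would define $\alpha^\diamond_\Sigma := \alpha_{\<\Sigma, \emptyset\>}$ for each $\Sigma \in |{\sf Sign}|$. To check naturality of $\alpha^\diamond: \mathbf{Sen} \nattrans \mathbf{Sen}' \circ \Phi^\diamond$, for any $f: \Sigma \to \Sigma'$ in ${\sf Sign}$ I would lift $f$ to a morphism $\hat f: \<\Sigma, \emptyset\> \to \<\Sigma', \emptyset\>$ in ${\sf Th}_0$ (trivially axiom-preserving), and use naturality of $\alpha$ at $\hat f$ together with the identity $\mathbf{Sen}'(\Phi(\hat f)) = \mathbf{Sen}'(\Phi^\diamond(f))$ that follows from Cond.~\ref{alpha-cond-1}~of~Def.~\ref{sensibility}, producing the required naturality square for $\alpha^\diamond$. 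Finally, the horizontal composite $\alpha^\diamond \ast {\bf sign}$ has component at $\<\Sigma, \Gamma\>$ equal to $\alpha^\diamond_\Sigma = \alpha_{\<\Sigma, \emptyset\>}$, which by the previous step equals $\alpha_{\<\Sigma, \Gamma\>}$; hence $\alpha^\diamond \ast {\bf sign} = \alpha$.

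The main obstacle will be the second step: one must apply Cond.~\ref{alpha-cond-1}~of~Def.~\ref{sensibility} precisely to see that the image under $\Phi$ of an identity-on-signatures theory morphism is again identity-on-signatures, which is what annihilates both sides of the naturality square for $\alpha$ at $i_\Gamma$. Once this invariance is established, the well-definedness of $\alpha^\diamond$, its naturality, and the fact that horizontal composition with ${\bf sign}$ recovers $\alpha$ are essentially bookkeeping.
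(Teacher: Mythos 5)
Your proposal is correct and takes essentially the same route as the proof of the result cited here (the paper itself only quotes Meseguer's Lemma~22 without reproducing its proof): the crucial point, that a theory morphism whose underlying signature morphism is an identity is sent to an identity both by $\mathbf{Sen}$ and, via Cond.~\ref{alpha-cond-1} of Def.~\ref{sensibility}, by $\mathbf{Sen}' \circ \Phi$, so that $\alpha_{\langle\Sigma,\Gamma\rangle}=\alpha_{\langle\Sigma,\emptyset\rangle}$ and one may set $\alpha^\diamond_\Sigma := \alpha_{\langle\Sigma,\emptyset\rangle}$, is exactly the standard argument. The remaining checks (naturality of $\alpha^\diamond$ via the lift of signature morphisms to theories with empty axiom sets, and recovery of $\alpha$ by whiskering with ${\bf sign}$) are carried out correctly.
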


The next definition introduces institution comorphism.

\begin{definition}[Institution comorphism \cite{tarlecki:sadt-rtdts95}]\ \\
\label{inst-co-morphism}
Let $\mathbb{I}$ and $\mathbb{I}'$ be institutions. Then, $\< \rho^{Sign}, \rho^{Sen}, \rho^{Mod} \>: \mathbb{I} \longrightarrow \mathbb{I}'$ is an \emph{institution comorphism} if and only if:
\begin{itemize}
\item $\rho^{Sign}: \mathsf{Sign} \to \mathsf{Sign}'$ is a functor,
\item $\rho^{Sen}: \mathbf{Sen} \nattrans \mathbf{Sen}' \circ \rho^{Sign}$ is a natural transformation, (i.e., a natural family of functions $\rho^{Sen}_{\Sigma}: \mathbf{Sen}(\Sigma) \to \mathbf{Sen}'(\rho^{Sign}(\Sigma))$), such that for each $\Sigma_1, \Sigma_2 \in |\mathsf{Sign}|$ and $\sigma: \Sigma_1 \to \Sigma_2$ morphism in $\mathsf{Sign}$, the following diagram commutes
\unitlength1cm
\begin{center}
\begin{picture}(11,3)(0,-3)
\put(0,-3){}
\put(0.3, -0.8){$\mathbf{Sen}(\Sigma_2)$}
\put(0.8,-2.4){\vector(0,1){1.4}}
\put(-0.4,-1.8){$\mathbf{Sen}(\sigma)$}
\put(0.3,-2.8){$\mathbf{Sen}(\Sigma_1)$}
\put(1.7,-0.7){\vector(1,0){2.4}}
\put(2.6,-0.4){$\rho^{Sen}_{\Sigma_2}$}
\put(4.3,-0.8){$\mathbf{Sen}'(\rho^{Sign} (\Sigma_2))$}
\put(5.2,-2.4){\vector(0,1){1.4}}
\put(5.3,-1.8){$\mathbf{Sen}'(\rho^{Sign} (\sigma))$}
\put(4.3,-2.8){$\mathbf{Sen}'(\rho^{Sign} (\Sigma_1))$}
\put(1.7,-2.7){\vector(1,0){2.4}}
\put(2.6,-2.4){$\rho^{Sen}_{\Sigma_1}$}

\put(10, -0.8){$\Sigma_2$}
\put(10.2,-2.4){\vector(0,1){1.4}}
\put(10.4,-1.8){$\sigma$}
\put(10,-2.8){$\Sigma_1$}
\end{picture}
\end{center}
\item $\rho^{Mod}: \mathbf{Mod}' \circ {\rho^{Sign}}^\op \nattrans \mathbf{Mod}$\footnote{The functor ${\rho^{Sign}}^\op: {\mathsf{Sign}'}^\op \to \mathsf{Sign}^\op$ is the same as $\rho^{Sign}: \mathsf{Sign}' \to \mathsf{Sign}$ but considered between the opposite categories.} is a natural transformation, (i.e., the family of functors $\rho^{Mod}_{\Sigma}: \mathbf{Mod}'({\rho^{Sign}}^\op (\Sigma)) \to \mathbf{Mod}(\Sigma)$ is natural), such that for each $\Sigma_1, \Sigma_2 \in |\mathsf{Sign}|$ and $\sigma: \Sigma_1 \to \Sigma_2$ a morphism in $\mathsf{Sign}$, the following diagram commutes
\unitlength1cm
\begin{center}
\begin{picture}(11,3)(-1,-3)
\put(0,-3){}

\put(0.7, -0.7){$\mathbf{Mod}'({\rho^{Sign}}^\op (\Sigma_2))$}
\put(1.9,-0.95){\vector(0,-1){1.4}}
\put(-1.3,-1.8){$\mathbf{Mod}'({\rho^{Sign}}^\op (\sigma^\op))$}
\put(0.75,-2.8){$\mathbf{Mod}'({\rho^{Sign}}^\op (\Sigma_1))$}
\put(3.8,-0.65){\vector(1,0){2.4}}
\put(4.65,-0.4){$\rho^{Mod}_{\Sigma_2}$}
\put(6.35,-0.7){$\mathbf{Mod}(\Sigma_2)$}
\put(7.05,-0.95){\vector(0,-1){1.4}}
\put(7.15,-1.8){$\mathbf{Mod}(\sigma^\op)$}
\put(6.35,-2.8){$\mathbf{Mod}(\Sigma_1)$}
\put(3.8,-2.7){\vector(1,0){2.4}}
\put(4.65,-2.4){$\rho^{Mod}_{\Sigma_1}$}

\put(9.6, -0.8){$\Sigma_2$}
\put(9.8,-2.4){\vector(0,1){1.4}}
\put(10,-1.8){$\sigma$}
\put(9.6,-2.8){$\Sigma_1$}
\end{picture}
\end{center}
\end{itemize}
such that for any $\Sigma \in |\mathsf{Sign}|$,
the function $\rho^{Sen}_{\Sigma}: \mathbf{Sen}(\Sigma) \to
  \mathbf{Sen}'(\rho^{Sign}(\Sigma))$
and the functor
$\rho^{Mod}_{\Sigma}:
  \mathbf{Mod}'({\rho^{Sign}}^\op(\Sigma)) \to \mathbf{Mod}(\Sigma)$
preserves the following satisfaction condition: for any $\alpha \in \mathbf{Sen}(\Sigma)$ and
$\mathcal{M}' \in |\mathbf{Mod}'({\rho^{Sign}}^\op(\Sigma))|$,
$$
\mathcal{M}' {\models'}^{\rho^{Sign}(\Sigma)} \rho^{Sen}_{\Sigma}
  (\alpha)\ \mbox{ iff }\ \rho^{Mod}_{\Sigma}(\mathcal{M}')
  \models^{\Sigma} \alpha\ .
$$
\end{definition}

Intuitively, an institution comorphism $\rho: \mathbb{I} \longrightarrow \mathbb{I}'$ expresses how the satisfiability in a (potentially) less expressive institution $\mathbb{I}$ ca be encoded into a (potentially) more expressive institution $\mathbb{I}'$. In \cite[Sec.~5.2]{tarlecki:sadt-rtdts95}, Tarlecki shows that institution comorphisms compose in a rather obvious component wise way, and that it can be proved that institutions together with institution comorphisms form a category, denoted as $\mathsf{coIns}$.


The following results, presented in \cite{tarlecki:sadt-rtdts95}, characterise the relation between the satisfaction relation of $\mathbb{I}$ and $\mathbb{I}'$, in the presence of an institution comorphism.

\begin{proposition}[Preservation of consequence \cite{tarlecki:sadt-rtdts95}] 
\label{preservation}
Let  $\mathbb{I}$ and $\mathbb{I}'$ be the institutions $\< \mathsf{Sign}, \mathbf{Sen}, \mathbf{Mod}, \{\models^{\Sigma}\}_{\Sigma \in |\mathsf{Sign}|}  \>$ and $\< \mathsf{Sign}', \mathbf{Sen}', \mathbf{Mod}', \{{\models'}^{\Sigma}\}_{\Sigma \in |\mathsf{Sign}'|}  \>$, respectively. Let $\rho: \mathbb{I} \to \mathbb{I}'$ be an institution comorphism. Then, for all $\Sigma \in |\mathsf{Sign}|$, $\Gamma \subseteq \mathbf{Sen}(\Sigma)$ and $\varphi \in \mathbf{Sen} (\Sigma)$, if $\Gamma \models^\Sigma \varphi$, then $\rho^{Sen}_\Sigma (\Gamma) {\models'}^{\rho^{Sign} (\Sigma)} \rho^{Sen}_\Sigma (\varphi)$.
\end{proposition}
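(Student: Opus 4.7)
The plan is to unfold the definition of semantic consequence $\models^{\Sigma}$ (as spelled out right after Def.~\ref{institution}) and then apply the satisfaction condition of the comorphism $\rho$ twice: once to translate hypotheses from $\mathbb{I}'$ back to $\mathbb{I}$, and once to push the conclusion from $\mathbb{I}$ forward to $\mathbb{I}'$. No structural induction or categorical diagram chase is needed; the whole argument lives at the level of a single signature.

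More concretely, I would fix $\Sigma \in |\mathsf{Sign}|$, $\Gamma \subseteq \mathbf{Sen}(\Sigma)$ and $\varphi \in \mathbf{Sen}(\Sigma)$ with $\Gamma \models^\Sigma \varphi$, and take an arbitrary $\mathcal{M}' \in |\mathbf{Mod}'(\rho^{Sign}(\Sigma))|$ such that $\mathcal{M}' {\models'}^{\rho^{Sign}(\Sigma)} \rho^{Sen}_\Sigma(\gamma)$ for every $\gamma \in \Gamma$. By the satisfaction condition of Def.~\ref{inst-co-morphism}, this is equivalent to $\rho^{Mod}_\Sigma(\mathcal{M}') \models^\Sigma \gamma$ for every $\gamma \in \Gamma$, i.e.\ $\rho^{Mod}_\Sigma(\mathcal{M}') \in |\mathbf{Mod}(\langle \Sigma, \Gamma \rangle)|$. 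The hypothesis $\Gamma \models^\Sigma \varphi$ then yields $\rho^{Mod}_\Sigma(\mathcal{M}') \models^\Sigma \varphi$, and applying the satisfaction condition once more (now in the opposite direction) gives $\mathcal{M}' {\models'}^{\rho^{Sign}(\Sigma)} \rho^{Sen}_\Sigma(\varphi)$. Since $\mathcal{M}'$ was arbitrary, $\rho^{Sen}_\Sigma(\Gamma) {\models'}^{\rho^{Sign}(\Sigma)} \rho^{Sen}_\Sigma(\varphi)$.

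There is no genuine obstacle in this argument: the only subtlety worth pointing out is that the satisfaction condition as stated in Def.~\ref{inst-co-morphism} is a biconditional between $\mathcal{M}' {\models'}^{\rho^{Sign}(\Sigma)} \rho^{Sen}_\Sigma(\alpha)$ and $\rho^{Mod}_\Sigma(\mathcal{M}') \models^\Sigma \alpha$, so both directions are available for free and we need not worry about, e.g., surjectivity of the reduct functor $\rho^{Mod}_\Sigma$. In particular, this proof does not require completeness or any assumption on negation/conjunction; it is a purely model-theoretic transport of the statement ``every $\Sigma$-model of $\Gamma$ satisfies $\varphi$'' along the reduct functor $\rho^{Mod}_\Sigma$.
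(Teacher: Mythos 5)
Your argument is correct and is exactly the standard proof of this result: the paper itself gives no proof (it imports the proposition from Tarlecki by citation), and your two applications of the comorphism's satisfaction condition---pulling the translated hypotheses back along $\rho^{Mod}_\Sigma$ and pushing the conclusion forward---are precisely how the cited result is established. Your remark that no surjectivity of $\rho^{Mod}_\Sigma$ is needed is also apt; that stronger hypothesis only enters for the converse direction (reflection of consequence, Thm.~\ref{tarlecki-thm}, via the $\rho$-expansion property).
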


In some cases institution comorphisms can be extended to what Meseguer introduced under the name \emph{map of institutions} \cite[Def.~27]{meseguer:lc87}, and were more recently renamed by Goguen and Ro\c{s}u as \emph{theoroidal comorphism} \cite[Def.~5.3]{goguen:fac-13_3-5}, by reformulating the functor mapping signatures to act on theory presentations (i.e. $\rho^{Th}: \mathsf{Th} \to \mathsf{Th}'$). 

This extension can be done in many ways being one of them when $\rho^{Th}$ is $\rho^{Sen}$-sensible (see Def.~\ref{sensibility}) with respect to the \emph{entailment system} induced by the consequence relations of the institutions $\mathbb{I}$ and $\mathbb{I}'$. In this way, the natural transformation $\rho^{Sign}$ relating the categories of signatures of both logical languages, can be extended to $\rho^{Th}: \mathsf{Sign} \to \mathsf{Th}'$ enabling the restriction of the target class of models over which the consequence is preserved, and then, to a functor $\rho^{Th}: \mathsf{Th} \to \mathsf{Th}'$ defined in the following way: let $\Sigma \in |\mathsf{Sign}|$, $\Gamma \in |\mathbf{Sen}(\Sigma)|$, $\rho^{Th} (\<\Sigma, \Gamma\>) = \<\rho^{Sign} (\Sigma), \emptyset' \cup \rho^{Sen}_\Sigma (\Gamma)\>$. Additionally, given an institution comorphism $\< \rho^{Sign}, \rho^{Sen}, \rho^{Mod} \>: \mathbb{I} \longrightarrow \mathbb{I}'$ we say it is \emph{plain} if and only if $\rho^{Th}$ is $\rho^{Sen}$-plain, and similarly we say it is \emph{simple} if and only if $\rho^{Th}$ is $\rho^{Sen}$-simple.

The interested reader is pointed to \cite[Sec.~4]{meseguer:lc87}, \cite[Sec.~5]{tarlecki:sadt-rtdts95} and \cite[Sec.~5]{goguen:fac-13_3-5} for a thorough discussion on the matter.\\

Therefore, the combination of Prop.~\ref{preservation} and Coro.~\ref{coro-sub-soundness-bmc} results in the final statement of how soundness enables the use of an effective satisfiability subcalculus for a logical language, as a counterexample finder for a (potentially) different one.

\begin{corollary}
\label{coro-sub-soundness-bmc-het}
Let the structure $\< \mathsf{Sign}, \mathbf{Sen}, \mathbf{Mod}, \{\vdash^{\Sigma}\}_{\Sigma \in |\mathsf{Sign}|}, \{\models^{\Sigma}\}_{\Sigma \in |\mathsf{Sign}|} \>$ be a logic, the structure $\< \mathsf{Sign}, \mathbf{Sen}, \{\vdash^{\Sigma}\}_{\Sigma \in |\mathsf{Sign}|}, \mathbf{P}, \mathbf{Pr}, \pi \>$ be a proof calculus, the structure $\< \mathsf{Sign}', \mathbf{Sen}', \mathbf{Mod}',  \mathsf{Sign}'_0, \mathbf{Sen}'_0, \mathbf{ax}', \{{\models'}^{\Sigma}\}_{\Sigma \in |\mathsf{Sign}'|}, \mathbf{M}', \mathbf{Mods}', \mu' \>$ be an effective satisfiability subcalculus such that $\< \mathsf{Sign}'_0, \mathbf{Sen}'_0, \mathbf{Mod}'_0, \{{{\models'}_0}^{\Sigma}\}_{\Sigma \in |\mathsf{Sign}'_0|}\>$ has negation and conjunction and $\rho = \<\rho^{Sign}, \rho^{Sen}, \rho^{Mod}\>: \< \mathsf{Sign}, \mathbf{Sen}, \mathbf{Mod}, \{\models^{\Sigma}\}_{\Sigma \in |\mathsf{Sign}|} \> \to \< \mathsf{Sign}'_0, \mathbf{Sen}'_0, \mathbf{Mod}'_0, \{{{\models'}_0}^{\Sigma}\}_{\Sigma \in |\mathsf{Sign}'_0|}\>$ a comorphism between institutions, being $\rho^{Th}: \mathsf{Th}_0 \to \mathsf{Th}'_0$ the theoroidal extension of $\rho^{Sign}: \mathsf{Sign} \to \mathsf{Sign}'_0$. Therefore, for all $T = \< \Sigma, \Gamma \> \in |\mathsf{Th}_0|$ and $\alpha \in |\mathbf{Sen}_0 (\Sigma)|$, if there exists $M \in |\mathbf{models}'(\rho^{Th} (\< \Sigma, \Gamma \cup \{\neg\alpha\} \>))|$, then there is no $\tau \in |\mathbf{proof}(T)|$ such that $\pi_T(\tau) = \alpha$.
\end{corollary}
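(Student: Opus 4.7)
The plan is to chain the soundness of the source logic with the preservation-of-consequence property of the comorphism (Prop.~\ref{preservation}), then reduce to the same negation argument used for Coro.~\ref{coro-sub-soundness-bmc}, but this time carried out in the target institution where negation and conjunction are available by hypothesis.

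First I would argue by contraposition. Suppose some $\tau \in |\mathbf{proof}(T)|$ satisfies $\pi_{T}(\tau) = \alpha$. Since $T = \<\Sigma,\Gamma\> \in |\mathsf{Th}_0|$ and $\pi_T$ projects into $\Gamma^\bullet$ by Def.~\ref{proof-calculus}, this yields $\Gamma \vdash^{\Sigma} \alpha$; soundness of the logic (Def.~\ref{logic}) then promotes this to $\Gamma \models^{\Sigma} \alpha$. I would then transport the semantic consequence across $\rho$ using Prop.~\ref{preservation}, obtaining
$$\rho^{Sen}_{\Sigma}(\Gamma) \; {\models'_0}^{\rho^{Sign}(\Sigma)} \; \rho^{Sen}_{\Sigma}(\alpha).$$

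Next I would unpack the hypothesised $M \in |\mathbf{models}'(\rho^{Th}(\<\Sigma, \Gamma \cup \{\neg\alpha\}\>))|$. By Def.~\ref{eff-sat-subcalculus}, $\mu'$ sends $M$ into a nonempty subcategory of $\mathbf{Mod}'_0(\rho^{Sign}(\Sigma))$; pick $\mathcal{M}'$ in it. By construction of the theoroidal extension (given just before Def.~\ref{inst-co-morphism}), the axiom set of $\rho^{Th}(\<\Sigma, \Gamma \cup \{\neg\alpha\}\>)$ contains $\rho^{Sen}_{\Sigma}(\Gamma) \cup \rho^{Sen}_{\Sigma}(\{\neg\alpha\})$, so $\mathcal{M}'$ satisfies every sentence of both sets. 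Combined with the previously obtained consequence, this forces $\mathcal{M}' \, {\models'_0} \, \rho^{Sen}_{\Sigma}(\alpha)$ while also $\mathcal{M}' \, {\models'_0} \, \rho^{Sen}_{\Sigma}(\neg\alpha)$, which together with Def.~\ref{have-negation} in the target would yield the desired contradiction.

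The main obstacle is that last step, because the comorphism is not \emph{a priori} required to commute with the connective $\neg$: $\rho^{Sen}_{\Sigma}(\neg\alpha)$ need not literally be the target's $\neg\rho^{Sen}_{\Sigma}(\alpha)$. To bridge this gap I would invoke the satisfaction condition of Def.~\ref{inst-co-morphism} twice, translating $\mathcal{M}' \models' \rho^{Sen}_{\Sigma}(\alpha)$ and $\mathcal{M}' \models' \rho^{Sen}_{\Sigma}(\neg\alpha)$ into statements about $\rho^{Mod}_{\Sigma}(\mathcal{M}') \in |\mathbf{Mod}(\Sigma)|$ on the source side, so that the $\neg$ appearing in $\{\neg\alpha\}$ is interpreted in the source institution (whose availability of negation is implicit in the well-formedness of the statement). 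Applying Def.~\ref{have-negation} on the source to the single model $\rho^{Mod}_{\Sigma}(\mathcal{M}')$ then produces the contradiction, closing the contrapositive and establishing the corollary.
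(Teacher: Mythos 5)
Your proposal is correct and follows essentially the paper's intended argument: the paper obtains Coro.~\ref{coro-sub-soundness-bmc-het} precisely as ``the combination of Prop.~\ref{preservation} and Coro.~\ref{coro-sub-soundness-bmc}'', i.e.\ soundness of the source proof calculus, preservation of consequence along $\rho$, and the fact that a canonical model of $\rho^{Th}(\<\Sigma,\Gamma\cup\{\neg\alpha\}\>)$ satisfies both $\rho^{Sen}_\Sigma(\Gamma)$ and $\rho^{Sen}_\Sigma(\neg\alpha)$, which is exactly what you unfold by contraposition. Your additional observation that $\rho^{Sen}_\Sigma(\neg\alpha)$ need not literally be the target negation of $\rho^{Sen}_\Sigma(\alpha)$, and your resolution via two applications of the satisfaction condition so that the contradiction is produced at $\rho^{Mod}_\Sigma(\mathcal{M}')$ in the source, is a detail the paper leaves implicit and is handled correctly.
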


Corollary~\ref{coro-sub-soundness-bmc-het} provides the formal support for using a model finder to look for counterexamples of judgements of the shape $\Gamma \vdash^\Sigma \alpha$ at any point of a proof, enabling the capability to check whether the application of a given proof command has modified the provability of $\alpha$ from $\Gamma$.

The reader should note that Coro.~\ref{coro-sub-soundness-bmc-het} only guaranties that, given a theory presentation $T = \<\Sigma, \Gamma\>$ and a formula $\alpha$, whenever a model structure $M$ is built for $\rho^{Th} (\<\Sigma, \Gamma \cup \{\neg\alpha\}\>)$ using an effective satisfiability subcalculus, we can conclude the impossibility of constructing a proof structure for $\alpha$ from $\Gamma$; but not being able to do so does not provide any solid information about its existence, because there could still be a model in the source institution satisfying the axioms in $T$ and not satisfying $\alpha$. That, of course, does not invalidate the confidence gained in the validity of the judgement $\gamma \models^\Sigma \alpha$, by not founding a counterexample.\\

Depending on how precise the natural transformation $\rho^{Mod}$ is, regarding the representation of the target category of models over the source one, we can derive stronger conclusion.

\begin{definition}[Model expansion over model translation]
\label{rho-expansion}
Let  $\mathbb{I}$ and $\mathbb{I}'$ be the institutions $\< \mathsf{Sign}, \mathbf{Sen}, \mathbf{Mod}, \{\models_{\Sigma}\}_{\Sigma \in |\mathsf{Sign}|}  \>$ and $\< \mathsf{Sign}', \mathbf{Sen}', \mathbf{Mod}', \{{\models'}^{\Sigma}\}_{\Sigma \in |\mathsf{Sign}'|}  \>$, respectively. Let $\rho: \mathbb{I} \to \mathbb{I}'$ be an institution comorphism. Then, $\mathbb{I}$ has the \emph{$\rho$-expansion property} if for all $\<\Sigma, \Gamma\> \in |\mathsf{Th}^\mathbb{I}_0|$, $\mathcal{M} \in |\mathbf{Mod}^\mathbb{I}(\<\Sigma, \Gamma\>)|$, there exists $\mathcal{M}' \in |\mathbf{Mod}^{\mathbb{I}'} (\rho^{Th} (\< \Sigma, \Gamma\>))|$ such that $\mathcal{M} = \rho^{Mod} (\mathcal{M}')$.
\end{definition}

The intuition behind the previous definition is that, given a theoroidal institution comorphism $\rho: \mathbbm{I} \to \mathbbm{I}'$, $\mathbbm{I}$ has the $\rho$-expansion property whenever every $\mathbbm{I}$-model is the target of some $\mathbbm{I}'$-models, in a way that the satisfaction of $\mathbbm{I}$-formulae is reflected in $\mathbbm{I}'$.

\begin{theorem}[Reflection of consequence \cite{tarlecki:sadt-rtdts95}]
\label{tarlecki-thm}
Let  $\mathbb{I}$ and $\mathbb{I}'$ be the institutions $\< \mathsf{Sign}, \mathbf{Sen}, \mathbf{Mod}, \{\models_{\Sigma}\}_{\Sigma \in |\mathsf{Sign}|}  \>$ and $\< \mathsf{Sign}', \mathbf{Sen}', \mathbf{Mod}', \{{\models'}^{\Sigma}\}_{\Sigma \in |\mathsf{Sign}'|}  \>$, respectively. Let $\rho: \mathbb{I} \to \mathbb{I}'$ be an institution comorphism. Then, for all $\Sigma \in |\mathsf{Sign}|$, $\Gamma \subseteq \mathbf{Sen}(\Sigma)$ and $\varphi \in \mathbf{Sen} (\Sigma)$, if every $\mathcal{M} \in \mathbf{Mod}(\<\Sigma, \Gamma\>)$ has the $\rho$-expansion property, then $\Gamma \models_\Sigma \varphi$ if and only if $\rho^{Sen}_\Sigma (\Gamma) \models_{\rho^{Sign} (\Sigma)} \rho^{Sen}_\Sigma (\varphi)$.
\end{theorem}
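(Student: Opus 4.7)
The plan is to prove the biconditional by handling its two directions independently. The forward direction $(\Rightarrow)$ follows immediately from \textbf{Prop.~\ref{preservation}} applied to the comorphism $\rho$, and in fact does not even require the $\rho$-expansion hypothesis: if $\Gamma \models^\Sigma \varphi$ holds, consequence preservation gives the conclusion verbatim.

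For the backward direction $(\Leftarrow)$, I would assume $\rho^{Sen}_\Sigma(\Gamma) \models^{\rho^{Sign}(\Sigma)} \rho^{Sen}_\Sigma(\varphi)$ and fix an arbitrary $\mathcal{M} \in |\mathbf{Mod}(\< \Sigma, \Gamma \>)|$; the goal is to show $\mathcal{M} \models^\Sigma \varphi$. By the $\rho$-expansion hypothesis applied to $\<\Sigma, \Gamma\>$, I choose some $\mathcal{M}' \in |\mathbf{Mod}'(\rho^{Th}(\<\Sigma, \Gamma\>))|$ with $\rho^{Mod}(\mathcal{M}') = \mathcal{M}$. Since $\rho^{Th}(\<\Sigma, \Gamma\>) = \<\rho^{Sign}(\Sigma), \emptyset' \cup \rho^{Sen}_\Sigma(\Gamma)\>$ by the definition of the theoroidal extension, $\mathcal{M}'$ satisfies, in particular, every sentence in $\rho^{Sen}_\Sigma(\Gamma)$. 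The hypothesis therefore yields $\mathcal{M}' \models^{\rho^{Sign}(\Sigma)} \rho^{Sen}_\Sigma(\varphi)$, and the satisfaction condition of the institution comorphism $\rho$ (Def.~\ref{inst-co-morphism}) transports this back as $\rho^{Mod}(\mathcal{M}') \models^\Sigma \varphi$, i.e. $\mathcal{M} \models^\Sigma \varphi$. Since $\mathcal{M}$ was arbitrary, $\Gamma \models^\Sigma \varphi$.

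The only delicate ingredient is the role played by the $\rho$-expansion property: it ensures that $\rho^{Mod}$ is, in effect, surjective onto $|\mathbf{Mod}(\<\Sigma, \Gamma\>)|$, so that universal consequence over the target class of models can be transported back to the source without leaving any $\Sigma$-model uncovered. Without this surjectivity, a $\Sigma$-model satisfying $\Gamma$ but not $\varphi$ could fail to arise as $\rho^{Mod}(\mathcal{M}')$ for any $\mathcal{M}'$, breaking the argument; this is precisely why the forward direction needs no extra assumption while the backward direction does.
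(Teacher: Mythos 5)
Your proof is correct: the forward direction is exactly Prop.~\ref{preservation}, and the backward direction correctly combines the $\rho$-expansion of an arbitrary model of $\<\Sigma,\Gamma\>$ with the satisfaction condition of Def.~\ref{inst-co-morphism} to pull the target-side consequence back to the source. The paper itself imports this result from \cite{tarlecki:sadt-rtdts95} without reproducing the proof, and your argument is essentially the standard one given there, including the observation that only the reflection direction needs the expansion (surjectivity) hypothesis.
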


Finally, Thm.~\ref{tarlecki-thm} states that, given a theoroidal institution comorphism $\rho: \mathbbm{I} \to \mathbbm{I}'$, such that $\mathbbm{I}$ has the $\rho$-expansion property, then an effective satisfiability subcalculus for $\mathbbm{I}'$ (i.e. $\mathbbm{I}'$ is the fragment of the language of the effective satisfiability subcalculus for which the procedure is indeed effective) constitutes a semantics-based decision procedure for $\mathbbm{I}$.


\section{Case-study: first-order predicate logic}
\label{case-study}

First-order predicate logic provides a well-known, and simple, example for showing the value of using theoroidal comorphisms as a tool for reusing tools designed for one language in the analysis of software specifications written in ``another''. Let us consider the effective satisfiability subcalculus, presented in Ex.~\ref{ex:eff-sat-subcalculus}, for the quantifier-free and ground fragment of $\mathit{FOL}$. Intuitively, the theories for which we are capable of constructing models for, are those that do not resort to any quantification and, of course, have no free variables (i.e. all terms appearing in the formulae are ground), but Coro.~\ref{coro-sub-soundness-bmc-het} provides the mathematical machinery needed for using off-the-shelve constraint-solving tools in order to enable counterexample searching capabilities.\\

The reader might note that such a semantics preserving translation contradicts the intuition mentioned after Def.~\ref{inst-co-morphism} where we slide the misleading, but widely spread, idea that the source logical system of an institution comorphism is less expressive than the target one. A more precise interpretation is to consider institution comorphisms as encodings that, in general, satisfy the intuition mentioned before but, under some specific conditions, might loose expressivity / representation capability, while, as we will show in the forthcoming section, still preserve the satisfaction condition. A more thorough study of the conditions under which this type of institution comorphisms can be defined will be left for further investigation.\\

\sloppy Let $\mathcal{X}$ be a set of first order variable, $\mathbbm{L}_\text{FOL} = \< \mathsf{Sign}, \mathbf{Sen}, \mathbf{Mod}, \{\vdash^{\Sigma}\}_{\Sigma \in |\mathsf{Sign}|}, \{\models^{\Sigma}\}_{\Sigma \in |\mathsf{Sign}|} \>$ be the structure formalising first-order predicate logic over $\mathcal{X}$ and $\mathbbm{P}_\text{FOL} = \< \mathsf{Sign}, \mathbf{Sen}, \{\vdash^{\Sigma}\}_{\Sigma \in |\mathsf{Sign}|}, \mathbf{P}, \mathbf{Pr}, \pi \>$ be a proof calculus for $\mathbbm{L}$.

\sloppy Let $\mathbbm{Q}_\text{Prop}$ be the effective satisfiability subcalculus $\< \mathsf{Sign}', \mathbf{Sen}', \mathbf{Mod}',  \mathsf{Sign}'_0, \mathbf{Sen}'_0, \mathbf{ax}', \{{\models'}^{\Sigma}\}_{\Sigma \in |\mathsf{Sign}'|}, \mathbf{M}', \mathbf{Mods}', \mu' \>$, where $\< \mathsf{Sign}', \mathbf{Sen}', \mathbf{Mod}', \{{\models'}^{\Sigma}\}_{\Sigma \in |\mathsf{Sign}'|} \>$ is the institution of propositional logic\footnote{Note that in this particular case $\< \mathsf{Sign}', \mathbf{Sen}', \mathbf{Mod}', \{{\models'}^{\Sigma}\}_{\Sigma \in |\mathsf{Sign}'|} \>$ and $\< \mathsf{Sign}'_0, \mathbf{Sen}'_0, \mathbf{Mod}'_0, \{{\models'}_0^{\Sigma}\}_{\Sigma \in |\mathsf{Sign}'_0|} \>$ can be assumed to be the same institution due to the fact that propositional logic is decidable.}. Then, we can define a theoroidal comorphism from the underlying institution of $\mathbbm{L}_\text{FOL}$ to the underlying institution of $\mathbbm{Q}_\text{Prop}$ by parameterising the translation of first-order formulae with a natural number used to enforce finite bounds to the domains of interpretation of first-order formulae.

\begin{definition} 
\label{gamma-sign}
$\gamma^{Sign}: \mathsf{Sign} \to \mathsf{Sign}'$ is defined as the functor such that:
\begin{itemize}  
\item for all $\Sigma \in |\mathsf{Sign}|$, $\gamma^{Sign}(\Sigma) = \{v_\mathit{p} | \mathit{p}$ is a ground atomic formula in $\mathbf{Sen} (\Sigma) \}$, and
\item for all $\sigma: \Sigma \to \Sigma' \in ||\mathsf{Sign}||$, $\gamma^{Sign}(\sigma) = \tau$, where $\tau(v_p) = v_{\mathbf{Sen} (\sigma)(p)}$.
\end{itemize}
\end{definition}

\begin{proposition}
    \label{gamma-sign-proof}
    $\gamma^{Sign}$ is a functor.
\end{proposition}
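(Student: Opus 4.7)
The plan is to verify the four standard functoriality conditions directly from the definition, exploiting the fact that $\mathbf{Sen}: \mathsf{Sign} \to \mathsf{Set}$ is already a functor for $\mathbb{I}_{FOL}$.

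First, I would check that $\gamma^{Sign}$ is well-defined on objects: for every $\Sigma \in |\mathsf{Sign}|$, the collection $\gamma^{Sign}(\Sigma) = \{v_p \mid p$ is a ground atomic $\Sigma$-formula$\}$ is a bona fide set of propositional variables, hence a legitimate object of $\mathsf{Sign}'$. Next, I would verify well-definedness on morphisms: given $\sigma: \Sigma \to \Sigma'$, I need to argue that whenever $p$ is a ground atomic $\Sigma$-formula, $\mathbf{Sen}(\sigma)(p)$ is a ground atomic $\Sigma'$-formula, so that $v_{\mathbf{Sen}(\sigma)(p)}$ really does lie in $\gamma^{Sign}(\Sigma')$. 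This reduces to the standard fact that signature morphisms in $\mathit{FOL}$ translate predicate symbols to predicate symbols and ground terms to ground terms, so atomicity and groundness of a formula are preserved under $\mathbf{Sen}(\sigma)$. The resulting map $\tau: \gamma^{Sign}(\Sigma) \to \gamma^{Sign}(\Sigma')$ is then a function between sets of propositional variables, i.e. a morphism of $\mathsf{Sign}'$.

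With well-definedness established, preservation of identities and composition follow directly from the functoriality of $\mathbf{Sen}$. For identities, $\gamma^{Sign}(\mathit{id}_\Sigma)$ sends $v_p$ to $v_{\mathbf{Sen}(\mathit{id}_\Sigma)(p)} = v_{\mathit{id}_{\mathbf{Sen}(\Sigma)}(p)} = v_p$, so $\gamma^{Sign}(\mathit{id}_\Sigma) = \mathit{id}_{\gamma^{Sign}(\Sigma)}$. For composition, given $\sigma: \Sigma \to \Sigma'$ and $\sigma': \Sigma' \to \Sigma''$, I compute
$$\gamma^{Sign}(\sigma' \circ \sigma)(v_p) = v_{\mathbf{Sen}(\sigma' \circ \sigma)(p)} = v_{\mathbf{Sen}(\sigma')(\mathbf{Sen}(\sigma)(p))} = \gamma^{Sign}(\sigma')(v_{\mathbf{Sen}(\sigma)(p)}) = (\gamma^{Sign}(\sigma') \circ \gamma^{Sign}(\sigma))(v_p),$$
where the second equality uses functoriality of $\mathbf{Sen}$.

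The only genuinely non-routine step is the well-definedness on morphisms, since everything else is a bureaucratic consequence of $\mathbf{Sen}$ being a functor. I expect this to be the main (and only) obstacle: it requires invoking the specific syntactic structure of $\mathit{FOL}$ signature morphisms to ensure that ground atomic formulae are stable under translation. Once that is granted, functoriality propagates transparently through the assignment $p \mapsto v_p$, which is essentially an isomorphism of sets between ground atomic $\Sigma$-formulae and $\gamma^{Sign}(\Sigma)$.
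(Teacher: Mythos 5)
Your proposal is correct and follows essentially the same route as the paper, which simply observes that $\gamma^{Sign}(\sigma)$ is defined via $\mathbf{Sen}(\sigma)$ and appeals to the functoriality of $\mathbf{Sen}$ for preservation of identities and composition. Your additional explicit check that $\mathbf{Sen}(\sigma)$ carries ground atomic formulae to ground atomic formulae (so $\gamma^{Sign}(\sigma)$ is well-defined) is a worthwhile detail the paper leaves implicit, but it does not change the argument.
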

\begin{proof}
Note that $\gamma^{Sign}(\sigma)$ is defined in terms of $\mathbf{Sen} (\sigma)$. Since $\mathbf{Sen}$ is a functor it can be proved that $\gamma^{Sign}$ preserves identities and composition.
\end{proof}

\begin{definition}
\label{gamma-sen}
Let $n \in \Nat$, $\Sigma = \<\{f_i\}_{i \in \mathcal{I}}, \{P_j\}_{j \in \mathcal{J}}\>$ and $\Sigma' = \<\{f_i\}_{i \in \mathcal{I}} \cup \{c_k\}_{1 \leq k \leq n}, \{P_j\}_{j \in \mathcal{J}}\>$ such that $\Sigma, \Sigma' \in |\mathsf{Sign}|$. 
\begin{enumerate}
\item First, we define $\mathit{Tr}^n_v: \mathbf{Sen} (\Sigma) \to \mathbf{Sen} (\Sigma')$ for mapping first-order logic sentences into quantifier-free and ground first-order logic sentences. This will be done by turning existencial quantifiers into finite disjunctions over $\Sigma'$ in a way that the fresh new constant symbols $\{c_k\}_{1 \leq k \leq n}$ play the role of the only elements in the domain of interpretation. 

Let $v: \mathcal{X} \to \{c_k\}_{1 \leq k \leq n}$ be a function mapping the first-order variable symbols in $\mathcal{X}$ to constant symbols in $\{c_k\}_{1 \leq k \leq n}$, then $\mathit{Tr}^n_v$ is defined as follows\footnote{Let $v: \mathcal{X} \to \{c_k\}_{1 \leq k \leq n}$ be a function mapping the first-order variable symbols in $\mathcal{X}$ to constant symbols in $\{c_k\}_{1 \leq k \leq n}$, then $v\{x \mapsto c_i\}(y) = \left\{
\begin{array}{lr} 
v(y) & \text{; if $x \not= y$.}\\
c_i & \text{; if $x=y$.}
\end{array}
\right.$.}:
$$
\begin{array}{rcl}
\mathit{Tr}^n_v (P(t_1, \dots ,t_k)) & = & P(\mathit{Tr}^n_v (t_1), \dots, \mathit{Tr}^n_v (t_k)) \text{ , for all $P \in \{P_j\}_{j \in \mathcal{J}}$}\\
\mathit{Tr}^n_v (\alpha \vee \beta) & = & \mathit{Tr}^n_v (\alpha) \vee \mathit{Tr}^n_v (\beta)\\
\mathit{Tr}^n_v (\neg \alpha) & = & \neg \mathit{Tr}^n_v (\alpha)\\
\mathit{Tr}^n_v ((\exists x) \alpha) & = & \bigvee_{i=1}^{n} \mathit{Tr}^n_{v\{x \mapsto c_i\}} (\alpha)
\end{array}
$$
$$
\begin{array}{rcl}
\mathit{Tr}^n_v (x) & = & v (x) \text{ , for all $x \in \mathcal{X}$}\\
\mathit{Tr}^n_v (f (t_1, \dots, t_k)) & = & f (\mathit{Tr}^n_v (t_1), \dots, \mathit{Tr}^n_v (t_k)) \text{ , for all $f \in \{f_i\}_{i \in \mathcal{I}}$}
\end{array}
$$
\item Second, we define a partial function $\mathit{Tr}_\mathit{Int}: \mathbf{Sen} (\Sigma') \to \mathbf{Sen} (\Sigma')$ mapping quantifier-free and ground first-order sentences by interpreting the terms as the fresh new constant symbols $\{c_k\}_{1 \leq k \leq n}$.

Let $\alpha \in \mathbf{Sen} (\Sigma')$ and $\mathit{Term} (\alpha)$ be the set of terms mentioned in $\alpha$, excluding those in $\{c_k\}_{1 \leq k \leq n}$, then $\mathit{Tr}_\mathit{Int}$ is defined as follows:
$$\mathit{Tr}_\mathit{Int} (\alpha) = \bigvee_{f \in [\mathit{Term} (\alpha) \to \{c_k\}_{1 \leq k \leq n}]} \left[{\mathit{Tr}_\mathit{Int}}_f (\alpha) \land \bigwedge_{t \in \mathit{Term} (\alpha)} t = f(t)\right]$$
$$
\begin{array}{rcl}
{\mathit{Tr}_\mathit{Int}}_f (P (t_1, \dots , t_k)) & = & P (f(t_1), \dots, f(t_k)) \text{, for all $P \in \{P_j\}_{j \in \mathcal{J}}$},\\
{\mathit{Tr}_\mathit{Int}}_f  (t = t') & = & f(t) = f(t'),\\
{\mathit{Tr}_\mathit{Int}}_f  (\neg \alpha ) & = & \neg {\mathit{Tr}_\mathit{Int}}_f (\alpha),\\
{\mathit{Tr}_\mathit{Int}}_f  (\alpha \vee \beta) & = & {\mathit{Tr}_\mathit{Int}}_f  (\alpha) \vee {\mathit{Tr}_\mathit{Int}}_f  (\beta).
\end{array}
$$
\item Third, we define a partial function $\mathit{Tr}_\mathit{Prop}: \mathbf{Sen} (\Sigma') \to \mathbf{Sen}' (\gamma^{Sign} (\Sigma'))$ for mapping quantifier-free and ground first-order sentences, whose only terms are in $\{c_k\}_{1 \leq k \leq n}$, to propositional sentences, as follows:
$$
\begin{array}{rcl}
\mathit{Tr}_\mathit{Prop} (P (c_1, \dots , c_k)) & = & v_{\mbox{``$P(c_1, \ldots, c_k)$''}} \text{, for all $P \in \{P_j\}_{j \in \mathcal{J}}$},\\
\mathit{Tr}_\mathit{Prop} (c = c') & = & v_{\mbox{``$c = c'$''}},\\
\mathit{Tr}_\mathit{Prop} (\neg \alpha ) & = & \neg \mathit{Tr}_\mathit{Prop} (\alpha),\\
\mathit{Tr}_\mathit{Prop} (\alpha \vee \beta) & = & \mathit{Tr}_\mathit{Prop} (\alpha) \vee \mathit{Tr}_\mathit{Prop} (\beta).
\end{array}
$$
\end{enumerate}
Then $\gamma^{Sen}_{\Sigma} : \mathbf{Sen} (\Sigma) \to \mathbf{Sen}' (\gamma^{Sign} (\Sigma'))$ is defined as follows:\\
$$\gamma^{Sen}_{\Sigma}(\alpha) = \mathit{Tr}_\mathit{Prop} \(\mathit{Tr}_\mathit{Int} \(\mathit{Tr}^n_\emptyset (\alpha)\)\)$$
\end{definition}

In the previous definition $\gamma^{Sen}_{\Sigma}$ eliminates quantifiers replacing them with finite disjunctions and produces all the possible interpretations of the terms over the domain $\{c_1, \ldots, c_n\}$. Then, the last stage in the translation, replaces all atomic formulae (in which there are no term other than those in $\{c_1, \ldots, c_n\}$) by propositional variables labeled with the atomic formula they represent.

\begin{proposition}
\label{gamma-sen-proof}
$\gamma^{Sen}$ is a natural family of functions.
\end{proposition}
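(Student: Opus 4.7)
The plan is to verify that $\gamma^{Sen} = \{\gamma^{Sen}_\Sigma\}_{\Sigma \in |\mathsf{Sign}|}$ is a natural transformation between the functors $\mathbf{Sen} : \mathsf{Sign} \to \mathsf{Set}$ and $\mathbf{Sen}' \circ \gamma^{Sign} : \mathsf{Sign} \to \mathsf{Set}$, by proving that for every morphism $\sigma: \Sigma_1 \to \Sigma_2$ in $\mathsf{Sign}$ the square
$$\gamma^{Sen}_{\Sigma_2} \circ \mathbf{Sen}(\sigma) \;=\; \mathbf{Sen}'(\gamma^{Sign}(\sigma)) \circ \gamma^{Sen}_{\Sigma_1}$$
commutes. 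Since $\gamma^{Sen}_\Sigma$ is defined as the composition $\mathit{Tr}_\mathit{Prop} \circ \mathit{Tr}_\mathit{Int} \circ \mathit{Tr}^n_\emptyset$, I would first isolate each of the three stages as a natural family in its own right and then conclude naturality of the composite by horizontal composition of natural transformations.

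First, I would show that $\{\mathit{Tr}^n_v\}$ is natural in $\Sigma$ (with $v$ fixed and ranging over maps $\mathcal{X} \to \{c_k\}_{1\leq k\leq n}$), viewed as a family of functions $\mathbf{Sen}(\Sigma) \to \mathbf{Sen}(\Sigma \sqcup \{c_k\}_{1\leq k\leq n})$. A signature morphism $\sigma: \Sigma_1 \to \Sigma_2$ extends canonically to a morphism $\sigma^+ : \Sigma_1 \sqcup \{c_k\} \to \Sigma_2 \sqcup \{c_k\}$ fixing the fresh constants, and the verification $\mathit{Tr}^n_v(\mathbf{Sen}(\sigma)(\alpha)) = \mathbf{Sen}(\sigma^+)(\mathit{Tr}^n_v(\alpha))$ goes by straightforward structural induction on $\alpha$: the function and predicate symbol cases are immediate because $\sigma$ acts on them, the variable case is trivial because $v$ depends only on $\mathcal{X}$ and not on the signature, the boolean connective cases are direct uses of the inductive hypothesis, and the existential case reduces to the inductive hypothesis applied under each $v\{x \mapsto c_i\}$, since the fresh constants are not in the image of $\sigma$.

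Second, I would verify naturality of $\mathit{Tr}_\mathit{Int}$. Here the key observation is that the set $\mathit{Term}(\alpha)$ is transported along $\sigma^+$ to $\mathit{Term}(\mathbf{Sen}(\sigma^+)(\alpha))$ in a bijective way modulo identification of symbols renamed by $\sigma$, so that the indexing disjunction over $[\mathit{Term}(\alpha) \to \{c_k\}]$ is preserved. An induction on $\alpha$, together with the fact that each ${\mathit{Tr}_\mathit{Int}}_f$ commutes with signature translation (since it just substitutes constants for terms), delivers the naturality square for this stage. Finally, $\mathit{Tr}_\mathit{Prop}$ is natural by construction: by Def.~\ref{gamma-sign}, $\gamma^{Sign}(\sigma^+)(v_p) = v_{\mathbf{Sen}(\sigma^+)(p)}$, and a simple structural induction on ground quantifier-free sentences propagates this identification through negation and disjunction.

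Composing the three stages in the order prescribed by Def.~\ref{gamma-sen} then yields the desired naturality square for $\gamma^{Sen}$. The main obstacle I anticipate is the bookkeeping in the existential case combined with the term-interpretation stage: one must check carefully that the finite disjunction expansion $\bigvee_{i=1}^n \mathit{Tr}^n_{v\{x\mapsto c_i\}}(\alpha)$ does not introduce asymmetries between $\Sigma_1$ and $\Sigma_2$, and that the sets $\mathit{Term}(\mathit{Tr}^n_\emptyset(\alpha))$ and $\mathit{Term}(\mathit{Tr}^n_\emptyset(\mathbf{Sen}(\sigma)(\alpha)))$ are in a correspondence compatible with the choice functions $f$ indexing $\mathit{Tr}_\mathit{Int}$. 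Once this combinatorial correspondence is spelled out, the proof collapses to three routine inductions.
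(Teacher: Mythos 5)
Your proposal takes essentially the same route as the paper: the paper's proof simply draws the naturality square for an arbitrary $\sigma: \Sigma_1 \to \Sigma_2$ and observes that translating a first-order formula along $\sigma$ and then mapping it to a propositional formula coincides with propositionalising first and then translating, which is exactly the identity you verify. Your write-up is in fact more detailed than the paper's one-line observation, since you factor $\gamma^{Sen}_\Sigma$ through its three defining stages ($\mathit{Tr}^n_\emptyset$, $\mathit{Tr}_\mathit{Int}$, $\mathit{Tr}_\mathit{Prop}$) and make explicit the extension of $\sigma$ to the fresh constants and the correspondence of the index sets in $\mathit{Tr}_\mathit{Int}$ --- bookkeeping the paper leaves implicit.
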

\begin{proof}
 The proof follows by first observing that given $\Sigma \in |\mathsf{Sign}|$, $\gamma^{Sen}_\Sigma$ (see Def.~\ref{gamma-sen}) is a function. Therefore, we need to prove that the following diagram commutes:
    \unitlength1cm
    \begin{center}
    \begin{picture}(11,3)(0,-3)
    \put(0,-3){}
    \put(0.3, -0.8){$\mathbf{Sen}(\Sigma_2)$}
    \put(0.8,-2.4){\vector(0,1){1.4}}
    \put(-0.4,-1.8){$\mathbf{Sen}(\sigma)$}
    \put(0.3,-2.8){$\mathbf{Sen}(\Sigma_1)$}
    \put(1.7,-0.7){\vector(1,0){2.4}}
    \put(2.6,-0.4){$\gamma^{Sen}_{\Sigma_2}$}
    \put(4.3,-0.8){$\mathbf{Sen}'(\gamma^{Sign} (\Sigma_2))$}
    \put(5.2,-2.4){\vector(0,1){1.4}}
    \put(5.3,-1.8){$\mathbf{Sen}'(\gamma^{Sign} (\sigma))$}
    \put(4.3,-2.8){$\mathbf{Sen}'(\gamma^{Sign} (\Sigma_1))$}
    \put(1.7,-2.7){\vector(1,0){2.4}}
    \put(2.6,-2.4){$\gamma^{Sen}_{\Sigma_1}$}
    
    \put(10, -0.8){$\Sigma_2$}
    \put(10.2,-2.4){\vector(0,1){1.4}}
    \put(10.4,-1.8){$\sigma$}
    \put(10,-2.8){$\Sigma_1$}
    \end{picture}
    \end{center}
where $\sigma : \Sigma_1 \to \Sigma_2 \in || \mathsf{Sign} ||$ is an homomorphism between two first-order signatures.
Note that:
 \begin{itemize}
     \item each $\gamma^{Sen}_{\Sigma}$ is the function mapping first-order $\Sigma$-formulae into propositional formulae,
     \item given $\sigma: \Sigma \to \Sigma' \in ||\mathsf{Sign}||$, $\mathbf{Sen}(\sigma)$ is a function mapping first-order $\Sigma$-formulae to first-order $\Sigma'$-formulae, and
     \item given $\sigma: \Sigma \to \Sigma' \in ||\mathsf{Sign}||$, $\mathbf{Sen}'(\gamma^{Sign} (\sigma))$ is a function mapping propositional $\gamma^{Sign} (\Sigma)$-formulae to propositional $\gamma^{Sign} (\Sigma')$-formulae.
 \end{itemize}
Then, the proof follows by observing that translating a first-order formulae across signatures and then mapping them to propositional formulae yields the same result as first mapping the first-order formula to a propositional one, and then translating the propositional formula.
\end{proof}

\begin{definition}
Let $n \in \Nat$, $\gamma^{Sign}: \mathsf{Sign} \to \mathsf{Sign}'$ be the functor of Def.~\ref{gamma-sign} and $\gamma^{Sen}: \mathbf{Sen} \to \mathbf{Sen}' \circ \gamma^{Sign}$ be the natural family of functions of Def.~\ref{gamma-sen}. Then, we define $\gamma^{\mathsf{Th}_0}: {\mathsf{Th}}_0 \to {\mathsf{Th}'}_0$ as:
$$\gamma^{\mathsf{Th}_0} (\<\Sigma, \Gamma\>) = \<\gamma^{Sign} (\Sigma), \{\gamma^{Sen}_\Sigma (\alpha) | \alpha \in \Gamma\}\>\ .$$
\end{definition}

\begin{proposition}
\label{gamma-th-proof}
    The functor $\gamma^{\mathsf{Th}_0}: {\mathsf{Th}}_0 \to {\mathsf{Th}'}_0$ is $\gamma^{Sen}$-sensible.
\end{proposition}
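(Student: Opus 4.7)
The plan is to unpack Definition~\ref{sensibility} and verify each of its two conditions for $\Phi = \gamma^{\mathsf{Th}_0}$ and $\alpha = \gamma^{Sen}$. Before checking either condition, one should confirm that $\gamma^{\mathsf{Th}_0}$ is indeed a well-defined functor $\mathsf{Th}_0 \to \mathsf{Th}'_0$: on morphisms, given an axiom-preserving $\sigma : \<\Sigma, \Gamma\> \to \<\Sigma_0, \Gamma_0\>$, set $\gamma^{\mathsf{Th}_0}(\sigma) = \gamma^{Sign}(\sigma)$, and check $\mathbf{Sen}'(\gamma^{Sign}(\sigma))(\gamma^{Sen}_\Sigma(\Gamma)) \subseteq \gamma^{Sen}_{\Sigma_0}(\Gamma_0)$. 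This uses the naturality of $\gamma^{Sen}$ (Prop.~\ref{gamma-sen-proof}) to rewrite the left-hand side as $\gamma^{Sen}_{\Sigma_0}(\mathbf{Sen}(\sigma)(\Gamma))$, and the inclusion then follows from $\mathbf{Sen}(\sigma)(\Gamma) \subseteq \Gamma_0$. Identities and composition are preserved because $\gamma^{Sign}$ is itself a functor (Prop.~\ref{gamma-sign-proof}).

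For Condition~\ref{alpha-cond-1} of Def.~\ref{sensibility}, take the witness $\Phi^\diamond$ to be $\gamma^{Sign}$ (already a functor by Prop.~\ref{gamma-sign-proof}). The equation $\mathbf{sign}' \circ \gamma^{\mathsf{Th}_0} = \gamma^{Sign} \circ \mathbf{sign}$ is immediate on objects, since by construction $\gamma^{\mathsf{Th}_0}(\<\Sigma, \Gamma\>) = \<\gamma^{Sign}(\Sigma), \gamma^{Sen}_\Sigma(\Gamma)\>$, whose first component is $\gamma^{Sign}(\Sigma) = \gamma^{Sign}(\mathbf{sign}(\<\Sigma, \Gamma\>))$; on morphisms, it is immediate from the definition $\gamma^{\mathsf{Th}_0}(\sigma) = \gamma^{Sign}(\sigma)$ introduced above.

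For Condition~\ref{alpha-cond-2}, I will in fact establish the stronger $\gamma^{Sen}$-simplicity, which implies $\gamma^{Sen}$-sensibility. Writing $\<\Sigma', \Gamma'\> = \gamma^{\mathsf{Th}_0}(\<\Sigma, \Gamma\>)$, the definition of $\gamma^{\mathsf{Th}_0}$ gives $\Gamma' = \gamma^{Sen}_\Sigma(\Gamma)$. The set $\emptyset'$ required by the definition is $\gamma^{Sen}_\Sigma(\emptyset)$, which equals $\emptyset$ because $\gamma^{Sen}_\Sigma$ is applied pointwise to the elements of the set of axioms. Hence $\emptyset' \cup \gamma^{Sen}_\Sigma(\Gamma) = \gamma^{Sen}_\Sigma(\Gamma) = \Gamma'$ on the nose, which is exactly $\gamma^{Sen}$-simplicity and a fortiori $\gamma^{Sen}$-sensibility.

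The verification is largely routine; the only place to be careful is the well-definedness on morphisms, which is where naturality of $\gamma^{Sen}$ does real work. The fact that the comorphism needs no auxiliary base axioms in the target (so $\emptyset' = \emptyset$) is what makes the closure condition collapse to a trivial set-theoretic identity rather than requiring an entailment-level argument.
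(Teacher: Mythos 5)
Your proof is correct and follows essentially the same route as the paper, which simply observes that $\gamma^{\mathsf{Th}_0}$ is built directly from $\gamma^{Sign}$ and $\gamma^{Sen}$ (so that Condition~\ref{alpha-cond-1} holds with $\Phi^\diamond = \gamma^{Sign}$ and Condition~\ref{alpha-cond-2} holds with $\emptyset' = \emptyset$); you merely spell out the details the paper leaves implicit, including the well-definedness on morphisms via naturality and the strengthening to $\gamma^{Sen}$-simplicity, which the paper already notes implies sensibility.
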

\begin{proof}
    It follows directly from the definition of $\gamma^{\mathsf{Th}_0}$ by observing it is explicitly constructed in terms of $\gamma^{Sign}$ and $\gamma^{Sen}$.
\end{proof}

The next definition provides the usual definition of model for propositional logic. A model is a function assigning a truth value from $\{\top, \bot\}$ to each extralogical (also referred to as rigid) symbol, appearing in the signature. It is easy to note that for such a model, not depending on any domain of discourse for interpreting objects, there is no possible notion of homomorphism that can be regarded as an arrow between two models thus, forcing us to consider them as organised as a discrete category.

\begin{definition}
Let $n \in \Nat$ and $\Sigma = \<\{f_i\}_{i \in \mathcal{I}}, \{P_j\}_{j \in \mathcal{J}}\> \in |\mathsf{Sign}|$. Then we define $\gamma^{Mod}_{\Sigma} : \mathbf{Mod}' (\gamma^{Sign} (\Sigma)) \to \mathbf{Mod} (\Sigma)$ as follows: for all $\mathit{val}: \gamma^{Sign} (\Sigma) \to \{\bot,\top\} \in |\mathbf{Mod}' (\gamma^{Sign} (\Sigma))|$, $\gamma^{Mod}_{\Sigma} (\mathit{val}) = \<\mathcal{S}, \mathcal{F}, \mathcal{P}\>$ such that:
\begin{itemize}
\item $\mathcal{S} = \{c_1, \ldots, c_n\}$,
\item $\mathcal{F} = \left\{ \overline{f} | f \in \{f_k\}_{k \in \mathcal{K}} \right\}$, where 

$\quad \overline{f} = \left\{\<c_1, \dots, c_k\> \mapsto c  \big| c_1, \ldots, c_k, c \in \mathcal{S}, \mathit{val}\left(v_{c = f(c_1, \dots, c_k)}\right) = \top\right\}$.
\item $\mathcal{P} = \left\{ \overline{P} | P \in \{P_j\}_{j \in \mathcal{J}} \right\}$, where 

$\quad \overline{P} = \left\{\<c_1, \ldots, c_k\>  \big| c_1, \ldots, c_k \in \mathcal{S}, \mathit{val}\left(v_{P(c_1, \ldots, c_k)}\right) = \top\right\}$, and
\end{itemize}
\noindent and for all $\mathit{Id}_{\mathit{val}}: {\mathit{val}} \to {\mathit{val}} \in ||\mathbf{Mod}' (\gamma^{Sign} (\Sigma))||$,  $\gamma^{Mod}_{\Sigma}(\mathit{Id}_{\mathit{val}}) = \mathit{Id}_{\gamma^{Mod}_{\Sigma}(\mathit{val})}$
\end{definition}

\begin{proposition}
\label{gamma-mod-proof}
    $\gamma^{Mod}_\Sigma$ is a functor. 
\end{proposition}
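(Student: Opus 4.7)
The plan is to exploit the simple structure of the source category $\mathbf{Mod}'(\gamma^{Sign}(\Sigma))$, which, as the text immediately preceding the statement observes, is discrete: its only arrows are identities. Functoriality therefore reduces to two obligations: (a) well-definedness on objects, i.e., that for every valuation $\mathit{val}$ the tuple $\langle \mathcal{S}, \mathcal{F}, \mathcal{P}\rangle$ is genuinely an object of $\mathbf{Mod}(\Sigma)$; and (b) preservation of identity arrows, sending them to identities in $\mathbf{Mod}(\Sigma)$.

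For (a), I would unfold the definition and check that $\mathcal{S} = \{c_1,\ldots,c_n\}$ is a nonempty universe, that each $\overline{P}$ is a well-defined $k$-ary relation on $\mathcal{S}$ (immediate, since it is defined as a subset of $\mathcal{S}^k$), and that each $\overline{f}$ is a total function $\mathcal{S}^k \to \mathcal{S}$. This last point is the main obstacle: totality and single-valuedness of $\overline{f}$ demand that for every tuple $\langle c_1,\ldots,c_k\rangle \in \mathcal{S}^k$ there be exactly one $c \in \mathcal{S}$ with $\mathit{val}(v_{c = f(c_1,\ldots,c_k)}) = \top$, whereas an arbitrary Boolean assignment over $\gamma^{Sign}(\Sigma)$ need not satisfy this. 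The resolution is that the valuations of interest are precisely those arising from $\mathbf{models}'(\gamma^{\mathsf{Th}_0}(\langle\Sigma,\Gamma\rangle))$: the translation $\mathit{Tr}_\mathit{Int}$ of Def.~\ref{gamma-sen} introduces a disjunction indexed by all ground-interpreting maps $f: \mathit{Term}(\alpha) \to \{c_1,\ldots,c_n\}$, conjoined with $\bigwedge_{t \in \mathit{Term}(\alpha)} t = f(t)$. In any model of this disjunction, some interpretation is selected, and the axiomatisation of equality as an equivalence relation forces single-valuedness, so $\overline{f}$ becomes a total function.

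For (b), preservation of identities is immediate from the definition, which stipulates $\gamma^{Mod}_\Sigma(\mathit{Id}_{\mathit{val}}) = \mathit{Id}_{\gamma^{Mod}_\Sigma(\mathit{val})}$. Preservation of composition is vacuous in a discrete source: the only composable pairs are of the form $\mathit{Id}_{\mathit{val}} \circ \mathit{Id}_{\mathit{val}} = \mathit{Id}_{\mathit{val}}$, and both sides of the functoriality equation collapse to $\mathit{Id}_{\gamma^{Mod}_\Sigma(\mathit{val})}$. Once (a) and (b) are settled, $\gamma^{Mod}_\Sigma$ is a functor.
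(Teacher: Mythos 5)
Your core argument is exactly the paper's: since $\mathbf{Mod}'(\gamma^{Sign}(\Sigma))$ is discrete, functoriality reduces to preservation of identities, which holds by the stipulation $\gamma^{Mod}_{\Sigma}(\mathit{Id}_{\mathit{val}}) = \mathit{Id}_{\gamma^{Mod}_{\Sigma}(\mathit{val})}$, and preservation of composition is vacuous. The paper's proof consists of precisely these two observations and nothing more, so on that front you are in complete agreement.

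Where you go beyond the paper is your item (a), and there you should be careful about what you are actually claiming. You are right that well-definedness on objects is the genuinely delicate point: as defined, $\gamma^{Mod}_{\Sigma}$ is declared on \emph{all} valuations $\mathit{val}: \gamma^{Sign}(\Sigma) \to \{\bot,\top\}$ in $|\mathbf{Mod}'(\gamma^{Sign}(\Sigma))|$, and for an arbitrary such valuation the set $\overline{f}$ need be neither single-valued nor total, so $\langle \mathcal{S}, \mathcal{F}, \mathcal{P}\rangle$ need not be a $\Sigma$-structure at all; the paper simply does not address this. However, your proposed resolution overclaims. Restricting attention to valuations arising from $\mathbf{models}'(\gamma^{\mathsf{Th}_0}(\langle\Sigma,\Gamma\rangle))$ changes the domain of $\gamma^{Mod}_{\Sigma}$ relative to Def.~\ref{inst-co-morphism}, where the model translation is indexed by signatures, and, more importantly, nothing in Def.~\ref{gamma-sen} delivers what you need: $\mathit{Tr}_{\mathit{Int}}$ only introduces the disjunction over interpretations of the terms \emph{occurring in the translated sentences}, and the paper nowhere adds equality or functionality axioms (no $\emptyset'$ is specified for this comorphism) forcing that for every tuple $\langle c_1,\ldots,c_k\rangle$ there is exactly one $c$ with $\mathit{val}(v_{c = f(c_1,\ldots,c_k)}) = \top$. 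So even a model of the translated theory can fail to induce a total, single-valued $\overline{f}$. The honest fix is either to add such functionality axioms to the target theories of $\gamma^{\mathsf{Th}_0}$, or to modify the definition of $\overline{f}$ (e.g., choosing a default value when the valuation underdetermines or overdetermines it); as written, your appeal to ``the axiomatisation of equality as an equivalence relation'' is not grounded in anything the paper provides.
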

\begin{proof}
 $\gamma^{Mod}_\Sigma$ preserves identities and, since the source category is discrete, the compositions are trivially preserved.
\end{proof}

\begin{proposition}
    \label{gamma-mod-nt-proof}
$\gamma^{Mod}$ is a natural transformation and the functors $\gamma^{Mod}_\Sigma$ preserve the satisfaction condition.
\end{proposition}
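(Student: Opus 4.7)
The plan is to split the proposition into its two claims and attack them separately, in the stated order.

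For naturality, I will fix a signature morphism $\sigma: \Sigma_1 \to \Sigma_2$ in $\mathsf{Sign}$ and an assignment $\mathit{val}: \gamma^{Sign}(\Sigma_2) \to \{\bot,\top\}$, and show that the two paths of the naturality square coincide by comparing the resulting $\Sigma_1$-models symbol-by-symbol. Unfolding the definitions, $\mathbf{Mod}'(\gamma^{Sign}(\sigma)^\op)(\mathit{val})$ is the assignment $\mathit{val}' = \mathit{val} \circ \gamma^{Sign}(\sigma)$, whose value on a variable $v_p \in \gamma^{Sign}(\Sigma_1)$ is exactly $\mathit{val}(v_{\mathbf{Sen}(\sigma)(p)})$. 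Applying $\gamma^{Mod}_{\Sigma_1}$ to $\mathit{val}'$ yields the $\Sigma_1$-structure whose domain is $\{c_1,\ldots,c_n\}$ and whose interpretation of a function or predicate symbol $s$ of $\Sigma_1$ is determined by the truth values $\mathit{val}(v_{\mathbf{Sen}(\sigma)(s(\cdots))})$. On the other hand, $\mathbf{Mod}(\sigma^\op)(\gamma^{Mod}_{\Sigma_2}(\mathit{val}))$ is the $\sigma$-reduct of $\gamma^{Mod}_{\Sigma_2}(\mathit{val})$, which by construction interprets $s$ as $\sigma(s)$ is interpreted in $\gamma^{Mod}_{\Sigma_2}(\mathit{val})$, that is, using precisely the truth values $\mathit{val}(v_{\sigma(s)(\cdots)})$. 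Since identities are preserved trivially, this establishes naturality.

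For the satisfaction condition, I will prove, for every $\Sigma \in |\mathsf{Sign}|$, every $\alpha \in \mathbf{Sen}(\Sigma)$ and every $\mathit{val} \in |\mathbf{Mod}'(\gamma^{Sign}(\Sigma))|$, that
\[
\mathit{val} \models'^{\gamma^{Sign}(\Sigma)} \gamma^{Sen}_\Sigma(\alpha)
\quad \text{iff} \quad
\gamma^{Mod}_\Sigma(\mathit{val}) \models^\Sigma \alpha .
\]
The cleanest strategy is to factor the proof through the three stages of Def.~\ref{gamma-sen}. I would first show by structural induction on $\alpha$ that $\mathcal{M} \models^\Sigma \alpha$ iff $\mathcal{M}' \models^{\Sigma'} \mathit{Tr}^n_\emptyset(\alpha)$, for $\mathcal{M}' = \gamma^{Mod}_\Sigma(\mathit{val})$ viewed as a $\Sigma'$-structure by taking $c_i^{\mathcal{M}'} = c_i$: the propositional cases are immediate, while the existential case uses the fact that the domain of $\gamma^{Mod}_\Sigma(\mathit{val})$ is exactly $\{c_1,\ldots,c_n\}$, so that $(\exists x)\alpha$ holds iff some $c_i$ witnesses it, matching the finite disjunction produced by $\mathit{Tr}^n_v$. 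Next, I would establish that $\mathcal{M}' \models \beta$ iff $\mathcal{M}' \models \mathit{Tr}_{\mathit{Int}}(\beta)$, again by induction, using the observation that each non-constant term of $\beta$ is interpreted in $\mathcal{M}'$ as some unique $c_i$, so exactly one disjunct of $\mathit{Tr}_{\mathit{Int}}(\beta)$ (namely the one whose $f$ sends each term to its interpretation) has all its equational conjuncts satisfied, and reduces to the corresponding ${\mathit{Tr}_{\mathit{Int}}}_f(\beta)$. Finally, the propositional stage: by the very construction of $\gamma^{Mod}_\Sigma(\mathit{val})$, atomic $\Sigma'$-formulae of the form $P(c_{i_1},\ldots,c_{i_k})$ and $c = c'$ over the constants $\{c_1,\ldots,c_n\}$ hold in $\mathcal{M}'$ iff $\mathit{val}$ assigns $\top$ to their associated propositional variables, from which the result extends by induction to all boolean combinations through $\mathit{Tr}_{\mathit{Prop}}$. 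Chaining the three biconditionals gives the satisfaction condition.

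The hard part will be the $\mathit{Tr}_{\mathit{Int}}$ step, since the translation has to turn arbitrary (possibly nested) non-constant terms into constants from $\{c_1,\ldots,c_n\}$ by quantifying over all functions $f: \mathit{Term}(\beta) \to \{c_1,\ldots,c_n\}$ and adding the conjuncts $t = f(t)$; showing that in the constructed model exactly one such $f$ survives, namely the one matching the actual term interpretations, requires a careful sub-induction on term depth and relies on the fact that the function interpretations $\overline{f}$ in $\gamma^{Mod}_\Sigma(\mathit{val})$ are defined precisely through the propositional variables $v_{c = f(c_1,\ldots,c_k)}$. Once this is in place, the remaining two stages and the naturality square are straightforward bookkeeping.
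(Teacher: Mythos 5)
Your naturality argument is the same unfolding-of-reducts computation that the paper gives: the printed proof merely lists the two reduct descriptions plus the definition of $\gamma^{Sign}(\sigma)$ from Def.~\ref{gamma-sign} and stops, whereas you carry the comparison down to the level of individual function and predicate symbols, which is fine. For the satisfaction condition your three-stage factorisation through $\mathit{Tr}^n_v$, $\mathit{Tr}_{\mathit{Int}}$ and $\mathit{Tr}_{\mathit{Prop}}$ (Def.~\ref{gamma-sen}) supplies exactly the induction that the paper leaves implicit -- its proof ends with a one-line assertion that satisfaction ``can also be proved'' -- and you correctly identify the crux, namely that in the $\mathit{Tr}_{\mathit{Int}}$ stage exactly one interpretation function $f$ survives. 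So, as a comparison, you do strictly more work than the paper's own proof, along the only plausible route.

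There is, however, one step in your chain that does not go through under the paper's literal definitions, and it deserves to be named. In the $\mathit{Tr}_{\mathit{Prop}}$ stage you claim that an atom $c = c'$ over the new constants holds in $\gamma^{Mod}_\Sigma(\mathit{val})$ iff $\mathit{val}(v_{c = c'}) = \top$. But $\gamma^{Mod}_\Sigma(\mathit{val})$ has carrier literally $\{c_1, \ldots, c_n\}$ with the $c_i$ pairwise distinct and equality interpreted as identity, independently of the values $\mathit{val}$ assigns to the variables $v_{c_i = c_j}$; so the right-to-left direction fails for any valuation with $\mathit{val}(v_{c_i = c_j}) = \top$ and $i \neq j$ (already $\alpha = (\exists x)(\exists y)\neg(x = y)$ with such a valuation breaks the biconditional you are chaining). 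Similarly, your $\mathit{Tr}_{\mathit{Int}}$ step presupposes that every ground term has a unique value, i.e.\ that each $\overline{f}$ is a total, single-valued function, which the definition of $\gamma^{Mod}_\Sigma$ does not guarantee for an arbitrary $\mathit{val}$. Both problems disappear only if one restricts attention to valuations satisfying suitable functionality and equality-congruence axioms (for instance by placing them in the $\emptyset'$ component of the theoroidal extension) or modifies $\gamma^{Mod}_\Sigma$ to quotient the carrier by the equivalence that $\mathit{val}$ induces on the constants. The paper's proof never confronts this point either, but since your write-up makes the relevant claims explicit, this is precisely where your argument, as stated, would fail.
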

\begin{proof}
    We need to prove that $\mathbf{Mod}(\sigma^\op) \circ \gamma^{Mod}_{\Sigma_2} = \gamma^{Mod}_{\Sigma_1} \circ \mathbf{Mod}' (\gamma^{\mathit{Sign}^\op} (\sigma^\op))$, with $\sigma : \Sigma_1 \to \Sigma_2 \in ||\mathsf{Sign}||$. It follows from observing that:
    \begin{itemize}
        \item $\mathbf{Mod}(\sigma^\op) (\mathcal{M})$ yields a first-order model obtained by capturing elements of $\mathcal{M}$ according to the signature morphism $\sigma$.
        \item $\mathbf{Mod}' (\gamma^{\mathit{Sign}^\op} (\sigma^\op)) (\mathcal{M}')$ yields a propositional valuation obtained by capturing the values in $\mathcal{M}'$ according to the signature morphism $\gamma^{\mathit{Sign}} (\sigma)$.
        \item $\gamma^{\mathit{Sign}}(\sigma)$, as it was defined in Def.~\ref{gamma-sign}, is the signature morphism in the category of signatures of propositional logic obtained from $\sigma$.
    \end{itemize}
    With these observations it can also be proved that the functor $\gamma^{Mod}_\Sigma$ preserves the satisfaction condition.
\end{proof}

\begin{proposition}
For all $n \in \Nat$, $\<\gamma^{\mathsf{Th}_0}, \gamma^{Sen}, \gamma^{Mod}\>: \mathbbm{I}_\mathit{FOL}(\mathcal{X}) \to \mathit{Prop}$ is a theoroidal comorphism between institutions. 
\end{proposition}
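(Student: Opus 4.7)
The plan is to assemble the preceding propositions into the three components required by Def.~\ref{inst-co-morphism} together with the theoroidal-extension mechanism recalled in the paragraph following Prop.~\ref{preservation}. First I would establish the plain institution comorphism $\<\gamma^{Sign}, \gamma^{Sen}, \gamma^{Mod}\>$: Prop.~\ref{gamma-sign-proof} delivers the functor $\gamma^{Sign}$, Prop.~\ref{gamma-sen-proof} delivers the natural transformation $\gamma^{Sen}: \mathbf{Sen} \nattrans \mathbf{Sen}' \circ \gamma^{Sign}$, and Prop.~\ref{gamma-mod-nt-proof} delivers the natural transformation $\gamma^{Mod}: \mathbf{Mod}' \circ (\gamma^{Sign})^\op \nattrans \mathbf{Mod}$ \emph{together with} the satisfaction condition.

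Next I would promote this to a theoroidal comorphism. Here I would invoke Prop.~\ref{gamma-th-proof}, which states that $\gamma^{\mathsf{Th}_0}$ is $\gamma^{Sen}$-sensible, and observe that $\gamma^{\mathsf{Th}_0}$ is in fact constructed as the canonical theoroidal extension of $\gamma^{Sign}$ along $\gamma^{Sen}$ in the sense of the discussion preceding Def.~\ref{sensibility}, namely $\<\Sigma,\Gamma\> \mapsto \<\gamma^{Sign}(\Sigma), \gamma^{Sen}_\Sigma(\Gamma)\>$ (so it is in fact $\gamma^{Sen}$-simple, with empty base $\emptyset'$). Functoriality of $\gamma^{\mathsf{Th}_0}$ then reduces to functoriality of $\gamma^{Sign}$ plus naturality of $\gamma^{Sen}$, while axiom preservation on morphisms is automatic because the morphisms of $\mathsf{Th}_0$ are themselves axiom preserving.

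The step I expect to carry most of the technical weight --- even though it is formally discharged inside Prop.~\ref{gamma-mod-nt-proof} --- is the satisfaction condition, because $\gamma^{Sen}_\Sigma$ is the three-stage composite $\mathit{Tr}_\mathit{Prop} \circ \mathit{Tr}_\mathit{Int} \circ \mathit{Tr}^n_\emptyset$ and each stage does genuinely different work. I would discharge it by structural induction on the formula $\alpha$. The atomic case rests on the fact that $\gamma^{Mod}_\Sigma(\mathit{val})$ is \emph{defined} so that the interpretations of each $P$ and the graph of each $f$ are read off exactly from the truth values $\mathit{val}(v_{P(\bar c)})$ and $\mathit{val}(v_{c = f(\bar c)})$ produced by $\mathit{Tr}_\mathit{Prop}$. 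The propositional connective cases are immediate from the inductive hypothesis. The existential case is handled by $\mathit{Tr}^n_v$: since the domain of $\gamma^{Mod}_\Sigma(\mathit{val})$ is the finite set $\{c_1,\ldots,c_n\}$, satisfaction of $(\exists x)\varphi$ is equivalent to satisfaction of the finite disjunction used in the right-hand side of $\mathit{Tr}^n_v((\exists x)\varphi)$. The remaining subtlety --- that non-constant subterms be consistently interpreted as elements of $\{c_1,\ldots,c_n\}$ --- is precisely what the outer disjunction in $\mathit{Tr}_\mathit{Int}$ encodes: the conjuncts $t = f(t)$ select, in each disjunct, the unique assignment of non-constant subterms to constants, which is the one read off by $\gamma^{Mod}_\Sigma$. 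Gluing the three stages together closes the induction, and combining with the previous two paragraphs yields the claim.
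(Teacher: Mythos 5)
Your proposal is correct and follows essentially the same route as the paper, which likewise obtains the result by assembling Props.~\ref{gamma-sign-proof},~\ref{gamma-sen-proof},~\ref{gamma-th-proof},~\ref{gamma-mod-proof} and~\ref{gamma-mod-nt-proof} (the satisfaction condition being delegated to Prop.~\ref{gamma-mod-nt-proof}, exactly as you note). Your additional sketch of the three-stage induction for the satisfaction condition, and the observation that $\gamma^{\mathsf{Th}_0}$ is in fact $\gamma^{Sen}$-simple with $\emptyset' = \emptyset$, only add detail beyond what the paper records and are consistent with its definitions.
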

\begin{proof}
    The proof follows directly from Props.~\ref{gamma-sign-proof},~\ref{gamma-sen-proof},~\ref{gamma-th-proof},~\ref{gamma-mod-proof} and~\ref{gamma-mod-nt-proof}.
\end{proof}

Finally, the institution comorphism $\<\gamma^{Sign}, \gamma^{Sen}, \gamma^{Mod}\>$, together with $\gamma^{\mathsf{Th}_0}$, the theoroidal extension of $\gamma^{Sign}$, presented above, satisfy the hypothesis of Coro.~\ref{coro-sub-soundness-bmc-het}, thus providing an effective procedure, formalised in $\mathbbm{Q}_\text{Prop}$, for bounded counterexample finding that can be applied in the critical parts of a proof being developed within the proof calculus formalised in $\mathbbm{P}_\text{FOL}$.


\section{Conclusions}
\label{conclusions}
In this work we showed how effective satisfiability sub-calculi, a special type of satisfiability calculi, all of which were presented in \cite{lopezpombo:fi-166_4}, can be combined with proof calculi, as they were presented in the context of \emph{General logics} by Meseguer in \cite{meseguer:lc87}, in order to provide the foundations for methodological approaches to software analysis. This was done by relating, in an abstract categorical setting, the construction of counterexamples, using model finders, with the absence of proofs.

This methodology is based on the fact that searching of counterexamples is usually entangled with theorem proving in software analysis. As we mentioned in the preceding sections, there are many uses for counterexample finding capabilities, among which we find:
\begin{inparaenum}[1)]
\item gaining confidence on the correctness of the specification and the satisfaction of the property,
\item understanding the relevance of each hypothesis in the a proof, and
\item the analysis of the appropriateness of the addition of new hypothesis that might not be provable from the current set of hipothesis. 
\end{inparaenum}
This was exemplified by formalising a bounded counterexample finder as an effective satisfiability subcalculus for propositional logic, and then combining it with a proof calculus for first order logic with equality by means of a semantics preserving translation.\\

In \cite{gimenez:lafm13} we presented the tool \HeteroGenius\ as an implementation of a framework based on the (initially intuitive) notion of \emph{heterogeneous hybrid analysis}. The idea behind \HeteroGenius\ is to consider software analysis as a task developed by combining different techniques, following certain methodology. This can be done by considering an \emph{analysis structure} where nodes are judgements of the form $\Gamma \vdash_\Sigma \alpha$ and arrows relate judgements by applying a specific technique thus, providing some insight on its validity. The reader should note that the formalisation of both, proof calculi and satisfiability calculi, are not suitable for the implementation of analysis structures as the latter requires different tools to operate over the same structure, thus internalising their combination.

\bibliography{bibdatabase}

\begin{thebibliography}{100}

\bibitem{pnueli:ieee-focs77}
Pnueli, A.:
\newblock The temporal logic of programs.
\newblock In: Proceedings of 18th. Annual IEEE Symposium on Foundations of
  Computer Science, Los Alamitos, CA, USA, {IEEE} Computer Society, {IEEE}
  Computer Society (1977)  46--57

\bibitem{pnueli:tcs-13_1}
Pnueli, A.:
\newblock The temporal semantics of concurrent programs.
\newblock Theoretical Computer Science \textbf{13}(1) (1981)  45--60

\bibitem{manna95}
Manna, Z., Pnueli, A.:
\newblock Temporal Verification of Reactive Systems.
\newblock Springer-Verlag, New York, NY, USA (1995)

\bibitem{benari:acm-sigplan-sigact81}
Ben-Ari, M., Manna, Z., Pnueli, A.:
\newblock The temporal logic of branching time.
\newblock In: Proceedings of the 8th. {ACM} {SIGPLAN-SIGACT} Symposium on
  Principles of Programming Languages, Williamsburg, Virginia, {A}ssociation
  for the {C}omputer {M}achinery, {ACM} Press (1981)  164--176

\bibitem{emerson:jacm-33_1}
Emerson, E.A., Halpern, J.Y.:
\newblock ``sometimes'' and ``not never'' revisited: on branching versus linear
  time temporal logic.
\newblock Journal of the {ACM} \textbf{33}(1) (1986)  151--178

\bibitem{fischer:stoc77}
Fischer, M.J., Ladner, R.E.:
\newblock Propositional modal logic of programs.
\newblock In Hopcroft, J.E., Friedman, E.P., Harrison, M.A., eds.: Proceedings
  of the 9th. annual {ACM} symposium on theory of computing. STOC, Boulder,
  Colorado, United States, {ACM} Press (1977)  286--294

\bibitem{harel00}
Harel, D., Kozen, D., Tiuryn, J.:
\newblock Dynamic logic.
\newblock Foundations of Computing. The MIT Press, Cambridge, MA, USA (2000)

\bibitem{henriksen:apal-96_1_3}
Henriksen, J.G., Thiagarajan, P.:
\newblock Dynamic linear time temporal logic.
\newblock Annals of Pure and Applied Logic \textbf{96}(1--3) (1999)  187--207

\bibitem{vanbenthem:hlfcs83}
{van Benthem}, J., Doets, K.:
\newblock Higher-order logic.
\newblock In Gabbay, D., Guenthner, F., eds.: Handbook of Philosophical Logic.
  Volume~1.
\newblock second edn. Kluwer Academic Publishers (2001)  275--329

\bibitem{henkin:amm-84_8}
Henkin, L.A.:
\newblock The logic of equality.
\newblock The American Mathematical Monthly \textbf{84}(8) (1977)  597--612

\bibitem{goguen:cgprds75}
Goguen, J.A., Thatcher, J.W., Wagner, E.G., Wright, J.B.:
\newblock Abstract data types as initial algebras and the correctness of data
  representations.
\newblock In: Computer Graphics, Pattern Recognition and Data Structure with
  the IEEE Computer Society and in cooperation with the ACM Special Interest
  Group on Computer Graphics, {IEEE} Computer Society (May 1975)  89--93

\bibitem{turing:plms-s2-42_1}
Turing, A.M.:
\newblock On computable numbers, with an application to the
  entscheidungsproblem.
\newblock Proceedings of the London Mathematical Society \textbf{s2-42}(1)
  (1937)  230--265

\bibitem{hoare:cacm-26_1}
Hoare, C.A.R.:
\newblock An axiomatic basis for computer programming.
\newblock Communications of the {ACM} \textbf{26}(1) (1969)  53--56

\bibitem{burstall:tcj-12_1}
Burstall, R.M.:
\newblock Proving properties of programs by structural induction.
\newblock The Computer Journal \textbf{12}(1) (1969)  41--48

\bibitem{floyd+:amsam66}
Floyd, R.W.:
\newblock Assigning meaning to programs.
\newblock In Schwartz, J.T., ed.: Proceedings of Symposia Applied Mathematics
  of the American Mathematical Society -- Mathematical Aspects of Computer
  Science. Volume~19., American Mathematical Society (April 1966)  19--32
  Reprinted in \cite{floyd+:colburn93}.

\bibitem{parnas:cacm-15_5}
Parnas, D.L.:
\newblock A technique for software module specification with examples.
\newblock Communications of the {ACM} \textbf{15}(5) (1972)  330--336

\bibitem{dijkstra:cacm-18_8}
Dijkstra, E.W.:
\newblock Guarded commands, nondeterminacy and formal derivation of programs.
\newblock Communications of the {ACM} \textbf{18}(8) (1975)  453--457

\bibitem{liskov:afips72}
Liskov, B.H.:
\newblock A design methodology for reliable software systems.
\newblock In: Proceedings of American Federation of Information Processing
  Societies '72 Fall Joint Computer Conference - Part {I}. Volume~41 of {AFIPS}
  Conference Proceedings., {AFIPS} / {ACM} / Thomson Book Company, Washington
  {D.C.} (December 1972)  191--199

\bibitem{hoare:dahl72}
Hoare, C.A.R.
\newblock In: Chapter II: Notes on data structuring. Volume~8 of {A.P.I.C.}
  Studies in data processing. Academic Press (1972)  94--185

\bibitem{hoare:ai+-1_4}
Hoare, C.A.R.:
\newblock Proof of correctness of data representations.
\newblock Acta Informatica \textbf{1}(4) (1972)  271--281

\bibitem{hoare:ijcis-4_2}
Hoare, C.A.R.:
\newblock Recursive data structures.
\newblock International Journal of Computer and Information Science
  \textbf{4}(2) (1975)  105--132

\bibitem{liskov:acmsigplannot-9_4}
Liskov, B.H., Zilles, S.N.:
\newblock Programming with abstract data types.
\newblock {ACM} {SIGPLAN} Notices \textbf{9}(4) (1974)  50--59 Also in
  \cite{liskov:acm-sigplan-svhll74}.

\bibitem{liskov:acmsigplannot-10_6}
Liskov, B.H., Zilles, S.N.:
\newblock Specification techniques for data abstractions.
\newblock {ACM} {SIGPLAN} Notices \textbf{10}(6) (1975)  72--87 Also in
  \cite{liskov:icrs75}.

\bibitem{liskov:yeh77}
Liskov, B.H.
\newblock In: Chapter 1: An introduction to formal specifications of data
  abstractions. Volume~1 of Current trends in programming methodology. Prentice
  Hall (1977)  1--32

\bibitem{guttag:cacm-20_6}
Guttag, J.V.:
\newblock Abstract data types and the development of data structures.
\newblock Communications of the {ACM} \textbf{20}(6) (1977)  396--404

\bibitem{guttag:cacm-21_12}
Guttag, J.V., Horowitz, E., Musser, D.R.:
\newblock Abstract data types and software validation.
\newblock Communications of the {ACM} \textbf{21}(12) (1978)  1048--1064

\bibitem{goguen:yeh78+}
Goguen, J.A., Thatcher, J.W., Wagner, E.G.
\newblock In: Chapter 5: An initial algebra approach to the specification,
  correctness, and implementation of abstract data types. Volume~4 of Current
  trends in programming methodology. Prentice Hall (1977)  80--149

\bibitem{carvalho:CS-80-22}
de~Carvalho, R.L., Maibaum, T.S.E., Pequeno, T.H.C., Pereda, A.A., Veloso,
  P.A.S.:
\newblock A model theoretic approach to the theory of abstract data types and
  data structures.
\newblock Research Report CS-80-22, Department of Computer Science, University
  of Waterloo (April 1980)

\bibitem{carvalho:ics82}
de~Carvalho, R.L., Maibaum, T.S.E., Pequeno, T.H.C., Pereda, A.A., Veloso,
  P.A.S.:
\newblock A model theoretic approach to the theory of abstract data types and
  data structures.
\newblock In Yau, S.S., ed.: Proceedings of International Computer Symposium --
  {(ICS'82)}, {IEEE} Computer Society (December 1982)

\bibitem{maibaum:tapsoft97}
Maibaum, T.S.E.:
\newblock Conservative extensions, interpretations between theories and all
  that!
\newblock In Bidoit, M., Dauchet, M., eds.: Proceedings of CAAP/FASE -- 7th
  International Joint Conference Theory and Practice of Software Development
  (TAPSOFT'97). Volume 1214 of Lecture Notes in Computer Science.,
  Springer-Verlag (1997)  40--66

\bibitem{ehrich:jacm-29_1}
Ehrich, H.D.:
\newblock On the theory of specification, implementation, and parametrization
  of abstract data types.
\newblock Journal of the {ACM} \textbf{29}(1) (1982)  206--227

\bibitem{maibaum:TRun81}
Maibaum, T.S.E., Veloso, P.A.S.:
\newblock A logical approach to abstract data types.
\newblock Technical report, Department of Computing, Imperial College, and
  Departamento de Informatica, PUC/RJ (1981)

\bibitem{maibaum:TRun83}
Maibaum, T.S.E., Sadler, M.R., Veloso, P.A.S.:
\newblock Logical specification and implementation.
\newblock Technical report, Department of Computing, Imperial College (1983)

\bibitem{maibaum:TRun83+}
Maibaum, T.S.E., Sadler, M.R., Veloso, P.A.S.:
\newblock A straightforward approach to parameterised specifications.
\newblock Technical report, Department of Computing, Imperial College (1983)

\bibitem{maibaum:fsttcs84}
Maibaum, T.S.E., Sadler, M.R., Veloso, P.A.S.:
\newblock Logical specification and implementation.
\newblock In Joseph, M., Shyamasundar, R., eds.: Proceedings of the Fourth
  Conference on Foundations of Software Technology and Theoretical Computer
  Science. Volume 181 of Lecture Notes in Computer Science., Springer-Verlag
  (December 1984)  13--30

\bibitem{maibaum:tapsoft85}
Maibaum, T.S.E., Veloso, P.A.S., Sadler, M.R.:
\newblock A theory of abstract data types for program development: Bridging the
  gap?
\newblock In Ehrig, H., Floyd, C., Nivat, M., Thatcher, J.W., eds.: Proceedings
  of Mathematical Foundations of Software Development -- International Joint
  Conference on Theory and Practice of Software Development (TAPSOFT'85),
  Volume 2: Colloquium on Software Engineering {(CSE)}. Volume 186 of Lecture
  Notes in Computer Science., Springer-Verlag (March 1985)  214--230

\bibitem{karp64}
Karp, C.C.:
\newblock Languages with expressions of infinite length.
\newblock North Holland, Amsterdam (1964)

\bibitem{goguen:cmwlp84}
Goguen, J.A., Burstall, R.M.:
\newblock Introducing institutions.
\newblock In Clarke, E.M., Kozen, D., eds.: Proceedings of the Carnegie Mellon
  Workshop on Logic of Programs. Volume 184 of Lecture Notes in Computer
  Science., Springer-Verlag (1984)  221--256

\bibitem{tarlecki:ctcp86}
Tarlecki, A.:
\newblock Bits and pieces of the theory of institutions.
\newblock In Pitt, D.H., Abramsky, S., Poign{\'e}, A., Rydeheard, D.E., eds.:
  Proceedings of the Category Theory and Computer Programming, tutorial and
  workshop. Volume 240 of Lecture Notes in Computer Science., Springer-Verlag
  (1986)  334--363

\bibitem{sannella:ic-76_2-3}
Sannella, D., Tarlecki, A.:
\newblock Specifications in an arbitrary institution.
\newblock Information and computation \textbf{76}(2--3) (1988)  165--210

\bibitem{duran:tcs-309_1-3}
Dur{\'{a}}n, F., Meseguer, J.:
\newblock Structured theories and institutions.
\newblock Theoretical Computer Science \textbf{1--3}(309) (2003)  357--380

\bibitem{sannella:ai+-25_3}
Sannella, D., Tarlecki, A.:
\newblock Toward formal development of programs from algebraic specifications:
  Implementations revisited.
\newblock Acta Informatica \textbf{25}(3) (1988)  233--281 See also
  \cite{sannella:ECS-LFCS-86-17}.

\bibitem{sannella:calp92}
Sannella, D., Tarlecki, A.:
\newblock Toward formal development of programs from algebraic specifications:
  model-theoretic foundations.
\newblock In Kuich, W., ed.: Proceedings of the 19th. Colloquium on Automata,
  Languages and Programming. Volume 623 of Lecture Notes in Computer Science.,
  Vienna, Austria, Springer-Verlag (July 1992)  656--671 See also
  \cite{sannella:ECS-LFCS-92-204}.

\bibitem{tarlecki:sadt-rtdts95}
Tarlecki, A.:
\newblock Moving between logical systems.
\newblock  \cite{sadt-rtdts95}  478--502

\bibitem{arrais:sadt-rtdts95}
Arrais, M., Fiadeiro, J.L.:
\newblock Unifying theories in different institutions.
\newblock  \cite{sadt-rtdts95}  81--101

\bibitem{mossakowski:icalp1996}
Mossakowski, T.:
\newblock Different types of arrow between logical frameworks.
\newblock In Meyer, F., Monien, B., eds.: Proceedings of 23rd International
  Colloquium on Automata, Languages and Programming (ICALP '96). Volume 1099 of
  Lecture Notes in Computer Science., Springer-Verlag (1996)  158--169

\bibitem{tarlecki:gabbay00}
Tarlecki, A.:
\newblock Towards heterogeneous specifications.
\newblock In Gabbay, D., {de Rijke}, M., eds.: Frontiers of Combining Systems.
  Volume~2 of Studies in Logic and Computation.
\newblock Research Studies Press (2000)  337--360

\bibitem{bernot:amast96}
Bernot, G., Coudert, S., Gall, P.L.:
\newblock Towards heterogeneous formal specification.
\newblock In Wirsing, M., Nivat, M., eds.: Proceedings of the 5th.
  International Conference Algebraic on Methodology and Software Technology --
  {AMAST} 1996. Volume 1101 of Lecture Notes in Computer Science., Munich,
  Germany, Springer-Verlag (July 1996)  458--472

\bibitem{mossakowski:wadt02}
Mossakowski, T.:
\newblock Foundations of heterogeneous specification.
\newblock In Wirsing, M., Pattinson, D., Hennicker, R., eds.: Proceedings of
  the 16th. International Workshop on Recent Trends in Algebraic Development
  Techniques {WADT} 2002. Volume 2755 of Lecture Notes in Computer Science.,
  Frauenchiemsee, Germany, Springer-Verlag (September 2002)  359--375

\bibitem{meseguer:lc87}
Meseguer, J.:
\newblock General logics.
\newblock In Ebbinghaus, H.D., Fernandez-Prida, J., Garrido, M., Lascar, D.,
  Artalejo, M.R., eds.: Proceedings of the Logic Colloquium '87. Volume 129.,
  Granada, Spain, North Holland (1989)  275--329

\bibitem{fiadeiro:icdcwtfm93}
Fiadeiro, J.L., Maibaum, T.S.E.:
\newblock Generalising interpretations between theories in the context of
  $\pi$-institutions.
\newblock In Burn, G., Gay, D., Ryan, M., eds.: Proceedings of the First
  Imperial College Department of Computing Workshop on Theory and Formal
  Methods, London, UK, Springer-Verlag (1993)  126--147

\bibitem{schoenfield71}
Schoenfield, J.R.:
\newblock Degrees of unsolvability.
\newblock Number~2 in Mathematical studies. North Holland (1971)

\bibitem{beth59}
Beth, E.W.:
\newblock The Foundations of Mathematics.
\newblock North Holland (1959)

\bibitem{beth:hintikka69}
Beth, E.W.:
\newblock Semantic entailment and formal derivability.
\newblock In Hintikka, J., ed.: The Philosophy of Mathematics.
\newblock Oxford University Press (1969)  9--41 Reprinted from
  \cite{beth:mvdknavwal-18_13}.

\bibitem{herbrand:goldfarb71}
Herbrand, J.:
\newblock Investigation in proof theory.
\newblock In Goldfarb, W.D., ed.: Logical Writings.
\newblock Harvard University Press (1969)  44--202 Translated to english from
  \cite{herbrand:phdthesis}.

\bibitem{gentzen:szabo69}
Gentzen, G.:
\newblock The collected papers of gerhard gentzen.
\newblock In Szabo, M.E., ed.: Investigation into logical deduction.
\newblock North Holland (1969)  68--131 Translated to english from
  \cite{gentzen:mz-39}.

\bibitem{smullyan95}
Smullyan, R.M.:
\newblock First-order Logic.
\newblock Dover Publishing (1995)

\bibitem{robinson:jacm-12_1}
Robinson, J.A.:
\newblock A machine-oriented logic based on the resolution principle.
\newblock Journal of the {ACM} \textbf{12}(1) (1965)  23--41

\bibitem{lopezpombo:fi-166_4}
{Lopez Pombo}, C.G., Castro, P., Aguirre, N.M., Maibaum, T.S.E.:
\newblock Satisfiability calculus: An abstract formulation of semantic proof
  systems.
\newblock Fundamenta Informaticae \textbf{166}(4) (2019)  297--347

\bibitem{jackson:acmtosem-11_2}
Jackson, D.:
\newblock Alloy: a lightweight object modelling notation.
\newblock {ACM} Transactions on Software Engineering and Methodology
  \textbf{11}(2) (2002)  256--290

\bibitem{owre:cav96}
Owre, S., Rajan, S., Rushby, J.M., Shankar, N., Srivas, M.:
\newblock {PVS}: Combining specification, proof checking, and model checking.
\newblock In Alur, R., Henzinger, T.A., eds.: Proceedings of the 9th. Computer
  Aided Verification ({CAV}). Volume 1102 of Lecture Notes in Computer
  Science., New Brunswick, {NJ}, Springer-Verlag (July/August 1996)  411--414

\bibitem{dowek:coq-users-guide}
Dowek, G., Felty, A., Herbelin, H., Huet, G., Murthy, C., Parent, C.,
  Paulin-Mohring, C., Werner, B.:
\newblock The coq proof assistant user's guide (version 5.8).
\newblock Technical Report 154, INRIA, Rocquencourt, France (1993)

\bibitem{nipkow02}
Nipkow, T., Paulson, L.C., Wenzel, M.:
\newblock {Isabelle/HOL} -- A proof assistant for higher-order logic. Volume
  2283 of Lecture Notes in Computer Science.
\newblock Springer-Verlag, Berlin, Germany (2002)

\bibitem{frias:tacas07}
Frias, M.F., {Lopez Pombo}, C.G., Moscato, M.M.:
\newblock {Alloy Analyzer+PVS} in the analysis and verification of {Alloy}
  specifications.
\newblock  \cite{tacas07}  587--601

\bibitem{blanchette:itp10}
Blanchette, J.C., Nipkow, T.:
\newblock Nitpick: A counterexample generator for higher-order logic based on a
  relational model finder.
\newblock In Kaufmann, M., Paulson, L.C., eds.: Proceedings of First
  International Conference on Interactive Theorem Proving (ITP 2010). Volume
  6172 of Lecture Notes in Computer Science., Berlin, Heidelberg,
  Springer-Verlag (July 2010)  131--146

\bibitem{torlak:tacas07}
Torlak, E., Jackson, D.:
\newblock {Kodkod}: A relational model finder.
\newblock  \cite{tacas07}  632--647

\bibitem{blanchette:nitipick-users-guide}
Blanchette, J.C.:
\newblock Picking Nits: A User’s Guide to Nitpick for Isabelle/HOL.
\newblock Institut f\"ur Informatik, Technische Universit\"at M\"unchen. (April
  2020) Available \url{https://isabelle.in.tum.de/doc/nitpick.pdf}.

\bibitem{jackson:acm-sigsoft00}
Jackson, D.:
\newblock Automating first-order relational logic.
\newblock In: Proceedings of the 8th {ACM} {SIGSOFT} international symposium on
  Foundations of software engineering, San Diego, California, United States,
  {A}ssociation for the {C}omputer {M}achinery, {ACM} Press (2000)  130--139

\bibitem{maclane71}
{Mac Lane}, S.:
\newblock Categories for working mathematician.
\newblock Graduate Texts in Mathematics. Springer-Verlag, Berlin, Germany
  (1971)

\bibitem{fiadeiro05}
Fiadeiro, J.L.:
\newblock Categories for software engineering.
\newblock Springer-Verlag (2005)

\bibitem{cassano:wadt2012}
Cassano, V., {Lopez Pombo}, C.G., Maibaum, T.S.E.:
\newblock Entailment systems for default reasoning.
\newblock In Mart\'{\i}-Oliet, N., {Palomino Tarjuelo}, M., eds.: Proceedings
  of 21st International Workshop on Algebraic Development Techniques (WADT
  2012), Salamanca, Spain (June 2012)  28--30

\bibitem{reiter:ai-13_1-2}
Reiter, R.:
\newblock A logic for default reasoning.
\newblock Artificial Intelligence \textbf{13}(1-2) (1980)  81--132

\bibitem{lopezpombo:phdthesis}
{Lopez Pombo}, C.G.:
\newblock Fork algebras as a tool for reasoning across heterogeneous
  specifications.
\newblock PhD thesis, Departamento de Computaci\'{o}n, Facultad de Ciencias
  Exactas y Naturales, Universidad de Buenos Aires (2007) Promotor: Marcelo
  {F.} Frias.

\bibitem{frias:relmics01+}
Frias, M.F., Baum, G.A., Maibaum, T.S.E.:
\newblock Interpretability of first-order dynamic logic in a relational
  calculus.
\newblock In de~Swart, H., ed.: Proceedings of the 6th. Conference on
  Relational Methods in Computer Science ({RelMiCS}) - {TARSKI}. Volume 2561 of
  Lecture Notes in Computer Science., Oisterwijk, The Netherlands,
  Springer-Verlag (October 2002)  66--80

\bibitem{haeberer:ifiptc291}
Haeberer, A.M., Veloso, P.A.S.:
\newblock Partial relations for program derivation: adequacy, inevitability and
  expressiveness.
\newblock In: Proceedings of {IFIP TC2} working conference on constructing
  programs from specifications, {IFIP TC2}: Software: Theory and Practice,
  North Holland (1991)  310--352

\bibitem{frias02}
Frias, M.F.:
\newblock Fork algebras in algebra, logic and computer science. Volume~2 of
  Advances in logic.
\newblock World Scientific Publishing Co., Singapore (2002)

\bibitem{lopezpombo+:wadt2012-f}
{Lopez Pombo}, C.G., Castro, P., Aguirre, N.M., Maibaum, T.S.E.:
\newblock Satisfiability calculi: the semantic counterpart of proof calculi in
  general logics.
\newblock In Mart\'{\i}-Oliet, N., {Palomino Tarjuelo}, M., eds.: Proceedings
  of 21st International Workshop on Algebraic Development Techniques (WADT
  2012) and IFIP International Federation for Information Processing (2013).
  Volume 7841 of Lecture Notes in Computer Science., Salamanca, Spain,
  Springer-Verlag (June 2013)  195--211

\bibitem{goguen:jacm-39_1}
Goguen, J.A., Burstall, R.M.:
\newblock Institutions: abstract model theory for specification and
  programming.
\newblock Journal of the {ACM} \textbf{39}(1) (1992)  95--146

\bibitem{goguen:fac-13_3-5}
Goguen, J.A., Ro{\c{s}}u, G.:
\newblock Institution morphisms.
\newblock Formal Aspects of Computing \textbf{13}(3-5) (2002)  274--307

\bibitem{mossakowski:wadt2008}
Mossakowski, T., Tarlecki, A.:
\newblock Heterogeneous logical environments for distributed specifications.
\newblock In Corradini, A., Montanari, U., eds.: Proceedings of 19th
  International Workshop in Algebraic Development Techniques. Volume 5486 of
  Lecture Notes in Computer Science., Pisa, Italy, Springer-Verlag (June 2009)
  266--289

\bibitem{lopezpombo:ictac14}
{Lopez Pombo}, C.G., Castro, P., Aguirre, N.M., Maibaum, T.S.E.:
\newblock A heterogeneous characterisation of component-based system design in
  a categorical setting.
\newblock In Ciobanu, G., Ipate, F., eds.: Proceedings of 11th International
  Colloquium Theoretical Aspects of Computing - {ICTAC} 2014. Volume 8687 of
  Lecture Notes in Computer Science., Springer-Verlag (September 2014)
  314--332

\bibitem{mossakowski:tacas07}
Mossakowski, T., Maeder, C., Luttich, K.:
\newblock The heterogeneous tool set, {Hets}.
\newblock  \cite{tacas07}  519--522

\bibitem{diaconescu:tcs-285_2}
Diaconescu, R., Futatsugi, K.:
\newblock Logical foundations of {CafeOBJ}.
\newblock Theoretical Computer Science \textbf{285}(2) (2002)  289--318

\bibitem{castro:fac-27_5-6}
Castro, P., Aguirre, N.M., {Lopez Pombo}, C.G., Maibaum, T.S.E.:
\newblock Categorical foundations for structured specifications in {Z}.
\newblock Formal Aspects of Computing \textbf{27}(5-6) (2015)  831--865

\bibitem{borzyszkowski:tcs-286_2}
Borzyszkowski, T.:
\newblock Logical systems for structured specifications.
\newblock Theoretical Computer Science \textbf{286} (2002)  197--245

\bibitem{mossakowski:fossacs14}
Mossakowski, T., Tarlecki, A.:
\newblock A relatively complete calculus for structured heterogeneous
  specifications.
\newblock In Muscholl, A., ed.: Proceedings of 17th International Conference on
  Foundations of Software Science and Computation Structures (FOSSACS 2014),
  held as Part of the European Joint Conferences on Theory and Practice of
  Software. Volume 8412 of Lecture Notes in Computer Science., Springer-Verlag
  (2014)  441--456

\bibitem{lopezpombo:ocl17}
{Lopez Pombo}, C.G., Frias, M.F.:
\newblock 16.
\newblock In: (Heterogeneous) Structured Specifications in Logics Without
  Interpolation. Volume~17 of Outstanding Contributions to Logic. Springer
  International publishing (2018)

\bibitem{castro:ictac10}
Castro, P., Aguirre, N.M., {Lopez Pombo}, C.G., Maibaum, T.S.E.:
\newblock Towards managing dynamic reconfiguration of software systems in a
  categorical setting.
\newblock In Cavalcanti, A., D'eharbe, D., Gaudel, M.C., Woodcock, J., eds.:
  Proceedings of Theoretical Aspects of Computing - ICTAC 2010, 7th
  International Colloquium. Volume 6255 of Lecture Notes in Computer Science.,
  Natal, Rio Grande do Norte, Brazil, Springer-Verlag (September 2010)
  306--321

\bibitem{castro:facs12}
Castro, P., Aguirre, N.M., {Lopez Pombo}, C.G., Maibaum, T.S.E.:
\newblock A categorical approach to structuring and promoting {Z}
  specifications.
\newblock In Pasareanu, C.S., Sala{\"u}n, G., eds.: Proceedings of 9th
  International Symposium Formal Aspects of Component Software -- FACS 2012.
  Volume 7684 of Lecture Notes in Computer Science., Springer-Verlag (2013)
  73--91

\bibitem{gimenez:lafm13}
Gimenez, M., Moscato, M.M., {Lopez Pombo}, C.G., Frias, M.F.:
\newblock {{\ffmfamily Hetero}{\LobsterTwo Genius}}: a framework for hybrid
  analysis of heterogeneous software specifications.
\newblock In Aguirre, N.M., Ribeiro, L., eds.: Proceedings of Latin American
  Workshop on Formal Methods 2013. (August 2013)  1045--1058 Workshop
  affiliated to \cite{concur13}.

\bibitem{floyd+:colburn93}
Floyd, R.W.
\newblock In: Assigning meaning to programs. Volume~14 of Studies in Cognitive
  Systems. Springer-Verlag (1993)  65--81 Reprint of \cite{floyd+:amsam66}.

\bibitem{liskov:acm-sigplan-svhll74}
Liskov, B.H., Zilles, S.N.:
\newblock Programming with abstract data types.
\newblock In: Proceedings of {ACM} {SIGPLAN} Symposium on Very High Level
  Languages, {A}ssociation for the {C}omputer {M}achinery, {ACM} Press (1974)
  50--59 Also in \cite{liskov:acmsigplannot-9_4}.

\bibitem{liskov:icrs75}
Liskov, B.H., Zilles, S.N.:
\newblock Specification techniques for data abstractions.
\newblock In Shooman, M.L., Yeh, R.T.Y., eds.: Proceedings of the International
  Conference on Reliable Software, Los Angeles, California, {ACM} Press (1975)
  72–87 Also in \cite{liskov:acmsigplannot-10_6}.

\bibitem{sannella:ECS-LFCS-86-17}
Sannella, D., Tarlecki, A.:
\newblock Toward formal development of programs from algebraic specifications:
  implementations revisited.
\newblock Technical Report~17, Laboratory for foundations of computer science,
  The University of Edinburgh (1986) Preliminar version of
  \cite{sannella:ai+-25_3}.

\bibitem{sannella:ECS-LFCS-92-204}
Sannella, D., Tarlecki, A.:
\newblock Toward formal development of programs from algebraic specifications:
  model-theoretic foundations.
\newblock Technical Report 204, Laboratory for foundations of computer science,
  The University of Edinburgh (1992)

\bibitem{sadt-rtdts95}
Haveraaen, M., Owe, O., Dahl, O.J., eds.
\newblock In Haveraaen, M., Owe, O., Dahl, O.J., eds.: Selected papers from the
  11th Workshop on Specification of Abstract Data Types Joint with the 8th
  COMPASS Workshop on Recent Trends in Data Type Specification. Volume 1130 of
  Lecture Notes in Computer Science., Springer-Verlag (1996)

\bibitem{beth:mvdknavwal-18_13}
Beth, E.W.:
\newblock Semantic entailment and formal derivability.
\newblock Mededlingen van de Koninklijke Nederlandse Akademie van
  Wetenschappen, Afdeling Letterkunde \textbf{18}(13) (1955)  309--342
  Reprinted in \cite{beth:hintikka69}.

\bibitem{herbrand:phdthesis}
Herbrand, J.:
\newblock Recherches sur la theorie de la demonstration.
\newblock PhD thesis, Universit\'{e} de Paris (1930) English translation in
  \cite{herbrand:goldfarb71}.

\bibitem{gentzen:mz-39}
Gentzen, G.:
\newblock Untersuchungen tiber das logische schliessen.
\newblock Mathematische Zeitschrijt \textbf{39} (1935)  176--210 and 405--431
  English translation in \cite{gentzen:szabo69}.

\bibitem{tacas07}
Grumberg, O., Huth, M., eds.
\newblock In Grumberg, O., Huth, M., eds.: Proceedings of the 13th.
  International Conference on Tools and Algorithms for the Construction and
  Analysis of Systems ({TACAS} 2007). Volume 4424 of Lecture Notes in Computer
  Science., Braga, Portugal, Springer-Verlag (April 2007)

\bibitem{concur13}
D'Argenio, P.R., Melgratti, H., eds.
\newblock In D'Argenio, P.R., Melgratti, H., eds.: Proceedings of 24th
  International Conference on Concurrency Theory -- CONCUR 2013. Volume 8052 of
  Lecture Notes in Computer Science., Springer-Verlag (2013)

\end{thebibliography}
\bibliographystyle{splncs}

\appendix


\section{Selected proofs}
\label{proofs}
In this section we will present detailed explanations, definitions and proofs of the results supporting the examples we presented in Sections~\ref{preliminaries}~and~\ref{subsat}.

\subsection{Example~\ref{ex:sat-calculus}: Tableau method for first-order predicate logic}
\label{proofs-ejemplo-sat-calculus}
In Ex.~\ref{ex:sat-calculus} we presented the tableau method for first-order predicate logic and the intuitions for how it fits into the definition of a satisfiability calculus. In this section we will provide the formal definitions and the results proving it. Let $\mathbb{I}_{FOL} = \< \mathsf{Sign}, \mathbf{Sen}, \mathbf{Mod}, \{\models^{\Sigma}\}_{\Sigma \in |\mathsf{Sign}|} \>$, the institution of first-order predicate logic.

\begin{definition}
\label{def:str}
Let $\Sigma \in |\mathsf{Sign}|$ and $\Gamma \subseteq \mathbf{Sen}(\Sigma)$, we define $\mathit{Str}^{\Sigma, \Gamma} = \<\mathcal{O}, \mathcal{A}\>$ such that $\mathcal{O} = 2^{\mathbf{Sen}(\Sigma)}$ and $\mathcal{A} = \{\alpha:\{A_i\}_{i \in \mathcal{I}} \to \{B_j\}_{j \in \mathcal{J}}\ |\ \alpha=\{\alpha_j\}_{j \in \mathcal{J}}\}$, where for all $j \in \mathcal{J}$, $\alpha_j$ is a branch in a tableau for $\Gamma \cup \{B_j\}$ with leaves $\Delta \subseteq \{A_i\}_{i \in \mathcal{I}}$; $\Delta \models^\Sigma \Gamma \cup \{B_j\}$ follows as a direct consequence of the definition.
\end{definition}

\begin{lemma}
\label{lemma:str-category}
Let $\Sigma \in |\mathsf{Sign}|$ and $\Gamma \subseteq \mathbf{Sen}(\Sigma)$, then $\mathit{Str}^{\Sigma, \Gamma}$ defined as in Definition~\ref{def:str} is a category.
\end{lemma}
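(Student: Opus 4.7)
The plan is to exhibit, for the structure $\mathit{Str}^{\Sigma, \Gamma} = \< \mathcal{O}, \mathcal{A} \>$ of Definition~\ref{def:str}, the three pieces of data required of a category (identity arrows for every object and a composition law on composable pairs) and then to verify the identity and associativity axioms.

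First I would define the identity on an object $\{A_i\}_{i \in \mathcal{I}}$ as the family $\{\iota_i\}_{i \in \mathcal{I}}$ whose $i$-th component $\iota_i$ is the empty branch (the single-node tableau) for $\Gamma \cup \{A_i\}$. Since no tableau rule has been applied, the leaf of $\iota_i$ is simply the root and its label is witnessed by $A_i \in \{A_i\}_{i \in \mathcal{I}}$, so $\iota_i$ meets the condition imposed on components of an arrow into $\{A_i\}_{i \in \mathcal{I}}$ and $\{\iota_i\}_{i \in \mathcal{I}}$ is itself an arrow $\{A_i\}_{i \in \mathcal{I}} \to \{A_i\}_{i \in \mathcal{I}}$.

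For composition, given $\alpha = \{\alpha_j\}_{j \in \mathcal{J}} : \{A_i\}_{i \in \mathcal{I}} \to \{B_j\}_{j \in \mathcal{J}}$ and $\beta = \{\beta_k\}_{k \in \mathcal{K}} : \{B_j\}_{j \in \mathcal{J}} \to \{C_k\}_{k \in \mathcal{K}}$, I would define $\beta \circ \alpha = \{\gamma_k\}_{k \in \mathcal{K}}$ componentwise: for each $k$, the leaf $\Delta_k \subseteq \{B_j\}_{j \in \mathcal{J}}$ of the branch $\beta_k$ is a set containing precisely those $B_j$ whose associated branches $\alpha_j$ must be used to prolong the computation, so I would concatenate to $\beta_k$ the sequences of rule applications appearing in the branches $\alpha_j$ indexed by $B_j \in \Delta_k$, taken in some fixed canonical order (for instance, the order inherited from the well-ordering of $\mathcal{J}$). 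Because the tableau rules listed in Example~\ref{ex:sat-calculus} are monotone in the label set, this concatenation is a legitimate branch for $\Gamma \cup \{C_k\}$, and by construction its leaf is included in $\{A_i\}_{i \in \mathcal{I}}$.

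The unit laws then follow because pre- or post-composing with a family of empty branches does not alter the rule applications of the original branch. For associativity, I would observe that, when three composable arrows $\alpha, \beta, \gamma$ are given, the families $(\gamma \circ \beta) \circ \alpha$ and $\gamma \circ (\beta \circ \alpha)$ result from concatenating the same rule sequences of $\gamma$, $\beta$ and $\alpha$, in the same canonical order determined by the chosen ordering of the intermediate index sets, so they coincide as families of branches. The main obstacle I expect is precisely the choice of canonical ordering used to interleave the $\alpha_j$'s at the leaf of $\beta_k$: any ad hoc choice risks yielding associativity only up to a permutation of rule applications, so I would fix a uniform scheme (lexicographic on the pair ``ancestor index in the composition chain, position within the component branch'') that makes the two bracketings produce identical rule-application traces, thereby giving strict equality of the composite arrows.
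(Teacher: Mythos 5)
Your proposal follows essentially the same route as the paper's proof: identities are the length-one (no-rule-applied) branches, composition extends each branch of the second arrow by replaying the rule applications of the corresponding branches of the first, and the unit and associativity laws are reduced to the fact that composition is concatenation of sequences of sets of formulae. The only difference is that you explicitly worry about fixing a canonical order in which the appended branches are interleaved; the paper's proof leaves this implicit and simply declares the identity and associativity checks straightforward, so your version is, if anything, slightly more careful on a point the paper glosses over.
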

\begin{proof}
Let us prove that $\mathit{Str}^{\Sigma, \Gamma} = \<\mathcal{O}, \mathcal{A}\>$ is a category. For any set $\{A_i \}_{i \in \mathcal{I}} \in \mathcal{O}$, the identity is given by the collection of branches $\alpha_i : \{ A_i \}$ (of length $1$), i.e., no rule is applied. 

Now, given $\alpha : \{A_i\}_{i \in \mathcal{I}} \rightarrow \{ B_j \}_{j \in \mathcal{J}}, \beta : \{ B_j \}_{j \in \mathcal{J}} \rightarrow \{ C_q \}_{q \in \mathcal{Q}} \in \mathcal{A}$, their composition $\beta \circ \alpha = \gamma$ is defined as follows: let $\{\alpha_j : \{ B_j \} \rightarrow \dots \rightarrow S \cup \{ A_i \}_{i \in \mathcal{I}}\}_{j \in \mathcal{J}}$ and $\{\beta_q: \{ C_q \} \rightarrow \dots \rightarrow S' \cup  \{ B_j \}_{j \in \mathcal{J}}\}_{q \in \mathcal{Q}}$ be branches; then, $\{\gamma_q: \{C_q\} \rightarrow \dots \rightarrow \{B_j\}_{j \in \mathcal{J}} \cup S' \rightarrow \dots \rightarrow \{A_i\}_{i \in \mathcal{I}} \cup S \cup S'\}_{q \in \mathcal{Q}}$ is the branch obtained by extending each branch in $\beta$ with the corresponding branches in $\alpha$.

It remains to prove that $\circ$ has identities and is associative. Both proofs are straightforward by observing that $\circ$ is defined to be the concatenation of sequences of sets of formulae.
\end{proof}

\begin{lemma}
\label{lemma:strict-monoidal-cat}
Let $\Sigma \in |\mathsf{Sign}|$ and $\Gamma \subseteq \mathbf{Sen}(\Sigma)$; then $\<\mathit{Str}^{\Sigma, \Gamma}, \cup, \emptyset\>$, where $\cup: \mathit{Str}^{\Sigma, \Gamma} \times \mathit{Str}^{\Sigma, \Gamma} \to \mathit{Str}^{\Sigma, \Gamma}$ is the typical bi-functor on sets and functions, and $\emptyset$ is the neutral element for $\cup$, is a strict monoidal category. 
\end{lemma}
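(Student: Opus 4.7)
The plan is to check the three defining data of a strict monoidal category in turn: that $\cup$ is a bifunctor on $\mathit{Str}^{\Sigma, \Gamma}$, that associativity holds as an equality (not merely up to isomorphism), and that $\emptyset$ is a strict two-sided unit. Since Lemma~\ref{lemma:str-category} already establishes that $\mathit{Str}^{\Sigma, \Gamma}$ is a category, the content of the proof lies entirely in verifying these properties of $\cup$ and $\emptyset$.

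First I would fix the action of $\cup$ on both objects and morphisms. On objects, since $\mathcal{O} = 2^{\mathbf{Sen}(\Sigma)}$, set-theoretic union is evidently closed, so the operation is well defined. On morphisms, given $\alpha:\{A_i\}_{i \in \mathcal{I}} \to \{B_j\}_{j \in \mathcal{J}}$ with $\alpha = \{\alpha_j\}_{j \in \mathcal{J}}$, and $\alpha':\{A'_{i'}\}_{i' \in \mathcal{I}'} \to \{B'_{j'}\}_{j' \in \mathcal{J}'}$ with $\alpha' = \{\alpha'_{j'}\}_{j' \in \mathcal{J}'}$, I would define $\alpha \cup \alpha'$ to be the family of branches obtained by taking the disjoint union of the two indexing sets and keeping each original branch, yielding a morphism of type $\{A_i\}_{i \in \mathcal{I}} \cup \{A'_{i'}\}_{i' \in \mathcal{I}'} \to \{B_j\}_{j \in \mathcal{J}} \cup \{B'_{j'}\}_{j' \in \mathcal{J}'}$. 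Each individual branch in the union is unchanged, so the conditions on leaves and satisfiability in Def.~\ref{def:str} carry over unchanged. Functoriality of $\cup$ then reduces to two checks: that identities union to identities (immediate, since identities are per-element length-one branches), and that composition is preserved, which follows because composition of morphisms in $\mathit{Str}^{\Sigma, \Gamma}$ is defined branch-wise by concatenation, and the branch-wise concatenation of a union of families agrees with the union of the branch-wise concatenations.

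Second, strict associativity $(\alpha \cup \beta) \cup \gamma = \alpha \cup (\beta \cup \gamma)$ is inherited on objects from the strict associativity of set-theoretic union, and on morphisms from the associativity of disjoint unions of indexed families of branches. Third, the strict unit laws $\emptyset \cup X = X = X \cup \emptyset$ hold trivially on objects, and on morphisms they follow from the fact that the empty object has only the identity morphism (the empty family of branches), whose union with any family $\{\alpha_j\}_{j \in \mathcal{J}}$ yields $\{\alpha_j\}_{j \in \mathcal{J}}$ itself.

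The main obstacle, and the only nontrivial step, is making the action of $\cup$ on morphisms rigorous enough that functoriality holds on the nose: one must be careful that composition in $\mathit{Str}^{\Sigma, \Gamma}$ (concatenation of branches along matching leaves, as defined in the proof of Lemma~\ref{lemma:str-category}) distributes strictly over the disjoint union of indexing sets, rather than only up to re-indexing. Once the bookkeeping of indices is fixed (for instance, by taking the coproduct of indexing sets as the index of $\cup$), the remaining equations reduce to set-theoretic identities and the conclusion follows.
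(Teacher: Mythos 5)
Your proposal is correct and follows essentially the same route as the paper's proof: you define $\cup$ on objects as set union and on morphisms as the union of the families of branches, verify functoriality by noting that identities (length-one branches) union to identities and that branch-wise concatenation distributes over the union, and conclude strictness because the associativity and unit constraints hold as identities. The only difference is cosmetic: you track the indexing sets via disjoint union, where the paper simply takes the union of the branch families and observes that the coherence isomorphisms are identities.
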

\begin{proof} Consider the bifunctor $\cup : \mathit{Str}^{\Sigma, \Gamma} \times \mathit{Str}^{\Sigma, \Gamma} \to \mathit{Str}^{\Sigma, \Gamma}$ which behaves as follows:
Given sets $A$ and $B$ $A \cup B$ is their union. Given a  pair of arrows $\alpha : A \rightarrow B$ and $\beta : C \rightarrow D$, where $\alpha = \{\alpha_i\}_{i \in \mathcal{I}}$
and $\beta = \{ \beta_j \}_{j \in \mathcal{J}}$, their union is $\alpha \cup \beta =  \{\alpha_i\}_{i \in \mathcal{I}} \cup  \{ \beta_j \}_{j \in \mathcal{J}}$. 
	Note that this functor is well defined: the union of the identities $id_{\{A_i\}_{i \in \mathcal{I}}} \cup id_{\{B_j\}_{j \in \mathcal{J}}}$ is a set of branches of length $1$
and so is an identity too; and the composition is preserved, since it is built point wise.
 	On the other hand,	the identity object of the monoidal category is $\emptyset$ and the natural isomorphisms are given by the identity which trivially makes the
	associativity and identity diagrams commute.
\end{proof}

\begin{definition}
\label{def-struct}
$\mathsf{Struct}_{SC}$ is defined as $\<\mathcal{O}, \mathcal{A}\>^\op$ where $\mathcal{O} = \{\<\mathit{Str}^{\Sigma, \Gamma}, \cup, \emptyset\> \ |\ \Sigma \in |\mathsf{Sign}| \land \Gamma \subseteq \mathbf{Sen}(\Sigma)\}$, and $\mathcal{A} = \{\widehat{\sigma}: \<\mathit{Str}^{\Sigma, \Gamma}, \cup, \emptyset\> \to \<\mathit{Str}^{\Sigma', \Gamma'}, \cup, \emptyset\>\ |\ \sigma:\<\Sigma, \Gamma\> \to \<\Sigma', \Gamma'\> \in ||\mathsf{Th}||\}$, the homomorphic extensions of the morphisms in $||\mathsf{Th}||$ to sets of formulae preserving the application of rules (i.e., the structure of the tableaux).
\end{definition}

\begin{lemma}
\label{lemma:struct-cat}
Let $\mathsf{Struct}_{SC}$ be defined as in Def.~\ref{def-struct}. Then, $\mathsf{Struct}_{SC}$ is a category.
\end{lemma}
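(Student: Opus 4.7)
The plan is to verify the category axioms for $\langle \mathcal{O}, \mathcal{A}\rangle$ (the unopped version), since taking the opposite is an automatic construction once the underlying data forms a category. Thus I would show:
\begin{enumerate}
\item For every theory morphism $\sigma:\langle\Sigma,\Gamma\rangle\to\langle\Sigma',\Gamma'\rangle$ in $\|\mathsf{Th}\|$, the homomorphic extension $\widehat{\sigma}$ is well-defined as an arrow $\langle\mathit{Str}^{\Sigma,\Gamma},\cup,\emptyset\rangle\to\langle\mathit{Str}^{\Sigma',\Gamma'},\cup,\emptyset\rangle$.
\item Identities and composition in $\mathcal{A}$ are inherited from $\mathsf{Th}$.
\item Associativity and the unit laws follow from their counterparts in $\mathsf{Th}$.
\end{enumerate}

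For step~(1), $\widehat{\sigma}$ acts on objects by $\widehat{\sigma}(\{A_i\}_{i\in\mathcal{I}}) = \{\mathbf{Sen}(\sigma)(A_i)\}_{i\in\mathcal{I}}$, and on arrows by translating the formulae occurring at each node of each branch through $\mathbf{Sen}(\sigma)$. The key verification here is that the image of a branch remains a branch, i.e., that each rule application in the original tableau is mapped to a legal rule application in the target. Since all tableau rules in Ex.~\ref{ex:sat-calculus} are specified in terms of the connectives $\wedge,\vee,\neg$ and the quantifiers $\forall,\exists$, and $\mathbf{Sen}(\sigma)$ commutes with these operations (as part of being a signature morphism in $\mathbbm{I}_\mathit{FOL}$), the translated branch is obtained by applying exactly the same sequence of rules to the translated formulae. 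Moreover, since $\sigma$ is a theory morphism, for every $\gamma\in\Gamma$ we have $\Gamma'\vdash^{\Sigma'}\mathbf{Sen}(\sigma)(\gamma)$, so the requirement that the translated tableau be a tableau for $\Gamma'\cup\{\mathbf{Sen}(\sigma)(B_j)\}$ is satisfied.

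For step~(2), the identity on $\langle\mathit{Str}^{\Sigma,\Gamma},\cup,\emptyset\rangle$ is $\widehat{\mathit{id}_{\langle\Sigma,\Gamma\rangle}}$, which is literally the identity functor because $\mathbf{Sen}(\mathit{id}_\Sigma)=\mathit{id}_{\mathbf{Sen}(\Sigma)}$. Given $\sigma:\langle\Sigma,\Gamma\rangle\to\langle\Sigma',\Gamma'\rangle$ and $\tau:\langle\Sigma',\Gamma'\rangle\to\langle\Sigma'',\Gamma''\rangle$ in $\|\mathsf{Th}\|$, I would define their composite in $\mathcal{A}$ as $\widehat{\tau}\circ\widehat{\sigma}=\widehat{\tau\circ\sigma}$; this equation is exactly the functoriality of $\mathbf{Sen}$ applied node-by-node to branches. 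Step~(3) is then immediate: associativity and unit laws in $\mathcal{A}$ reduce to the same laws for $\circ$ in $\mathsf{Th}$, which hold by definition. Finally, taking $\mathcal{A}^{\op}$ flips the direction of arrows but preserves the category structure, yielding $\mathsf{Struct}_{SC}$.

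The main obstacle I anticipate is the bookkeeping in step~(1), specifically checking that the translation of a branch under $\widehat{\sigma}$ remains a legal branch, with particular care for the rules $[\forall]$ and $[\exists]$: for $[\forall]$ one needs $\mathbf{Sen}(\sigma)(P(t))$ to be the instantiation of $\mathbf{Sen}(\sigma)((\forall x)P(x))$ at the translated ground term $\sigma(t)$ (which follows because $\mathbf{Sen}(\sigma)$ commutes with substitution of ground terms), and for $[\exists]$ one must ensure that the new constant $c$ introduced remains fresh after translation, which can be arranged by relabelling if $\sigma$ happens to collide with $c$. Apart from this, the remaining verifications are routine diagram chases that rely entirely on the functoriality of $\mathbf{Sen}$ and the definition of $\|\mathsf{Th}\|$.
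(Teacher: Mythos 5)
Your proposal is correct and follows essentially the same route as the paper: prove that the un-opped data $\langle\mathcal{O},\mathcal{A}\rangle$ is a category by inheriting identities, composition, and associativity from $\mathsf{Th}$ via the homomorphic extension $\sigma\mapsto\widehat{\sigma}$, and then pass to the opposite category. The only difference is that you spell out the well-definedness of $\widehat{\sigma}$ on branches (including the $[\forall]$/$[\exists]$ bookkeeping), which the paper's proof simply asserts as ``translating sets of formulae and preserving the application of the rules.''
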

\begin{proof}
First we prove that $\<\mathcal{O}, \mathcal{A}\>$ where $\mathcal{O} = \{\<\mathit{Str}^{\Sigma, \Gamma}, \cup, \emptyset\> \ |\ \Sigma \in |\mathsf{Sign}| \land \Gamma \subseteq \mathbf{Sen}(\Sigma)\}$, and $\mathcal{A} = \{\widehat{\sigma}: \<\mathit{Str}^{\Sigma, \Gamma}, \cup, \emptyset\> \to \<\mathit{Str}^{\Sigma', \Gamma'}, \cup, \emptyset\>\ |\ \sigma:\<\Sigma, \Gamma\> \to \<\Sigma', \Gamma'\> \in ||\mathsf{Th}||\}$ is a category. 

Morphisms $\widehat{\sigma} \in \mathcal{A}$ are the homomorphic extension of the morphisms $\sigma \in ||\mathsf{Th}||$ to the structure of the tableaux, translating sets of formulae and preserving the application of the rules. Following this, the composition of $\widehat{\sigma_1}, \widehat{\sigma_2} \in \mathcal{A}$, the homomorphic extension of $\sigma_1, \sigma_2 \in ||\mathsf{Th}||$, not only exists, but it is the homomorphic extension of the morphism $\sigma_1 \circ \sigma_2 \in ||\mathsf{Th}||$.  The associativity of the composition is also trivial to prove by considering that the morphisms are homomorphic extensions, and by the associativity of the composition of morphisms in $\mathsf{Th}$. The identity morphism is the homomorphic extension of the identity morphism for the corresponding signature.

Then, as a direct consequence we obtain that $\mathsf{Struct}_{SC}$ is a category.
\end{proof}

\begin{definition}
\label{def-m}
$\mathbf{M}: \mathsf{Th}^\op \to \mathsf{Struct}_{SC}$ is defined as $\mathbf{M} (\<\Sigma, \Gamma\>) = \<\mathit{Str}^{\Sigma, \Gamma}, \cup, \emptyset\>$ and for any $\sigma: \<\Sigma, \Gamma\> \to \<\Sigma', \Gamma'\> \in ||\mathsf{Th}||$, $\mathbf{M} (\sigma^\op) = \widehat{\sigma}^\op$ where $\widehat{\sigma}: \<\mathit{Str}^{\Sigma, \Gamma}, \cup, \emptyset\> \to \<\mathit{Str}^{\Sigma', \Gamma'}, \cup, \emptyset\>$ is the homomorphic extension of $\sigma$ to the structures in $\<\mathit{Str}^{\Sigma, \Gamma}, \cup, \emptyset\>$. 
\end{definition}

\begin{lemma}
\label{lemma:m-functor}
Let $\mathbf{M}: \mathsf{Th}^\op \to \mathsf{Struct}_{SC}$ be defined as in Definition~\ref{def-m}. Then $\mathbf{M}$ is a functor.
\end{lemma}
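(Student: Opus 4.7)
The plan is to unpack what it means for $\mathbf{M}$ to be a functor into the two usual obligations (preservation of identities and of composition), and to show that both reduce to structural properties of the homomorphic extension operation $\sigma \mapsto \widehat{\sigma}$. First I would check well-definedness on objects: for every theory presentation $\<\Sigma,\Gamma\> \in |\mathsf{Th}|$, the triple $\<\mathit{Str}^{\Sigma,\Gamma},\cup,\emptyset\>$ is indeed an object of $\mathsf{Struct}_{SC}$; this is immediate from Lemma~\ref{lemma:strict-monoidal-cat}, which gives us the strict monoidal category structure required by Def.~\ref{def-struct}. Then I would check well-definedness on morphisms: for each $\sigma : \<\Sigma,\Gamma\> \to \<\Sigma',\Gamma'\>$ in $\mathsf{Th}$, the homomorphic extension $\widehat{\sigma}$ is by Def.~\ref{def-struct} precisely an arrow of $\<\mathcal{O},\mathcal{A}\>$, so $\widehat{\sigma}^\op$ is an arrow of $\mathsf{Struct}_{SC} = \<\mathcal{O},\mathcal{A}\>^\op$, and source and target are mapped correctly.

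The heart of the proof is then to verify that $\widehat{(\cdot)}$ is itself functorial from $\mathsf{Th}$ to $\<\mathcal{O},\mathcal{A}\>$; from this, the contravariant version on the opposites is automatic. For identities, I would observe that given $\mathit{id}_{\<\Sigma,\Gamma\>}$, its homomorphic extension acts as the identity on every set of formulae in $|\mathit{Str}^{\Sigma,\Gamma}|$ and sends every branch in $\mathit{Str}^{\Sigma,\Gamma}$ to itself (no rule application is altered since $\mathbf{Sen}(\mathit{id}_\Sigma)$ is the identity on sentences), so $\widehat{\mathit{id}_{\<\Sigma,\Gamma\>}} = \mathit{id}_{\<\mathit{Str}^{\Sigma,\Gamma},\cup,\emptyset\>}$. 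For composition, given $\sigma_1 : \<\Sigma,\Gamma\> \to \<\Sigma',\Gamma'\>$ and $\sigma_2 : \<\Sigma',\Gamma'\> \to \<\Sigma'',\Gamma''\>$, I would prove $\widehat{\sigma_2 \circ \sigma_1} = \widehat{\sigma_2} \circ \widehat{\sigma_1}$ by structural induction on branches of tableaux: on objects (sets of formulae) this reduces to functoriality of $\mathbf{Sen}$ lifted pointwise, and on arrows one checks rule-by-rule that translating after composing equals composing the translations.

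The main obstacle I anticipate is the case of the quantifier rules $[\forall]$ and $[\exists]$ in the step where one verifies that $\widehat{(\cdot)}$ actually preserves the rule-application structure of branches. For $[\forall]$, replacing $(\forall x)P(x)$ with $(\forall x)P(x) \cup \{P(t)\}$ requires the translated rule to use a ground $\Sigma''$-term; here I would rely on $\mathbf{Sen}(\sigma)$ mapping ground $\Sigma$-terms to ground $\Sigma''$-terms, so the witness is indeed a legitimate instance of $[\forall]$ after translation. For $[\exists]$ the issue is the freshness of the introduced constant: one must argue that a constant $c$ fresh for the source branch can be systematically chosen so that the image under $\widehat{\sigma}$ remains fresh for the translated branch, which up to renaming of witnesses gives a well-defined branch and ensures commutation of $\widehat{(\cdot)}$ with composition. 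Once these rule-by-rule checks succeed, contravariance on $\mathsf{Th}^\op$ follows formally, completing the verification that $\mathbf{M}$ is a functor.
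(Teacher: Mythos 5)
Your plan is correct and follows essentially the same route as the paper's proof: identities are preserved because the homomorphic extension of the identity morphism is the identity on $\mathit{Str}^{\Sigma,\Gamma}$, composition is preserved because the homomorphic extension of a composite is the composite of the extensions, and contravariance is then formal via the opposite category. The paper's argument is much terser---it simply invokes ``composition of homomorphisms is a homomorphism''---while your rule-by-rule check, including the ground-term translation for $[\forall]$ and the freshness-up-to-renaming argument for $[\exists]$, spells out exactly what the paper hides inside the phrase ``preserving the application of rules''.
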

\begin{proof}
Let $id_{\<\Sigma, \Gamma\>}: \<\Sigma, \Gamma\> \to \<\Sigma, \Gamma\> \in ||\mathsf{Th}||$ be the identity morphism for $\<\Sigma, \Gamma\> \in |\mathsf{Th}|$. $\mathbf{M} ({\mathit{id}_{\<\Sigma, \Gamma\>}}^\op) = {id_{\<\mathit{Str}^{\Sigma, \Gamma}, \cup, \emptyset\>}}^\op$ because, by Def.~\ref{def-struct},  $\mathit{id}_{\<\mathit{Str}^{\Sigma, \Gamma}, \cup, \emptyset\>}$ is the homomorphic extension of $id_{\<\Sigma, \Gamma\>}$ to the structures in $\mathit{Str}^{\Sigma, \Gamma}$.

Let $\sigma_1: \<\Sigma_1, \Gamma_1\> \to \<\Sigma_2, \Gamma_2\>, \sigma_2: \<\Sigma_2, \Gamma_2\> \to \<\Sigma_3, \Gamma_3\> \in ||\mathsf{Th}||$; now, as composition of homomorphisms is a homomorphism, then $\mathbf{M} ({(\sigma_1 \circ \sigma_2)}^\op) = \mathbf{M} ({\sigma_2}^\op \circ {\sigma_1}^\op)$ by definition of opposite category. Thus, it is the composition $\mathbf{M} ({\sigma_2}^\op) \circ \mathbf{M} ({\sigma_1}^\op)$.
\end{proof}

%

\begin{definition}
\label{def-mods}
$\mathbf{Mods}: {\mathsf{Struct}_{SC}} \to \mathsf{Cat}$ is defined as:
\begin{itemize}
\item $\mathbf{Mods} (\<\mathit{Str}^{\Sigma, \Gamma}, \cup, \emptyset\>) = \<\mathcal{O}, \mathcal{A}\>$ where $\mathcal{O} = \bigcup\{ \<\Sigma, \widetilde{\Delta}\>\ |\ (\exists \alpha: \Delta \to \emptyset \in |\mathit{Str}^{\Sigma, \Gamma}|) (\widetilde{\Delta} \to \emptyset \in \alpha \land (\forall \alpha': \Delta' \to \Delta \in ||\mathit{Str}^{\Sigma, \Gamma}||)(\Delta' = \Delta) \land \neg(\exists \varphi)(\{\neg\varphi, \varphi\} \subseteq \widetilde{\Delta}))\}$ and $\mathcal{A} = \{ id_{T}: T \to T\ |\ T \in \mathcal{O} \}$, and
\item for all $\sigma: \<\Sigma, \Gamma\> \to \<\Sigma', \Gamma'\> \in ||\mathsf{Th}||$, $\mathbf{Mods}(\widehat{\sigma}^\op)(\<\Sigma, \delta\>) = \<\Sigma', \mathbf{Sen}(\sigma)(\delta)\>$. 
\end{itemize}
\end{definition}

\begin{lemma}
\label{lemma:mods-functor}
Let $\mathbf{Mods}: {\mathsf{Struct}_{SC}} \to \mathsf{Cat}$ defined as in Definition~\ref{def-mods}. Then, $\mathbf{Mods}$ is a functor.
\end{lemma}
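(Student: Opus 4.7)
The plan is to verify the four functor axioms for $\mathbf{Mods}$: (i) each $\mathbf{Mods}(\<\mathit{Str}^{\Sigma, \Gamma}, \cup, \emptyset\>)$ is indeed a category; (ii) for each morphism $\widehat{\sigma}^\op$ of $\mathsf{Struct}_{SC}$, $\mathbf{Mods}(\widehat{\sigma}^\op)$ is a well-defined functor between the corresponding image categories; (iii) preservation of identities; and (iv) preservation of composition. Axiom (i) is essentially immediate from Def.~\ref{def-mods}, since the arrow set $\mathcal{A}$ consists only of identity arrows, so the image is a discrete category and all category axioms hold trivially.

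For axiom (ii), since both source and target are discrete categories, it suffices to verify that the action on objects lands in the correct discrete category. Concretely, one must show that if $\<\Sigma, \widetilde{\Delta}\>$ belongs to $\mathbf{Mods}(\<\mathit{Str}^{\Sigma, \Gamma}, \cup, \emptyset\>)$, so that $\widetilde{\Delta}$ is the leaf of an open, saturated branch of some tableau $\alpha: \Delta \to \emptyset$ in $\mathit{Str}^{\Sigma, \Gamma}$, then $\<\Sigma', \mathbf{Sen}(\sigma)(\widetilde{\Delta})\>$ has the analogous property in $\mathit{Str}^{\Sigma', \Gamma'}$. This I would decompose into three sub-claims: the tableau rules are schematic and commute with $\mathbf{Sen}(\sigma)$, so $\mathbf{Sen}(\sigma)$ transports any tableau over $\Gamma$ to a tableau over $\Gamma'$; saturation is preserved because the branching structure of the translated tableau mirrors that of the original; and openness is preserved since $\mathbf{Sen}(\sigma)(\neg \varphi) = \neg \mathbf{Sen}(\sigma)(\varphi)$ for first-order signature morphisms, so any contradiction $\{\psi, \neg \psi\}$ appearing in the image set must already correspond to an explicit contradiction in the preimage.

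For axioms (iii) and (iv), the argument reduces to the functoriality of $\mathbf{Sen}: \mathsf{Sign} \to \mathsf{Set}$. From $\mathbf{Sen}(id_{\<\Sigma,\Gamma\>}) = id_{\mathbf{Sen}(\Sigma)}$ one obtains $\mathbf{Mods}(\widehat{id_{\<\Sigma,\Gamma\>}}^\op)(\<\Sigma,\delta\>) = \<\Sigma, \delta\>$, so identities are preserved. For composition, given $\sigma_1$ and $\sigma_2$ composable in $\mathsf{Th}$, the identity $\mathbf{Sen}(\sigma_2 \circ \sigma_1) = \mathbf{Sen}(\sigma_2) \circ \mathbf{Sen}(\sigma_1)$, combined with the fact that $(-)^\op$ reverses arrow composition, yields $\mathbf{Mods}(\widehat{\sigma_2 \circ \sigma_1}^\op) = \mathbf{Mods}(\widehat{\sigma_1}^\op) \circ \mathbf{Mods}(\widehat{\sigma_2}^\op)$ as required.

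The main obstacle will be axiom (ii), namely the claim that translation by $\mathbf{Sen}(\sigma)$ preserves the property of being the leaf of an open saturated branch. Openness is the most delicate point: a priori, a signature morphism could identify distinct sentences whose images happen to produce a propositional contradiction, and ruling this out forces one to invoke the concrete description of $\mathbf{Sen}(\sigma)$ for $\mathbb{I}_{FOL}$, which acts homomorphically on the propositional connectives. Saturation demands care with the rules $[\forall]$ and $[\exists]$, where one must check that ground-term instantiations and freshly introduced constants in the original branch correspond coherently to ground terms and fresh constants after translation, so that no rule becomes newly applicable in the image.
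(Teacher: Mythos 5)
Your overall reduction coincides with the paper's: since both $\mathbf{Mods}$-images are discrete categories, everything rests on the well-definedness of the object map, i.e.\ on showing that $\<\Sigma', \mathbf{Sen}(\sigma)(\widetilde{\Delta})\>$ is again the leaf of an open branch of a saturated tableau in $\mathit{Str}^{\Sigma', \Gamma'}$; the paper discharges this in one line by appealing to $\widehat{\sigma}$ being the homomorphic extension of $\mathbf{Sen}(\sigma)$ to the tree structure of tableaux, and does not even mention identities and composition, which you handle correctly via functoriality of $\mathbf{Sen}$ and discreteness.

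The gap is in your sub-claim (ii), precisely the step the whole lemma rests on. For openness you argue that, because $\mathbf{Sen}(\sigma)$ commutes with $\neg$, any pair $\{\psi, \neg\psi\}$ in the image must come from an explicit contradiction in $\widetilde{\Delta}$. That inference fails whenever $\sigma$ identifies extralogical symbols: if $\sigma$ sends two distinct predicate symbols $P$ and $Q$ to the same symbol $R$, the open set containing $\{P(a), \neg Q(a)\}$ is mapped onto a set containing $\{R(a), \neg R(a)\}$, which is closed, although no contradiction occurs in the preimage; homomorphic action on the connectives is irrelevant here because the collision happens at the atomic level. Such $\sigma$ are legitimate morphisms in $\mathsf{Th}$ (take $\Gamma = \{P(a), \neg Q(a)\}$ and $\Gamma' = \{R(a) \land \neg R(a)\}$, so that $\Gamma' \vdash^{\Sigma'} \mathbf{Sen}(\sigma)(\gamma)$ for every $\gamma \in \Gamma$), so your argument as stated does not rule the situation out. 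Likewise, your saturation concern about $[\forall]$ and $[\exists]$ is raised but never discharged: if $\Sigma'$ has ground terms outside the image of $\sigma$, the rule $[\forall]$ becomes newly applicable to the translated leaf, so the image of a saturated branch need not be saturated. To close the proof you would need either to restrict the class of morphisms so that these pathologies cannot arise (e.g.\ injectivity on atomic sentences and a suitable surjectivity condition on ground terms), or to weaken the transfer claim to say that the image set extends to, or is contained in, the leaf of an appropriate branch in the target structure, and then verify that this suffices for membership in $|\mathbf{Mods}(\<\mathit{Str}^{\Sigma', \Gamma'}, \cup, \emptyset\>)|$. For what it is worth, the paper's own proof asserts the same transfer property without confronting these cases, so you have correctly located the delicate point; but the justification you give for it does not go through.
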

\begin{proof}
As for each theory $\<\mathit{Str}^{\Sigma, \Gamma}, \cup, \emptyset\> \in |\mathsf{Struct}_{SC}|$, $\mathbf{Mods} (\<\mathit{Str}^{\Sigma, \Gamma}, \cup, \emptyset\>)$ is a discrete category containing theory presentations whose models are models of $\<\Sigma, \Gamma\>$, thus the only property that must be proved is that for all $\widehat{\sigma}: \<\mathit{Str}^{\Sigma, \Gamma}, \cup, \emptyset\> \to \<\mathit{Str}^{\Sigma', \Gamma'}, \cup, \emptyset\> \in ||\mathsf{Struct}_{SC}||$, $o \in |\mathbf{Mods}(\<\mathit{Str}^{\Sigma, \Gamma}, \cup, \emptyset\>)|$, $\mathbf{Mods}(\widehat{\sigma})(o) \in |\mathbf{Mods}(\<\mathit{Str}^{\Sigma', \Gamma'}, \cup, \emptyset\>)|$. By definition, $\mathbf{Mods}(\widehat{\sigma})(\<\Sigma, \widetilde{\Delta}\>) = \<\Sigma', \mathbf{Sen}(\sigma)(\widetilde{\Delta})\>$. Observe that, as a consequence of the fact that $\widehat{\sigma}$  is the homomorphic extension of $\mathbf{Sen}(\sigma)$ to the tree-like structure of tableaux, the theory presentation obtained by applying $\mathbf{Mods}(\widehat{\sigma})$ to a particular element of $\mathbf{Mods} (\<\mathit{Str}^{\Sigma, \Gamma}, \cup, \emptyset\>)$ is a theory presentation whose set of axioms is a leaf of a branch of a tableau in $\<\mathit{Str}^{\Sigma', \Gamma'}, \cup, \emptyset\>$. 
\end{proof}

\begin{definition}
\label{def-mu}
Let $\<\Sigma, \Delta\> \in |\mathsf{Th}|$, then we define $\mu_{\<\Sigma, \Delta\>}: \mathbf{models} (\<\Sigma, \Delta\>) \to \mathbf{Mod}_{FOL}(\<\Sigma, \Delta\>)$ as for all $\<\Sigma, \delta\> \in |\mathbf{models}(\<\Sigma, \Delta\>)|$, $\mu_{\<\Sigma, \Delta\>} (\<\Sigma, \delta\>) = \mathbf{Mod}_{FOL}(\<\Sigma, \delta\>)$.
\end{definition}

\begin{fact}
Let $\<\Sigma, \Gamma\> \in |\mathsf{Th}|$ and $\mu_{\<\Sigma, \Delta\>}: \mathbf{models} (\<\Sigma, \Delta\>) \to \mathbf{Mod}_{FOL}(\<\Sigma, \Delta\>)$ defined as in Def.~\ref{def-mu}. Let $\Sigma \in |\mathsf{Sign}_{FOL}|$ and $\Gamma \subseteq \mathbf{Sen}_{FOL}(\Sigma)$, then $\mu_{\<\Sigma, \Gamma\>}$ is a functor.
\end{fact}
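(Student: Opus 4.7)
The plan is to exploit the fact that, by construction, the source category $\mathbf{models}(\langle \Sigma, \Gamma\rangle) = \mathbf{Mods}(\mathbf{M}(\langle \Sigma, \Gamma\rangle))$ is discrete: its only arrows are the identities $\mathit{id}_T: T \to T$ prescribed in Def.~\ref{def-mods}. Once this is observed, functoriality reduces to two independent obligations: (i) well-definedness on objects, i.e.\ that $\mu_{\langle \Sigma, \Gamma\rangle}(\langle \Sigma, \delta\rangle)$ really lands inside $\mathscr{P}\circ \mathbf{Mod}_{FOL}(\langle \Sigma, \Gamma\rangle)$; and (ii) preservation of the categorical structure, which, given (i) and the discreteness of the source, collapses to sending each identity to an identity.

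For step (i) I would unfold how $\langle \Sigma,\delta\rangle$ is produced in Def.~\ref{def-mods}: $\delta$ is the label of the leaf of an open and saturated branch of a tableau starting at $\Gamma$. By the design of the tableau rules recalled in Ex.~\ref{ex:sat-calculus}, every rule extends the label of a node by subformulae (or instances of the quantified formulae), never discards formulae, and preserves satisfiability. Consequently $\Gamma \subseteq \delta$, which together with the definition of $\mathbf{Mod}_{FOL}$ on theories gives $\mathbf{Mod}_{FOL}(\langle \Sigma, \delta\rangle) \subseteq \mathbf{Mod}_{FOL}(\langle \Sigma, \Gamma\rangle)$ as a full subcategory. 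Hence the image of $\mu_{\langle \Sigma,\Gamma\rangle}$ is indeed contained in $\mathscr{P}\circ \mathbf{Mod}_{FOL}(\langle \Sigma, \Gamma\rangle)$, which is the content promised by Def.~\ref{sat-calculus}.

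For step (ii), since $\mathbf{models}(\langle \Sigma, \Gamma\rangle)$ is discrete, preservation of composition is vacuous: there are no non-trivial composable pairs of arrows to check. Preservation of identities amounts to noting that $\mu_{\langle \Sigma, \Gamma\rangle}(\mathit{id}_{\langle \Sigma, \delta\rangle})$ must, by construction, be the identity functor on the full subcategory $\mathbf{Mod}_{FOL}(\langle \Sigma, \delta\rangle)$, which is itself the identity arrow on that object of $\mathscr{P}\circ \mathbf{Mod}_{FOL}(\langle \Sigma, \Gamma\rangle)$.

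The only non-routine ingredient is the inclusion $\Gamma \subseteq \delta$ underpinning step (i); this is where the care put into the tableau formulation in Ex.~\ref{ex:sat-calculus} (rules that \emph{accumulate} rather than \emph{replace}, and the exclusion of leaves containing $\{\varphi, \neg\varphi\}$) is doing the real work. Once that inclusion is written down explicitly, the rest of the argument is a direct appeal to the definitions, and no further categorical machinery is required.
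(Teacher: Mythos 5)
Your argument is correct, but there is nothing in the paper to compare it against: the statement is given as a \emph{Fact} with no proof at all (only the naturality of $\mu$ receives an argument, in Lemma~\ref{lemma:mu-nattrans}, and even that is a one-line remark that canonical models behave as theory presentations). Your write-up supplies exactly the details the authors treat as immediate, and in the way they evidently intend: by Def.~\ref{def-mods} the source $\mathbf{models}(\langle\Sigma,\Gamma\rangle)$ is a discrete category, so preservation of composition is vacuous and preservation of identities is automatic once the object assignment is well defined; and well-definedness follows because every tableau rule in Ex.~\ref{ex:sat-calculus} accumulates formulae, so the leaf label $\delta$ of an open saturated branch satisfies $\Gamma\subseteq\delta$ and hence $\mathbf{Mod}_{FOL}(\langle\Sigma,\delta\rangle)$ is a (full) subcategory of $\mathbf{Mod}_{FOL}(\langle\Sigma,\Gamma\rangle)$. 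Two small points worth making explicit if this were to be incorporated: first, the Fact as stated gives $\mathbf{Mod}_{FOL}(\langle\Sigma,\Gamma\rangle)$ as codomain, whereas $\mu_{\langle\Sigma,\Gamma\rangle}$ sends each object to a whole subcategory of models, so the codomain really has to be read as $\mathscr{P}\circ\mathbf{Mod}_{FOL}(\langle\Sigma,\Gamma\rangle)$ (or $\mathsf{Cat}$, as in the main-text version of Def.~\ref{def-mu}) --- your proposal silently adopts this reading, which is the one consistent with Def.~\ref{sat-calculus}; second, since the paper never spells out the arrow structure of $\mathscr{P}\circ\mathbf{Mod}_{FOL}(T)$, the claim that $\mu(\mathit{id}_{\langle\Sigma,\delta\rangle})$ is ``the'' identity arrow on $\mathbf{Mod}_{FOL}(\langle\Sigma,\delta\rangle)$ is as precise as the paper's own level of detail permits, so no fault lies with you there.
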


\begin{lemma}
\label{lemma:mu-nattrans}
Let $\<\Sigma, \Gamma\> \in |\mathsf{Th}|$ and $\mu_{\<\Sigma, \Delta\>}: \mathbf{models} (\<\Sigma, \Delta\>) \to \mathbf{Mod}_{FOL}(\<\Sigma, \Delta\>)$ defined as in Definition~\ref{def-mu}. Then, $\mu$ is a natural family of functors.
\end{lemma}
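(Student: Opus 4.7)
The plan is to establish naturality of $\mu$ by checking the naturality square pointwise for an arbitrary theory morphism $\sigma: \langle \Sigma, \Gamma \rangle \to \langle \Sigma', \Gamma' \rangle$ in $\mathsf{Th}$. Because each category $\mathbf{models}(\langle \Sigma, \Gamma \rangle)$ is discrete by Def.~\ref{def-mods} (only identity arrows), verifying the square on objects automatically extends it to morphisms, so no additional work is needed for arrows and the lemma really amounts to a set-level commutation.

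First, I would fix $\langle \Sigma', \delta' \rangle \in |\mathbf{models}(\langle \Sigma', \Gamma' \rangle)|$; by construction, $\delta'$ labels the leaf of some open branch of a saturated tableau rooted at $\Gamma'$. On one side of the square, $\mu_{\langle \Sigma', \Gamma' \rangle}(\langle \Sigma', \delta' \rangle) = \mathbf{Mod}(\langle \Sigma', \delta' \rangle)$, and pushing this through $\mathscr{P} \circ \mathbf{Mod}(\sigma^\op)$ yields the collection of $\sigma$-reducts of the $\Sigma'$-models of $\delta'$. On the other side, $\mathbf{models}(\sigma^\op)$ transports $\langle \Sigma', \delta' \rangle$ across $\widehat{\sigma}$ to a theory over $\Sigma$ whose axioms are obtained from $\delta'$ via $\mathbf{Sen}(\sigma)$ (as dictated by Def.~\ref{def-mods}), and then $\mu_{\langle \Sigma, \Gamma \rangle}$ returns its full subcategory of models in $\mathbf{Mod}(\Sigma)$.

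The equality of these two subcategories of $\mathbf{Mod}(\langle \Sigma, \Gamma \rangle)$ is a direct consequence of the $\models$-invariance clause of Def.~\ref{institution} applied formula by formula: a $\Sigma$-model $\mathcal{N}$ satisfies the $\sigma$-translation of every $\phi' \in \delta'$ precisely when $\mathcal{N}$ is the $\sigma^\op$-reduct of some $\Sigma'$-model that satisfies $\delta'$. This equivalence, promoted to (sub)classes of models and combined with the observation that $\mu_{\langle \Sigma, \Gamma \rangle}$ is already known to be a functor (the \emph{Fact} stated just before the lemma), is exactly the content of the required naturality equation.

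The main obstacle, and the part I would dedicate most care to, is bookkeeping the direction of arrows: $\mathsf{Struct}_{SC}$ is itself defined as an opposite category in Def.~\ref{def-struct}, $\mathbf{M}$ is contravariant on $\mathsf{Th}$, and the action of $\mathbf{Mods}$ on morphisms in Def.~\ref{def-mods} is stated in terms of $\widehat{\sigma}^\op$; aligning all of these so that $\mathbf{models}(\sigma^\op)$ really sends $\langle \Sigma', \delta' \rangle$ to a theory over the intended signature with the intended axioms is where mistakes are easiest. Once that bookkeeping is settled, the satisfaction condition of $\mathbb{I}_{FOL}$ closes the square with essentially no further computation.
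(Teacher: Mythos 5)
Your overall strategy coincides with the paper's: draw the naturality square for an arbitrary $\sigma: \langle\Sigma,\Gamma\rangle \to \langle\Sigma',\Gamma'\rangle$, use the fact that the categories $\mathbf{models}(T)$ are discrete to reduce the check to objects, and observe that canonical models are themselves theory presentations so that everything is settled at the level of $\mathsf{Th}$ and $\mathbf{Mod}$. However, the step you actually use to close the square is not sound as stated. First, there is a typing slip: since $\mathbf{Sen}(\sigma): \mathbf{Sen}(\Sigma) \to \mathbf{Sen}(\Sigma')$, you cannot produce ``a theory over $\Sigma$ whose axioms are obtained from $\delta'$ via $\mathbf{Sen}(\sigma)$'' when $\delta' \subseteq \mathbf{Sen}(\Sigma')$; what Def.~\ref{def-mods} actually prescribes is the forward translation $\mathbf{Mods}(\widehat{\sigma}^\op)(\langle\Sigma,\delta\rangle) = \langle\Sigma', \mathbf{Sen}(\sigma)(\delta)\rangle$, so your description of the second leg of the square does not match the definition you cite.

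Second, and more importantly, the biconditional you invoke --- that a $\Sigma$-model satisfies the translated sentences \emph{precisely when} it is the $\sigma^\op$-reduct of some $\Sigma'$-model satisfying them --- does not follow from the $\models$-invariance clause of Def.~\ref{institution}. That clause quantifies over a \emph{given} $\Sigma'$-model and its reduct, and therefore yields only one inclusion: reducts of models of the translated set are models of the original set. The converse direction, that every such $\Sigma$-model admits an expansion, is exactly the extra hypothesis the paper later has to isolate explicitly (the $\rho$-expansion property of Def.~\ref{rho-expansion}); it is not available here and is false for arbitrary signature morphisms. The paper's own (admittedly terse) argument avoids this entirely: because the objects of $\mathbf{models}(T)$ are (closed) theory presentations and $\mu$ is literally ``apply $\mathbf{Mod}$ to that presentation'', both composites of the square are obtained by letting $\mathbf{Sen}(\sigma)$ act on one and the same presentation and then applying $\mathbf{Mod}$ extended to $\mathsf{Th}^\op$, so commutation is inherited from the functoriality of $\mathbf{Mod}$ on theory presentations rather than from a model-by-model reduct/expansion equivalence. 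Recasting your object-level computation in that form repairs the argument; as written, the central ``precisely when'' is an unjustified (and in general false) strengthening of the satisfaction condition.
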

\begin{proof}
Let $\<\Sigma, \Gamma\>, \<\Sigma', \Gamma'\> \in |\mathsf{Th}|$ and $\sigma: \<\Sigma, \Gamma\> \to \<\Sigma', \Gamma'\> \in |\mathsf{Th}_0|$. Then, the naturality condition for $\mu$ can be expressed in the following way:
\[
\xymatrix@C=13pt@R=20pt{  
\<\Sigma', \Delta'\> \ar@{->}@/^0pc/[dd]^{\mu_{\<\Sigma', \Delta'\>}} \ar@{->}@/^0pc/[rrrr]^{\mathbf{models} (\sigma)}
&
& &&
\<\Sigma, \Delta\> \ar@{->}@/^0pc/[dd]^{\mu_{\<\Sigma, \Delta\>}}\\
&
& &&\\
\mathbf{Mod}_{FOL}(\<\Sigma', \Delta'\>) \ar@{->}@/^0pc/[rrrr]^{\mathbf{Mod}_{FOL}(\sigma)}
&
& &&
\mathbf{Mod}_{FOL}(\<\Sigma, \Delta\>)\\
}
\]
It is trivial to check that this condition holds by observing that canonical models are closed theories, thus behaving as theory presentations in $\mathsf{Th}$.
\end{proof}

Now, from Lemmas~\ref{lemma:m-functor},~\ref{lemma:mods-functor},~and~\ref{lemma:mu-nattrans}, and considering the hypothesis that $\mathbb{I}_{FOL}$ is an institution, the following corollary follows.

\begin{corollary}
$\< \mathsf{Sign}, \mathbf{Sen}, \mathbf{Mod}, \{\models^{\Sigma}\}_{\Sigma \in |\mathsf{Sign}|}, \mathbf{M}, \mathbf{Mods}, \mu \>$ is a satisfiability calculus.
\end{corollary}

\subsection{Example~\ref{ex:eff-sat-subcalculus}: Effectiveness of the satisfiability subcalculus for finite presentations over the term-free restriction of first-order modal logic}
\label{proofs-ex:eff-sat-subcalculus}
In Example~\ref{ex:eff-sat-subcalculus} we presented an argument of how the satisfiability subcalculus of the term-free fragment of first-order modal logic of Ex.~\ref{ex:sat-subcalculus} fits the definition of an effective satisfiability subcalculus. In this section we will provide the formal definitions and the results proving it. Let us denote by $\mathbb{Q} = \< \mathsf{Sign}, \mathbf{Sen}, \mathbf{Mod},  \mathsf{Sign}_0, \mathsf{Sen}_0, \mathbf{ax},  \{\models^{\Sigma}\}_{\Sigma \in |\mathsf{Sign}|}, \mathbf{M}, \mathbf{Mods}, \mu \>$ a satisfiability subcalculus for first-order modal logic.

$\mathsf{Sign}_0$ is a complete subcategory of $\mathsf{Sign}$ so we assume $J: \mathsf{Sign}_0 \to \mathsf{Sign}$ to be the functor that for all $\Sigma \in |\mathsf{Sign}_0|$, $J (\Sigma) = \Sigma$ and for all $\sigma \in ||\mathsf{Sign}_0||$, $J (\sigma) = \sigma$.

\begin{definition}
$\mathbf{Sen}_0$ is defined as the subfunctor of $\mathbf{Sen}$ resulting from restricting the latter to the objects and morphisms in $\mathsf{Sign}_0$.
\end{definition}

\begin{lemma}
\label{ex:eff-sen0-space}
$\mathbf{Sen}_0: \mathsf{Sign}_0 \to \mathsf{Space}$ is a functor.
\end{lemma}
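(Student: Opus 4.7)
The plan is to verify the two defining conditions for $\mathbf{Sen}_0$ being a functor into $\mathsf{Space}$: each image object $\mathbf{Sen}_0(\Sigma)$ must be a space (an infinite class of finite objects with decidable membership), and each image morphism $\mathbf{Sen}_0(\sigma)$ must be a recursive function. Functoriality itself then transfers for free from the fact that $\mathbf{Sen}$ is already known to be a functor.

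First I would fix an arbitrary $\Sigma \in |\mathsf{Sign}_0|$ and argue that $\mathbf{Sen}_0(\Sigma)$ is a space. Each $\Sigma$-formula is a finite syntactic object, built inductively from atomic formulae over finitely many function and predicate symbols of $\Sigma$ applied to terms over the variable set $\mathcal{X}$, combined via the logical connectives and quantifiers; hence it is representable as a finite string over a fixed countable alphabet. The class is infinite because nesting is unbounded, and membership is decidable by standard parsing on any candidate finite string. This is precisely Shoenfield's criterion for being a space.

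Next I would address morphisms: for $\sigma: \Sigma \to \Sigma' \in ||\mathsf{Sign}_0||$, set $\mathbf{Sen}_0(\sigma) = \mathbf{Sen}(\sigma)$, which performs the homomorphic substitution of non-logical symbols according to $\sigma$. This is a terminating structural-recursion algorithm running over the finite syntax tree of its input, so it is a total recursive function between the spaces $\mathbf{Sen}_0(\Sigma)$ and $\mathbf{Sen}_0(\Sigma')$. Preservation of identities and composition is inherited directly from $\mathbf{Sen}$, and the coherence equation $\mathcal{U} \circ \mathbf{Sen}_0 = \mathbf{Sen} \circ J$ holds by construction, since forgetting the space structure on $\mathbf{Sen}_0(\Sigma)$ recovers $\mathbf{Sen}(J(\Sigma))$.

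The main difficulty is conceptual rather than technical: making sure that the informal appeals to ``parsing'' and ``structural recursion on syntax'' are legitimately encodable within Shoenfield's axiomatic framework. This reduces to the well-known facts that finite strings are prototypical finite objects, that context-free recognition is decidable, and that homomorphic extensions along a signature morphism are computable, so it is enough to cite these standard observations rather than elaborate them here.
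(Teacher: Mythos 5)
Your proposal is correct and follows essentially the same route as the paper's proof: show that each $\mathbf{Sen}_0(\Sigma)$ is a space of finite syntactic objects, that each $\mathbf{Sen}_0(\sigma)$ is the (computable) homomorphic extension of $\sigma$, and that preservation of identities and composition is inherited from $\mathbf{Sen}$. Your extra care about recursiveness in Shoenfield's sense and the remark on $\mathcal{U} \circ \mathbf{Sen}_0 = \mathbf{Sen} \circ J$ (which the paper treats as a separate lemma) only add detail, not a different argument.
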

\begin{proof}
To prove that $\mathbf{Sen}_0: \mathsf{Sign}_0 \to \mathsf{Space}$ is a functor, we need to prove that:
\begin{inparaenum}[1.]
\item given $\Sigma \in |\mathsf{Sign}_0|$, $\mathbf{Sen}_0 (\Sigma)$ is a space, 
\item given $\sigma: \Sigma \to \Sigma' \in ||\mathsf{Sign}_0||$, $\mathbf{Sen}_0 (\sigma)$ is a total function between $\mathbf{Sen}_0 (\Sigma)$ and $\mathbf{Sen}_0 (\Sigma')$, and
\item $\mathbf{Sen}_0$ preserves identities and composition.
\end{inparaenum}

The first condition is trivial $\Sigma \in |\mathsf{Sign}_0|$, $\mathbf{Sen}_0 (\Sigma) = \mathbf{Sen} (\Sigma)$ which is the infinite set of finite formulae recognised by the regular grammar presented in Ex.~\ref{ex:sat-subcalculus} for first-order modal logic. Thus, $\mathbf{Sen}_0 (\Sigma)$ is a space. The second condition also results trivial because $\mathbf{Sen}_0 (\sigma)$ is the homomorphic extension of $\sigma$ along the grammar mentioned before, so it is a function mapping formulae in space $\mathbf{Sen}_0 (\Sigma)$ to formulae in space $\mathbf{Sen}_0 (\Sigma')$. Finally, it is easy to observe that whenever $\mathbf{Sen}_0$ is applied to an identity morphism, the result is an identity function between formulae of the corresponding space. The preservation of composition also follows easily by checking that the composition of the homomorphic extensions of two morphisms results in the same function that the homomorphic extension of the composition of the morphisms.
\end{proof}

\begin{lemma}
\label{ex:eff-forgetful}
Let $\mathcal{U}: \mathsf{Space} \to \mathsf{Set}$ be the obvious forgetful functor projecting the underlying set of objects of the space and the total functions between them as morphisms; then $\mathcal{U} \circ \mathbf{Sen}_0 = \mathbf{Sen} \circ J$.
\end{lemma}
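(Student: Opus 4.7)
The plan is to verify the equality of functors $\mathcal{U} \circ \mathbf{Sen}_0$ and $\mathbf{Sen} \circ J$ componentwise: first on objects of $\mathsf{Sign}_0$ and then on morphisms. Since both sides of the purported equality are functors into $\mathsf{Set}$, establishing these two pointwise equalities suffices, and no further diagrammatic reasoning is needed.

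First I will unfold the object part. Let $\Sigma \in |\mathsf{Sign}_0|$. By the construction of $J$ as the literal inclusion of $\mathsf{Sign}_0$ into $\mathsf{Sign}$, we have $J(\Sigma) = \Sigma$, so $(\mathbf{Sen} \circ J)(\Sigma) = \mathbf{Sen}(\Sigma)$. On the other side, the definition of $\mathbf{Sen}_0$ (as the restriction of $\mathbf{Sen}$ to $\mathsf{Sign}_0$, followed by equipping its values with the space structure coming from the regular grammar for the first-order modal language, as argued in Lemma~\ref{ex:eff-sen0-space}) makes $\mathbf{Sen}_0(\Sigma)$ into the space whose underlying set is precisely $\mathbf{Sen}(\Sigma)$. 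Applying $\mathcal{U}$, which by definition simply forgets the space structure and returns the underlying set of objects, yields $\mathcal{U}(\mathbf{Sen}_0(\Sigma)) = \mathbf{Sen}(\Sigma)$, matching the other side.

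Second, I will check the morphism part. For any $\sigma: \Sigma \to \Sigma' \in ||\mathsf{Sign}_0||$, we have $J(\sigma) = \sigma$ by the inclusion, hence $(\mathbf{Sen} \circ J)(\sigma) = \mathbf{Sen}(\sigma)$. On the other side, $\mathbf{Sen}_0(\sigma)$ is, by the same restriction-then-equip construction, the recursive total function obtained from $\mathbf{Sen}(\sigma)$ by viewing it as a function between spaces rather than between sets; applying the forgetful functor $\mathcal{U}$ strips away the recursivity witness and returns precisely the underlying total function $\mathbf{Sen}(\sigma)$. Thus $\mathcal{U}(\mathbf{Sen}_0(\sigma)) = \mathbf{Sen}(\sigma)$, which again matches.

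The argument is essentially a bookkeeping exercise in definition-unfolding, and I do not anticipate a serious obstacle. The only subtle point worth mentioning explicitly is that the equality in the statement is strict equality of functors (not merely natural isomorphism); this is legitimate here because $J$ is an inclusion and $\mathcal{U}$ is the obvious forgetful functor, so no choices of representatives are involved and the two composites coincide on the nose. I would close the proof by concluding that, since $\mathcal{U} \circ \mathbf{Sen}_0$ and $\mathbf{Sen} \circ J$ agree on all objects and morphisms of $\mathsf{Sign}_0$, they are equal as functors $\mathsf{Sign}_0 \to \mathsf{Set}$.
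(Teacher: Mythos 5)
Your proposal is correct and follows essentially the same route as the paper's proof: unfolding the definitions of the inclusion $J$, the forgetful functor $\mathcal{U}$, and $\mathbf{Sen}_0$ as the restriction/subfunctor of $\mathbf{Sen}$, and checking the equality pointwise on objects and then on morphisms. Your version is in fact slightly more explicit than the paper's (which handles morphisms by a one-line ``analogous'' remark), but there is no substantive difference in the argument.
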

\begin{proof}
The proof follows by observing that:
\begin{inparaenum}[1)] 
\item $J: \mathsf{Sign}_0 \hookrightarrow \mathsf{Sign}$ is the identity inclusion functor,
\item the nature of $\mathcal{U}: \mathsf{Space} \to \mathsf{Set}$, and
\item $\mathbf{Sen}_0$ is the subfunctor of $\mathbf{Sen}$ restricted to the objects and morphisms of $\mathsf{Sign}_0$
\end{inparaenum}

Let $\Sigma \in |\mathsf{Sign}_0|$, then, as $J$ is the identity inclusion functor, $J (\Sigma) = \Sigma$. Then, as $\Sigma \in |\mathsf{Sign}_0|$, $\mathbf{Sen} (\Sigma) =  \mathcal{U} (\mathbf{Sen}_0 (\Sigma))$ because $\mathbf{Sen}_0$ is the subfunctor of $\mathbf{Sen}$. The case of morphisms is analogous but considering sets, instead of functions.
\end{proof}

\begin{lemma}
\label{ex:eff-ax-space}
$\mathsf{ax}: \mathsf{Sign} \to \mathsf{Space}$ is a functor
\end{lemma}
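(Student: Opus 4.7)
The plan is to decompose $\mathbf{ax}$ as the composition $\mathscr{P}_{\mathit{fin}} \circ \mathit{Qf}$, where $\mathit{Qf}: \mathsf{Sign}_0 \to \mathsf{Space}$ sends each signature to the subspace of quantifier-free sentences of $\mathbf{Sen}_0(\Sigma)$ and each signature morphism to the restriction of $\mathbf{Sen}_0(\sigma)$ to such sentences, and $\mathscr{P}_{\mathit{fin}}: \mathsf{Space} \to \mathsf{Space}$ is the finite-powerset functor mentioned in Def.~\ref{eff-sat-subcalculus}. Functoriality of $\mathbf{ax}$ then reduces to functoriality of the two factors and the fact that composition of functors is a functor.

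First I would verify that $\mathit{Qf}(\Sigma)$ is a space. It is an infinite subset of the space $\mathbf{Sen}_0(\Sigma)$, which was shown to be a space in Lemma~\ref{ex:eff-sen0-space}. Membership of a formula in $\mathit{Qf}(\Sigma)$ is decidable by a straightforward structural recursion that inspects the outermost connective of the finite syntactic object representing the formula and returns false whenever a quantifier is encountered; since formulae are finite objects built over a recursively enumerable alphabet, this procedure terminates. For $\sigma: \Sigma \to \Sigma' \in \|\mathsf{Sign}_0\|$, the restriction $\mathit{Qf}(\sigma)$ is total because a signature translation maps quantifier-free formulae to quantifier-free formulae (it does not introduce quantifiers), and it is recursive because $\mathbf{Sen}_0(\sigma)$ is recursive by Lemma~\ref{ex:eff-sen0-space}. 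Preservation of identities and composition is inherited directly from $\mathbf{Sen}_0$ upon restriction.

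Next I would check that $\mathscr{P}_{\mathit{fin}}$ is itself a functor $\mathsf{Space} \to \mathsf{Space}$: given a space $X$, the collection $\mathscr{P}_{\mathit{fin}}(X)$ of finite subsets of $X$ is a space because each such subset can be coded by a finite list of codes of its elements, and equality of lists-as-sets is decidable whenever elementhood in $X$ is decidable. For a recursive $f: X \to Y$, the direct-image map $\mathscr{P}_{\mathit{fin}}(f)$ is computed by applying $f$ elementwise to the finite input and returning the resulting finite set; this is a total recursive function. Identities and composition are preserved at the level of direct images.

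Combining both steps, $\mathbf{ax} = \mathscr{P}_{\mathit{fin}} \circ \mathit{Qf}: \mathsf{Sign}_0 \to \mathsf{Space}$ is a functor. The main obstacle, though conceptually light, is being explicit about the axiomatic view of computability from \cite{schoenfield71}: one must commit to a concrete finite-object presentation of formulae and of signature translations in order to exhibit the procedures above as recursive functions. Once that commitment is made, every remaining verification is a routine structural induction.
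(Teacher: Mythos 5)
Your proof is correct and takes essentially the same route as the paper, whose proof merely declares the argument analogous to that of Lemma~\ref{ex:eff-sen0-space} with $\mathbf{Sen}_0$ extended from single formulae to finite sets of formulae. Your explicit factorization $\mathbf{ax} = \mathscr{P}_{\mathit{fin}} \circ \mathit{Qf}$, together with the checks that $\mathit{Qf}$ lands in $\mathsf{Space}$ (decidability of quantifier-freeness, preservation under signature translation) and that $\mathscr{P}_{\mathit{fin}}$ is a functor on $\mathsf{Space}$, simply spells out the details the paper leaves implicit.
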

\begin{proof}
The proof is analogous to the one of Lemma~\ref{ex:eff-sen0-space} but considering the extension of the functor $\mathbf{Sen}_0$, which operates on formulae, to finite sets of formulae.
\end{proof}

\begin{lemma}
\label{ex:eff-mods-space}
$\mathbf{Mods}: \mathsf{Struct}_{SC} \to \mathsf{Space}$ is a functor.
\end{lemma}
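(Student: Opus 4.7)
The plan is to verify the three standard conditions for functoriality, namely that $\mathbf{Mods}$ sends each object of $\mathsf{Struct}_{SC}$ to an object of $\mathsf{Space}$, sends each morphism to a recursive function, and respects identities and composition. For the object part, fix $\<\Sigma,\Gamma\>\in|\mathsf{Th}_{ax}|$. Since $\Gamma\in\mathbf{ax}(\Sigma)$ is finite and contains only quantifier-free ground sentences, the set $\mathsf{prop}(\Sigma)$ of propositional variables labelling the atomic subformulae occurring in $\mathit{Tr}(\Gamma)$ is finite, hence $\mathit{assigns}_{\mathsf{prop}(\Sigma)}$ is finite. Restrictions $\mathit{val}_{\mathsf{prop}(\Sigma)}\big|_{\mathit{Tr}(\Gamma)}$ are therefore finite objects, and the carrier of $\mathbf{Mods}(\<\mathit{Str}^{\mathsf{prop}(\Sigma),\mathit{Tr}(\Gamma)},\cup,\emptyset\>)$ is a finite set of finite objects. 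Membership is decidable: given a candidate restriction one evaluates every $\alpha\in\mathit{Tr}(\Gamma)$ under it (a terminating recursion on the propositional structure) and checks whether the outcome is $\top$; the only further morphisms are identities, which are trivially recognisable. This yields a space.

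For the morphism part, take $\sigma:\<\Sigma,\Gamma\>\to\<\Sigma',\Gamma'\>\in||\mathsf{Th}_{ax}||$ and consider $\mathbf{Mods}(\widehat{\sigma^{\mathsf{op}}})$. On a restricted assignment $\mathit{val}_{\mathsf{prop}(\Sigma')}\big|_{\mathit{Tr}(\Gamma')}$, the value on each $v_p$ with $p$ a subformula of $\mathit{Tr}(\Gamma)$ is computed by first applying $\sigma$ to the underlying atomic formula, locating the corresponding variable in $\mathsf{prop}(\Sigma')$, and reading off its truth value; this is a terminating algorithm, so the map is recursive. On identity arrows the assignment is preserved unchanged. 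Totality follows because the homomorphic extension $\widehat{\sigma}$ always lands in a well-formed signature on the target side, and the defining equation $\mathit{val}_{\mathsf{prop}(\Sigma)}(v_p)=\mathit{val}_{\mathsf{prop}(\Sigma')}(\sigma(v_p))$ is well-defined for every relevant variable.

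Finally, preservation of identities and composition reduces, by the definition of $\widehat{\cdot}$ as a homomorphic extension and the functoriality of $\mathbf{M}$ established in the treatment analogous to Lemma~\ref{lemma:m-functor}, to the corresponding identities in $\mathsf{Sign}_0$. Specifically, $\mathbf{Mods}(\widehat{\mathit{id}_{\<\Sigma,\Gamma\>}^{\mathsf{op}}})$ acts by post-composing with the identity relabelling, hence coincides with $\mathit{id}_{\mathbf{Mods}(\mathbf{M}(\<\Sigma,\Gamma\>))}$; and given $\sigma_1:\<\Sigma_1,\Gamma_1\>\to\<\Sigma_2,\Gamma_2\>$ and $\sigma_2:\<\Sigma_2,\Gamma_2\>\to\<\Sigma_3,\Gamma_3\>$, pulling back along $\sigma_2\circ\sigma_1$ coincides with pulling back first along $\sigma_2$ and then along $\sigma_1$, because $\widehat{\sigma_2\circ\sigma_1}=\widehat{\sigma_2}\circ\widehat{\sigma_1}$ pointwise on variables.

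The main obstacle I anticipate is not the categorical bookkeeping but the effectiveness side: one must be careful that the decision procedure used to test whether a candidate restricted assignment satisfies $\mathit{Tr}(\Gamma)$ really is recursive in the sense of Shoenfield, which in turn relies on $\Gamma$ being drawn from $\mathbf{ax}(\Sigma)$ (and therefore finite) and on $\mathit{Tr}$ being a recursive transformation on the space $\mathbf{Sen}_0(\Sigma)$. Once these ingredients are in place the argument is essentially a propagation of finiteness and decidability from $\mathbf{ax}$ through $\mathit{Tr}$ and the restriction operation.
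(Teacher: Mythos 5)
Your proposal is correct and takes essentially the same route as the paper: the paper's own proof merely declares the result ``analogous'' to Lemmas~\ref{ex:eff-sen0-space} and~\ref{ex:eff-ax-space}, i.e., one checks that each object of $\mathsf{Struct}_{SC}$ is sent to a space of finite objects with decidable membership, that each morphism is sent to a total recursive function, and that identities and composition are preserved --- exactly the three checks you carry out explicitly on the truth-table structures of Ex.~\ref{ex:eff-sat-subcalculus}. The only cosmetic divergence is that the appendix phrases the analogy in terms of finite tree-like structures (a leftover of the tableau/modal variant), whereas you argue directly on the restricted assignments $\mathit{val}_{\mathsf{prop}(\Sigma)}\big|_{\mathit{Tr}(\Gamma)}$, which matches the body's definition of $\mathbf{Mods}$; your parenthetical claim that $\mathsf{prop}(\Sigma)$ itself is finite is stricter than the paper's definition, but it is harmless since only the finitely many variables occurring in $\mathit{Tr}(\Gamma)$ matter for the restrictions.
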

\begin{proof}
The proof is analogous to the one of Lemma~\ref{ex:eff-ax-space} but considering the extension of the functor $\mathbf{ax}$, which operates on finite sets formulae, to finite tree-like structures whose nodes are finite sets of formulae.
\end{proof}

\end{document}